\newcommand\numberthis{\addtocounter{equation}{1}\tag{\theequation}}
\xapptocmd\normalsize{%
 \abovedisplayskip=4pt plus 2pt minus 2pt
 % \abovedisplayshortskip=5pt
 \belowdisplayskip=4pt plus 2pt minus 2pt
 % \belowdisplayshortskip=5pt
}{}{}
\definecolor{NatureLightGreen}{RGB}{212, 224, 192}
\definecolor{NatureLightRed}{RGB}{240, 201, 196}
\definecolor{NatureMidRed}{RGB}{204, 97, 92}
\crefname{thm}{Theorem}{Theorems}
\crefname{cor}{Corollary}{Corollary}
\crefname{lem}{Lemma}{Lemmas}
\crefname{asu}{Assumption}{Assumptions}
\crefname{rmk}{Remark}{Remarks}
\crefname{defn}{Definition}{Definitions}
\crefname{thmlisti}{Theorem}{Theorems}
\crefname{asulisti}{Assumption}{Assumptions}
\newlist{thmlist}{enumerate}{1}
\setlist[thmlist]{label=(\roman*), ref=\thethm\,(\roman*)}
\newlist{asulist}{enumerate}{1}
\setlist[asulist]{label=(\roman*), ref=\theasu\,(\roman*)}
\declaretheorem[
    name=Theorem,
    Refname={Theorem,Theorems}]{thm}
\declaretheorem[
    name=Lemma,
    Refname={Lemma,Lemmas},
    sibling=thm]{lem}
\declaretheorem[
    name=Assumption,
    numberwithin=subsection,
    Refname={Assumption,Assumptions}]{asu}
\declaretheorem[
    name=Remark,
    Refname={Remark,Remarks}]{rmk}
\declaretheorem[
    name=Definition,
    Refname={Definition,Definitions}]{defn}
\declaretheorem[
    name=Example,
    Refname={Example,Examples}]{exam}
\newcommand{\prob}{\text{Pr}}
\newcommand{\logit}{\text{logit}}
\newcommand{\expit}{\text{expit}}
\newcommand{\ba}{\bar{a}}
\newcommand{\bA}{\bar{A}}
\newcommand{\bX}{\bar{X}}
\newcommand{\USR}{U^\mathrm{SR}}
\newcommand{\UGR}{U^\mathrm{GR}}
\newcommand{\PP}{\mathbb{P}}
\newcommand{\EE}{\mathbb{E}}
\newcommand{\RR}{\mathbb{R}}
\newcommand{\pto}{\stackrel{p}{\to}}
\newcommand{\dto}{\stackrel{d}{\to}}
\newcommand{\cee}{\mathrm{CEE}}
\newcommand{\bp}{{\boldsymbol{p}}}
\newcommand{\bpi}{{\boldsymbol{\pi}}}
\newcommand\Tau{\scalerel*{\tau}{T}}
\newcommand{\sumt}{\sum_{t=1}^T}
\newcommand{\lessrate}{\stackrel{<}{\sim}}
\def\spacingset#1{\renewcommand{\baselinestretch}%
{#1}\small\normalsize} \spacingset{1}
\begin{document}

\title{\bf Robust Estimation of Moderated Causal Excursion Odds Ratio in Micro-Randomized Trials \vspace*{0.5cm}}
   \date{\vspace{-2ex}} 
  \author{Jiaxin Yu \\ 
   Department of Statistics, University of California, Irvine, \\ Bren Hall 2019
Irvine, CA 92697-1250, \\ jiaxiny4@uci.edu    \vspace*{0.3cm}\\ 
   Tianchen Qian \\
   Department of Statistics, University of California, Irvine, \\ Bren Hall 2019
Irvine, CA 92697-1250, \\ t.qian@uci.edu}

\maketitle
\begin{abstract}
Micro-randomized trials (MRTs) have become increasingly popular for developing and evaluating mobile health interventions that promote healthy behaviors and manage chronic conditions. The recently proposed causal excursion effects have become the standard measure for interventions' marginal and moderated effect in MRTs. Existing methods for MRTs with binary outcomes focus on causal excursion effects on the relative risk scale. However, a causal excursion effect on the odds ratio scale is attractive for its interpretability and valid predicted probabilities, making it a valuable supplement to causal excursion relative risk. In this paper, we propose two novel estimators for the moderated causal excursion odds ratio for MRTs with longitudinal binary outcomes. When the prespecified moderator fully captures the way interventions are sequentially randomized, we propose a doubly robust estimator that remains consistent if either of two nuisance models is correctly specified. For more general settings in which treatment randomization depends on variables beyond the chosen moderator, we propose a general estimator that incorporates an association nuisance model. We further establish the general estimator's robustness to the misspecification of the association nuisance model under no causal effect, and extend the general estimator to accommodate any link functions. We establish the consistency and asymptotic normality of both estimators and demonstrate their performance through simulation studies. We apply the methods to Drink Less, a 30-day MRT for developing mobile health interventions to help reduce alcohol consumption, where the proximal outcome is whether the user opens the app in the hour following the notification.
\end{abstract}

\noindent%
{\it Keywords:} causal excursion effect, causal inference, logistic regression, longitudinal data, odds ratio, micro-randomized trials, semiparametric model
\vfill

\newpage
\spacingset{1.9} % DON'T change the spacing!

\tableofcontents
\clearpage

\section{Introduction}
\label{sec:introduction}

Mobile health (mHealth) interventions, such as push notifications delivered through smartphones, are used in a variety of domains to encourage healthy behavior change \citep{klasnja2019efficacy,necamp2020assessing, bell2023notifications}. Micro-randomized trials (MRTs) are commonly employed to develop and evaluate these interventions. In an MRT, individuals are repeatedly randomized among intervention options at hundreds or thousands of decision points \citep{klasnja2015, liao2016sample}. Causal excursion effects (CEE) are standard estimands in MRT primary and secondary analyses \citep{boruvka2018, qian2021estimating}. These effects contrast the longitudinal proximal outcome under two excursion policies that deviate from the sequential randomization policy used in the MRT, and are used to assess mHealth intervention effects and time-varying effect moderation.

Many MRTs involve binary proximal outcomes measuring user engagement, adherence, or response to prompts \citep{nahum2021translating,bell2023notifications,hurley2025fluctuations}. For such outcomes, existing methods typically model the CEE on the causal relative risk scale \citep{qian2021estimating,shi2023meta,liu2024incorporating}. Although relative risk is advantageous for its collapsibility, it does not guarantee that the estimated probabilities stay within $[0,1]$, potentially leading to numerical instability \citep{dukes2018note}. In contrast, causal odds ratio avoids this issue and offers additional advantages, such as symmetry in outcome definition \citep{sheps1958shall}. Moreover, it is well known that the presence or direction of effect moderation can depend on the chosen scale \citep{brumback2008effect}. These points make the causal odds ratio an attractive alternative and a valuable complement to the relative risk for MRTs with binary proximal outcomes. However, causal models with a logit link are technically more challenging, as no conventional doubly robust estimators exist that allow for misspecification of either the outcome or propensity score model \citep{tchetgen2010doubly,vansteelandt2014structural}. To address this critical gap, we develop methodology that enables robust estimation for causal excursion odds ratios.

% Original text by Jiaxin (more details on why logit is challenging): are known to be more challenging due to non-collapsibility. This property complicates both the interpretation of model parameter and the application of traditional causal inference techniques like g-estimation.

In each MRT analysis, researchers first specify the effect moderators of interest, $S_t$, which determines whether the estimated CEE is marginal or conditional. For example, setting $S_t = \emptyset$ yields a fully marginal CEE, while $S_t = \text{location}_t$ or $S_t = \text{day}_t$ captures effect modification by location or study day. We develop estimators for causal odds ratio under two settings, distinguished by the complexity of the MRT randomization policy relative to the analysis-specific $S_t$. The first setting, ``Simple Randomization'', refers to cases where the randomization probability depends on at most $S_t$. For example, MRTs with a constant randomization probability always satisfy Simple Randomization for any $S_t$, including $S_t = \emptyset$. The second setting, ``General Randomization'', allows the randomization probability to possibly depend on history beyond $S_t$. This setting encompasses most MRTs, except certain recent designs where an individual's randomization probability depends on others' outcomes \citep{trella2025deployed}.

For the Simple Randomization setting, we propose a doubly robust estimator for the moderated CEE on the odds ratio scale, motivated by the logistic partially linear model literature \citep{tchetgen2013closed, tan2019doubly}. This estimator requires only one of two nuisance models to be correctly specified: an outcome regression model or a variant of the propensity score that additionally conditions on the outcome being zero. For the General Randomization setting, we propose an estimator that requires correct specification of an association model, motivated by \citet{vansteelandt2003causal}. We establish a robustness property that the proposed estimator remains valid even if the association model is misspecified, under the null hypothesis of no causal effects. To our knowledge, these are the first estimators targeting moderated CEE on the odds ratio scale. To improve efficiency without compromising robustness, both estimators allow the incorporation of auxiliary covariates via semiparametric efficiency theory and the use of flexible machine learning algorithms to fit nuisance parameters. We establish their asymptotic properties and evaluate finite-sample performance through simulations. Finally, we apply the methods to assess the effect of daily reminders on near-term app engagement in the Drink Less MRT \citep{bell2023notifications}.

In Section \ref{sec:notation}, we define the moderated causal excursion effect on the odds ratio scale. We present the proposed methods in Sections \ref{sec:method-simple-randomization} and \ref{sec: method-general}. Simulation studies are presented in Section \ref{sec: simulation}. The method is illustrated using the Drink Less data in Section \ref{sec: application}. We conclude with a discussion in Section \ref{sec: discussion}.

\section{Moderated Causal Excursion Odds Ratio}
\label{sec:notation}

\subsection{MRT Data Structure}
\label{subsec:mrt-data-structure}

Consider an MRT with $n$ individuals, each enrolled for $T$ decision points at which treatments will be randomly assigned. Variables without subscript $i$ represents observations from a generic individual. Let $A_t$ represent treatment assignment at decision point $t$ with $A_t = 1$ indicating treatment and $A_t = 0$ indicating no treatment. Let $X_t$ represent observations between decision points $t-1$ and $t$. The overbar denotes a sequence of variables from $t=1$, e.g., $\bA_t=(A_1,A_2,\ldots,A_t)$. Information gathered from an individual up to decision point $t$ is $H_t = (X_1,A_1,\ldots,X_{t-1},A_{t-1}, X_t) = (\bX_{t}, \bA_{t-1})$. At each $t$, $A_t$ is randomized with probability $p_t(H_t) := P(A_t = 1 \mid H_t)$, and we sometimes omit $H_t$ to write $p_t$. We assume that observations across individuals are independent and identically distributed.

Let $Y_{t, \Delta} \in \{0,1\}$ denote the binary proximal outcome following treatment assignment at decision point $t$. This outcome depends on data collected over a time window of $\Delta$ decision points after $t$, where $\Delta \geq 1$ is a fixed integer. For example, if decision points occur daily at 8pm, a proximal outcome defined over the next hour or over the next 23 hours (e.g., opening the app within that time window) corresponds to $\Delta = 1$; a proximal outcome defined over the next 47 hours corresponds to $\Delta = 2$. This flexibility in choosing $\Delta$ allows researchers to examine immediate or delayed effects. The choice of $\Delta$ also influences the weights used in estimation. The data structure is illustrated in Figure \ref{fig: data structure}. 
\begin{figure}[ht]
    \centering
    \includegraphics[width=0.9\linewidth]{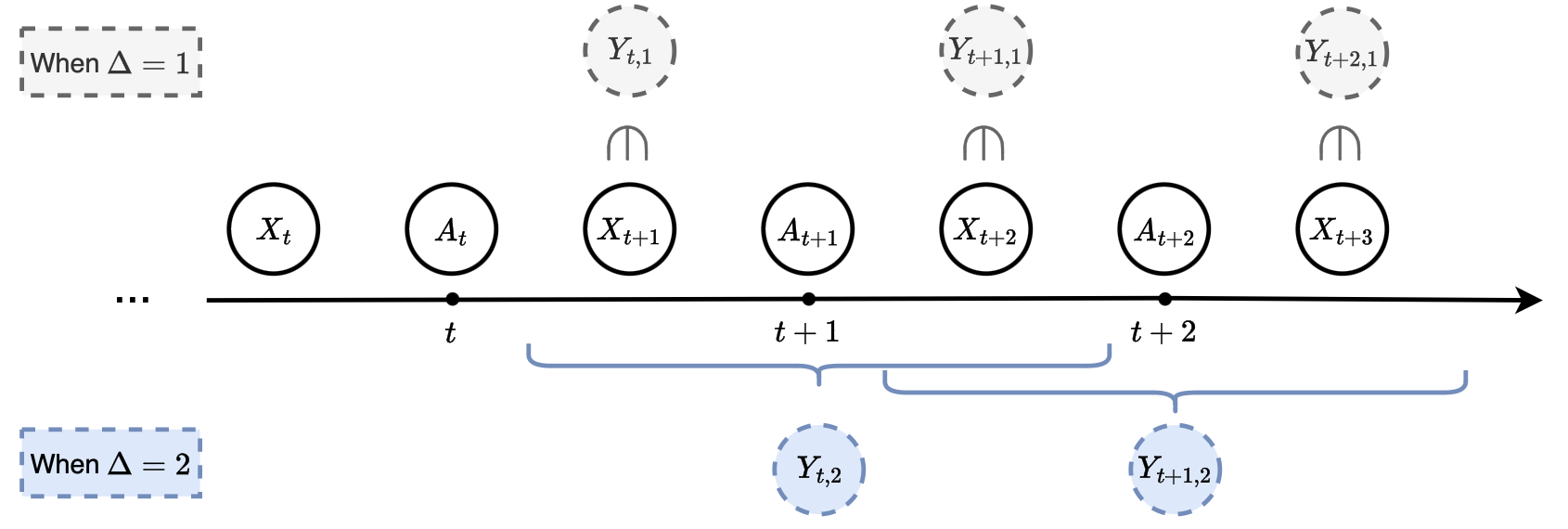}
    \caption{Data structure with examples of $\Delta = 1$ and $\Delta = 2$.} 
    \label{fig: data structure}
\end{figure}

In intervention design, researchers may consider it unsafe or unethical to assign treatments at certain decision points (e.g., when a participant is driving). At these times, participants are deemed ineligible for randomization, which is also referred to as ``unavailable'' in the literature \citep{boruvka2018}. Formally, let $I_t \in X_t$ denote the eligibility indicator at $t$, with $I_t = 1$ denoting being eligible for randomization at $t$ and $I_t = 0$ otherwise. If $I_t = 0$, then $A_t = 0$ deterministically. For simplicity, we sometimes write $I_t$ explicitly alongside $H_t$ in conditional expectations, even though $I_t$ is included in $H_t$: for example, $\EE(\cdot | H_t, I_t = 1)$ should be understood as $\EE(\cdot | H_t \setminus \{I_t\}, I_t = 1)$.

We use double subscripts to denote a sub-sequence of variables; for example, $\bA_{t+1:t+\Delta-1} := (A_{t+1},A_{t+2},\ldots,A_{t+\Delta-1})$. Let $\logit(x) := \log\{x/(1-x)\}$ and $\expit(x) := \{1 + \exp(-x)\}^{-1}$. $\PP_n$ to denote the empirical average over all individuals. For a positive integer $k$, define $[k] := \{1, 2, \ldots, k\}$. We use superscript $\star$ to denote quantities corresponding to the true data generating distribution, $P_0$. We use $\| \cdot \|$ to denote the $L_2$ norm, i.e., $\| f(O) \| = \{\int f(o)^2 P(\mathrm{d}o)\}^{1/2}$ for any function $f$ of the observed data $O$. For a vector $\alpha$ and a vector-valued function $f(\alpha)$, $\partial_\alpha f(\alpha) := \partial f(\alpha) / \partial \alpha^T$ denotes the matrix where the $(i,j)$-th entry is the partial derivative of the $i$-th entry of $f$ with respect to the $j$-th entry of $\alpha$.

% An intervention typically targets a near-term proximal outcome denoted by $Y_{t,\Delta}$, where $\Delta$ is a pre-specified window length in the unit of decision points. Formally, let $Y_{t,\Delta}$ be a function of $H_{t+\Delta}$. For example, $\Delta = 1$ means the proximal outcome is measured immediately after the intervention. In general, $\Delta-1$ additional interventions may occur before the proximal outcome $Y_{t,\Delta}$ is observed. This outcome formulation also accommodates the analysis of delayed effects, where one can define $Y_{t,\Delta}$ to be the immediate outcome measured after $A_{t+\Delta-1}$, and the effect of $A_t$ on $Y_{t,\Delta}$ becomes the lag $\Delta-1$ delayed effect. For presentation clarity, we will first focus on the immediate outcome $Y_{t,\Delta = 1}$, and extend to general $Y_{t,\Delta}$ in \cref{sec:extension}. \tq{I may defer the definition of a delayed outcome to a later section, and only define the immediate outcome here.}

\subsection{Causal Excursion Effect on Odds Ratio Scale}
\label{subsec:def-cee}

To define causal effects, we adopt the potential outcome framework \citep{rubin1974estimating, robins1986new}. For an individual, let $X_t(\ba_{t-1})$ be the contextual information that would have been observed, and $A_t(\ba_{t-1})$ be the treatment that would have been observed at decision point $t$ if the individual had been assigned a treatment sequence of $\ba_{t-1}$. The potential outcome of $H_t$ under $\ba_{t-1}$ is $H_t(\ba_{t-1}) = \{X_1, A_1, X_2(a_1), A_2(a_1), X_3(a_2), \ldots, A_{t-1}(\bar{a}_{t-2}), X_t(\ba_{t-1})\}$. The potential outcome for the proximal outcome is $Y_{t,\Delta}(\bar{a}_{t + \Delta - 1})$.

We focus on estimating the causal excursion effect (CEE), defined as a contrast between potential outcomes under two excursions policies, i.e., treatment policies that deviate from the original MRT policy \citep{boruvka2018,guo2021discussion}. Here, treatment policies refer to dynamic treatment regimes that assign treatments sequentially at each decision point based on an individual's history \citep{murphy2003optimal}. Let $\bp$ denote the MRT policy, where for each $t \in [T]$, the randomization probability of $A_t$ is $p_t(H_t)$, and let $\bpi$ denote a reference policy, where for each $t \in [T]$ the randomization probability $\pi_t (H_t)$, subject to the eligibility constraint $\pi_t(H_t) = 0$ if $I_t = 0$. Let $S_t(\bA_{t-1})$ denote a summary of $H_t(\bA_{t-1})$, representing the moderators of interest.  We define the CEE of $A_t$ on $Y_{t, \Delta}$, on the log odds ratio scale, as
\begin{align*}
 &\quad \cee_{\bp, \bpi; \Delta}\{t; S_t(\bA_{t-1})\} \\
   &= \logit \big[\EE_{\bA_{t-1} \sim \bp, \bA_{t+1:t+\Delta-1} \sim \bpi}\{Y_{t, \Delta}(\bA_{t-1}, 1, \bA_{t+1:t+\Delta-1}) \mid S_t(\bA_{t-1}), I_t(\bA_{t-1}) = 1 \} \big] \\
       & ~~~ - \logit \big[\EE_{\bA_{t-1} \sim \bp, \bA_{t+1:t+\Delta-1} \sim \bpi}\{Y_{t, \Delta}(\bA_{t-1}, 0, \bA_{t+1:t+\Delta-1}) \mid S_t(\bA_{t-1}), I_t(\bA_{t-1}) = 1 \} \big].
       \numberthis
    \label{eq: causal excursion effect}
\end{align*}
The effect is conditional on $I_t(\bA_{t-1}) = 1$, which means that we focus only on decision points when a participant is eligible for randomization. This restriction is scientifically meaningful because researchers are interested in effects only when interventions can actually be delivered.

The right-hand side of \eqref{eq: causal excursion effect} contrasts potential outcomes under two excursion policies: $(\bA_{t-1}, 1, \bA_{t+1:t+\Delta-1})$ and $(\bA_{t-1}, 0, \bA_{t+1:t+\Delta-1})$, where $\bA_{t-1}$ (treatments before $A_t$) follow the MRT policy $\bp$, and $\bA_{t+1:t+\Delta-1} = (A_{t+1},...,A_{t+\Delta-1})$ (treatments after $A_t$ but before measuring $Y_{t,\Delta}$) follow the policy $\bpi$. The choice of $\bpi$ influences the interpretation of the CEE. For example, setting $\pi_t = p_t$ for all $t$ resembles the lagged effects considered in \citet{boruvka2018}, now defined on log odds ratio scale. Setting $\pi_t = 0$ for all $t$ (implying $\bA_{t+1:t+\Delta-1} = 0$) resembles the effects considered in \citet{dempsey2020stratified} and \citet{qian2021estimating}. Additional discussion about the general $\bpi$ can be found in \citet{shi2023assessing}. When $\Delta = 1$ (i.e., when the proximal outcome $Y_{t,1}$ is immediately measured before $t+1$), $\bpi$ becomes irrelevant in \eqref{eq: causal excursion effect}, and the CEE simplifies to a contrast between $Y_{t, 1}(\bA_{t-1}, 1)$ and $Y_{t, 1}(\bA_{t-1}, 0)$, which is the most common scenario in applications \citep[e.g.,][]{klasnja2019efficacy,bell2023notifications}.

\begin{rmk}[Effect measures for binary outcome]
    \normalfont
    The causal effect measure for binary outcomes has been widely discussed in the literature \citep{sheps1958shall, brumback2008effect, richardson2017}. Among the commonly used measures, the relative risk and risk difference are notable for their collapsibility, whereas the odds ratio is notable for ensuring probabilities remain within permissible bounds. Similar considerations in defining and interpreting causal effect measures arise in MRTs with binary outcomes \citep{qian2021estimating}. Here we focus on the causal excursion odds ratio, not to argue its superiority, but to complement existing MRT work on the causal excursion risk difference and relative risk by offering an alternative that has so far been infeasible to estimate. This is useful because these measures can lead to categorically different conclusions about effect modification; see Section A of the Supplementary Material for concrete examples in realistic settings. We therefore recommend reporting all three measures to provide a more complete picture.
    % Text Jiaxin originally had: While the log link offers direct interpretability in terms of risk, the logit link remains widely used for binary outcomes in applications because of its mapping to the $(0,1)$ range and connection to logistic regression. However, it introduces challenges such as noncollapsibility, which complicate estimation, even under randomization. As discussed in \citet{robins1999testing, robins2004estimation}, causal models with logistic link cannot be directly estimated using g-computation and additional model is required. 
\end{rmk}

\subsection{Causal Assumptions and Identification}

To identify the CEE, we make the following causal assumptions.
\begin{asu}
    \label{assu: consistency, Positivity, Sequential ignorability.}
    \begin{asulist}
    \item \label{assu: consistency} (Consistency).
    The observed data is the same as the potential outcome under the observed treatment assignment, and one's potential outcomes are not affected by others' treatment assignments. Specifically, $Y_{t, \Delta} = Y_{t, \Delta}(\bA_{t + \Delta - 1})$ and $X_t = X_t(\bA_{t-1})$ for $t \in [T]$. 
    \item \label{assu: positivity} (Positivity).
    There exists $c > 0$ such that $c < \prob(A_t = 1|H_t, I_t = 1) < 1 - c$ almost surely for all $t\in[T]$. 
    \item \label{assu: Sequential ignorability} (Sequential ignorability).
    The potential outcomes $\{X_{t+1}(\ba_{t}),  A_{t+1}(\ba_{t}), \ldots,  X_{T+1}(\ba_{T})\}$ are independent of $A_t$ conditional on $H_t$ for $t \in [T]$. 
    \end{asulist}
\end{asu}  

Positivity and sequential ignorability are guaranteed by the MRT design since the sequential randomization probabilities of treatments are known. Consistency is violated if interference is present, i.e., if the treatment assigned to one participant affects the potential outcome of another participant. In the Section B of Supplementary Material, we show that under Assumption \ref{assu: consistency, Positivity, Sequential ignorability.}, the causal excursion effect in \eqref{eq: causal excursion effect} can be
expressed in terms of observed data:
\begin{align*}
    \cee_{\bp, \bpi; \Delta}(t; S_t) &= \logit \big[\EE\{\EE( W_{t, \Delta} Y_{t, \Delta} \mid A_t = 1, H_t) \mid S_t, I_t = 1 \} \big] \\
    & ~~~ - \logit \big[\EE\{\EE( W_{t, \Delta} Y_{t, \Delta} \mid A_t = 0, H_t) \mid S_t, I_t = 1 \} \big],
    \numberthis
    \label{eq: observed causal excursion effect}
\end{align*}
where $W_{t, \Delta} := \prod_{u = t+1}^{t + \Delta - 1} \big\{\frac{\pi_u(H_u)}{p_u(H_u)}\big\}^{A_u} \big\{\frac{1 - \pi_u(H_u)}{1 - p_u(H_u)}\big\}^{1 - A_u}$  can be interpreted as a change of probability from $p_u$ to $\pi_u$ for future assignment $\bA_{t+1:t+\Delta-1}$, and we set $W_{t, \Delta} := 1$ if $\Delta = 1$.

\subsection{Estimand and Two Analysis Settings}
\label{subsec: cee model}

Let $f_t(S_t)$ be a $p$-dimensional vector of features. We consider estimating the best linear projection of $\cee_{\bp, \bpi; \Delta}(t; S_t)$ on $f_t(S_t)$, averaged over all decision points. Specifically, the true parameter $\beta^\star \in \RR^p$ is defined as
\begin{align}
  \beta^\star = \text{arg} \min_{\beta \in \RR^p} \sum_{t = 1}^T \omega(t) \Big[ \EE \{\cee_{\bp, \bpi; \Delta}(t; S_t) - f_t(S_t)^T\beta\}^2 \Big], \label{eq:cee-model}
\end{align}
where $\omega(t) $ is a pre-specified weight function with $\sumt \omega(t) =1$. This resembles the specification $\cee_{\bp,\bpi;\Delta}=f_t(S_t)^T\beta^\star$, but the estimand remains interpretable even when the true CEE is not linear in $f_t(S_t)$.

The choices of $f_t(S_t)$ and $\omega(t)$ depend on the specific scientific questions. For example, by setting $S_t = \emptyset$ and $f_t(S_t) = 1$, the scalar $\beta^\star$ models the population- and time-averaged causal log odds ratio across all subjects. Alternatively, one can set $f_t(S_t) = (1,S_t)^T$ to assess the population- and time-averaged causal log odds ratio within the strata defined by $S_t$. $f_t(S_t)$ can also include basis functions of $t$ or $S_t$ to capture potential nonlinear effects. Setting the weight $\omega(t) = 1/T$ for all $t$ allows all decision points to equally influence the estimation of $\beta$. Setting $\omega(t_0) = 1$ and $\omega(t) = 0$ for $t \neq t_0$ estimates the effects at a specific $t_0$.

We define the two analysis settings---Simple Randomization and General Randomization---under which we develop separate estimators. The distinction between Simple and General refers to the complexity of the MRT randomization policy relative to the analysis-specific summary variable $S_t$.

\begin{defn}[Simple Randomization (SR)]
    \label{defn: simple randomization}
    The analysis setting is Simple Randomization if the researcher's chosen moderator $S_t$ satisfies $A_t \perp H_t \mid S_t$, i.e.,
    \begin{align*}
        P(A_t = 1|H_t) = P(A_t = 1|S_t). \numberthis \label{eq: simple randomization}
    \end{align*}
    In other words, the $S_t$ fully determines the randomization probability of $A_t$. In this setting, with a slight abuse of notation we write $p_t(H_t) = p_t(S_t)$.
\end{defn}

\begin{defn}[General Randomization (GR)]
    \label{defn: general randomization}
    The analysis setting is General Randomization if the randomization probability possibly depend on the individual's history information beyond $S_t$, i.e. $A_t$ not necessarily independent of $H_t$ given $S_t$. Simple Randomization is a special case of General Randomization.
\end{defn}

Simple vs. General Randomization refers to the relationship between the complexity of the MRT randomization policy and the chosen moderator $S_t$, rather than to the complexity of the randomization policy alone. For example, if the randomization probability for $A_t$ depends only on a covariate $X_{1t}$, then choosing $S_t = X_{1t}$ or $S_t = (X_{1t}, X_{2t})$ falls under Simple Randomization, whereas choosing $S_t = \emptyset$ falls under General Randomization. Most MRTs employ a constant randomization probability \citep[e.g.,][]{klasnja2019efficacy, bell2023notifications}, in which case any choice of $S_t$, including $S_t = \emptyset$, falls under Simple Randomization.

The General Randomization setting implicitly assumes that each individual's randomization probability depends only on their own history and not on others'. MRTs that employ real-time pooling of information across individuals---thereby violating the independence assumption---do not satisfy General Randomization \citep{trella2025deployed}; such settings are not considered in this work.

% In some cases, such as when the randomization probability depends on past outcomes, past treatments, or covariates not included in $S_t$, Simple Randomization setting may not apply. For example, when the researcher is interested in a marginal effect (i.e. with $S_t = \emptyset$) but the randomization depends on a covariate $X_t$, then the Simple Randomization setting is violated. To accommodate such cases, we consider General Randomization setting which allows the randomization probability to possibly depend on individual's history beyond $S_t$. However, designs in which a participant’s randomization probability depends on another participant’s data (for example, network interference) do not satisfy the General Randomization setting and are outside the scope of this paper.

\section{A Doubly Robust Estimator Under Simple Randomization}
\label{sec:method-simple-randomization}
Under Simple Randomization, the causal excursion effect defined in \eqref{eq: observed causal excursion effect} simplifies to
\begin{align*}
    \cee_{\bp, \bpi; \Delta}(t; S_t) &= \logit \{\EE( W_{t, \Delta} Y_{t, \Delta} \mid S_t, A_t = 1, I_t = 1) \} \\
    & ~~~ - \logit \{\EE( W_{t, \Delta} Y_{t, \Delta} \mid S_t, A_t = 0, I_t = 1) \},
    \numberthis
     \label{eq:cee-identifiability-simple-randomization}
\end{align*}
where $W_{t, \Delta}$ is now $\prod_{u = t+1}^{t + \Delta - 1} \big\{\frac{\pi_u(S_u)}{p_u(S_u)}\big\}^{A_u} \big\{\frac{1 - \pi_u(S_u)}{1 - p_u(S_u)}\big\}^{1 - A_u}$. The proof for \eqref{eq:cee-identifiability-simple-randomization} is in Section B of the Supplementary Material. Because of \eqref{eq:cee-identifiability-simple-randomization}, $\beta^\star$ defined in \eqref{eq:cee-model} can be estimated by a doubly robust procedure, as we describe below.

We define two infinite-dimensional nuisance parameters. For each $t \in [T]$, $r_t(\cdot)$ is a function with truth $r_t^\star(S_t) : = \logit~\EE( W_{t, \Delta} Y_{t, \Delta} \mid S_t, A_t = 0, I_t = 1)$, and $m_t(\cdot)$ is a function with truth $m^\star_t(S_t) := \Pr(A_t = 1\mid S_t, Y_{t,\Delta} = 0, I_t = 1)$. Let $r = \{r_t: t\in[T]\}$ and $m = \{m_t: t\in[T]\}$. Inspired by the logistic partially linear model literature \citep{tchetgen2013closed,tan2019doubly}, we propose the following preliminary estimating function for $\beta^\star$
\begin{align}
    U^\mathrm{SR-prelim}(\beta, r, m) = \sumt \omega(t) & I_t \left\{ W_{t, \Delta}Y_{t, \Delta}e^{-A_t f_t(S_t)^T \beta} - (1 - W_{t, \Delta}Y_{t, \Delta}) e^{r_t(S_t)} \right\} \nonumber \\
    & ~~~~ \times \{ A_t - m_t(S_t) \} f_t(S_t). \label{eq:ee-1-simple-randomization}
\end{align}
We further improve the estimation efficiency using techniques from semiparametric efficiency theory \citep{bkrw1993}, by subtracting from $U^\mathrm{SR-prelim}(\beta, r, m)$ its projection onto the score functions of the treatment selection probabilities. We thus obtain a more efficient estimating function
% This projection takes the form
% \begin{align*}
%     \sum_{u = 1}^T \Big[ \EE\{\sumt U_t^\mathrm{SR-prelim}(\beta, r_t, m_t)|S_u, A_u\}- \EE\{\sumt U_t^\mathrm{SR-prelim}(\beta, r_t, m_t)|S_u\} \Big].
% \end{align*}
% Subtracting this projection on the score functions of the treatment selection probabilities from the preliminary estimating function in \eqref{eq:ee-1-simple-randomization} gives the more efficient estimating equation: $\PP_n \sumt \USR_t(\beta, r_t, m_t, \mu_t) = 0$, where
\begin{align}
    \USR(\beta, r, m, \mu) = \sumt \omega(t) I_t & \bigg[ \{ W_{t, \Delta}Y_{t, \Delta} - A_t \mu_{1t} - (1-A_t)\mu_{0t}\} \big\{ e^{-A_t f_t(S_t)^T \beta} + e^{r_t(S_t)} \big\} \{A_t - m_t(S_t)\} \nonumber \\
    & ~~ + \big\{ \mu_{1t}e^{-f_t(S_t)^T\beta} - (1 - \mu_{1t})e^{r_t(S_t)} \big\} \{ 1 - m_t(S_t) \} p_t(S_t) \nonumber \\
    & ~~ - \big\{ \mu_{0t} - (1 - \mu_{0t})e^{r_t(S_t)} \big\} m_t(S_t) \{1 - p_t(S_t)\} \bigg] f_t(S_t). \label{eq:ee-2-simple-randomization}
\end{align}
Here, $\mu = \{\mu_t: t \in [T]\}$ is another set of nuisance functions, with the truth of $\mu_t(\cdot)$ being $\mu^\star_t(H_t, A_t) := \EE(W_{t, \Delta}Y_{t, \Delta} | H_t, A_t, I_t = 1)$, and in \eqref{eq:ee-2-simple-randomization} we use shorthand notation $\mu_{at} := \EE(W_{t, \Delta}Y_{t, \Delta} | H_t, A_t = a, I_t = 1)$ for $a \in \{0,1\}$. A detailed derivation of \eqref{eq:ee-2-simple-randomization} is in Section C of Supplementary Material. 

A two-stage estimation for $\beta^\star$ proceeds by first obtaining nuisance function estimators $\hat{r}_t$, $\hat{m}_t$, and $\hat\mu_t$ for $t\in[T]$, and then obtain $\hat\beta$ by solving the estimating equation $\PP_n \USR(\beta, \hat{r}, \hat{m}, \hat\mu) = 0$, as detailed in Algorithm \ref{algo:estimator-dr}. We denote this estimator by $\hat{\beta}^\mathrm{SR}$. Theorem \ref{thm: CAN of beta SR} establishes its asymptotic normality, and the proof is in Section D of Supplementary Materials.

\begin{algorithm}[htbp]
    \caption{The doubly robust estimator $\hat\beta^\mathrm{SR}$ under Simple Randomization}
    \label{algo:estimator-dr}
    \spacingset{1.5}
    \vspace{0.3em}
    \textbf{Step 1:} For $t\in[T]$, fit a model for $\EE(W_{t, \Delta}Y_{t, \Delta} \mid S_t, A_t = 0, I_t = 1)$ and denote the fitted model by $\hat{r}(S_t)$; fit a model for $\Pr(A_t = 1\mid S_t, Y_{t,\Delta} = 0, I_t = 1)$ denote the fitted model by $\hat{m}(S_t)$; fit a model for $\EE(W_{t, \Delta}Y_{t, \Delta} \mid H_t, A_t, I_t = 1)$ denote the fitted model by $\hat\mu_t(H_t, A_t)$. In practice, these models may be fitted by pooling across $t \in [T]$.

    \textbf{Step 2:} Obtain $\hat\beta^\mathrm{SR}$ by solving $\PP_n \USR (\beta, \hat{r}, \hat{m}, \hat\mu) = 0$.

    \vspace{0.3em}
\end{algorithm}

% \begin{asu}
% \label{assu: paper - quality of nuisance estimators} (Quality of nuisance estimators) As $n \rightarrow \infty$, the fitted nuisance
% functions satisfy
%     \begin{align*}
%         \lVert\hat{r}_t-r_t^{\star}\rVert\lVert\hat{m}_t-m^{\star}_t\rVert=o_{p}(n^{-1/2}).
%     \end{align*}
% \end{asu}

\begin{thm}
\label{thm: CAN of beta SR}
Suppose Assumptions \ref{assu: consistency, Positivity, Sequential ignorability.} and regularity conditions in Section D of Supplementary Material hold. Let $r'_t$, $m'_t$, and $\mu'_t$ denote the $L_2$-limits of $\hat{r}_t$, $\hat{m}_t$, and $\hat\mu_t$, respectively, and let $r'$, $m'$, and $\mu'$ denote the corresponding collections over $t \in [T]$. For each $t \in [T]$, suppose that either $r'_t = r^\star_t$ or $m'_t = m^\star_t$, then $\hat{\beta}^\mathrm{SR}$ is consistent. Furthermore, if for each $t \in [T]$
\begin{align}
     \lVert\hat{r}_t-r_t^{\star}\rVert\lVert\hat{m}_t-m^{\star}_t\rVert=o_{p}(n^{-1/2}),
     \label{eq: rate double robustness}
\end{align}
then $\hat{\beta}^\mathrm{SR}$ is asymptotically normal: $\sqrt{n}(\hat{\beta}^\mathrm{SR}-\beta^{\star})\xrightarrow{d}N(0,V^\mathrm{SR})$
as $n\rightarrow\infty$, where 
\begin{align*}
 V^\mathrm{SR} = & \ \ \EE\{\partial_{\beta}\USR(\beta^{{\star}}, r', m', \mu'\}^{-1} ~ \EE\{\USR(\beta^{{\star}}, r', m', \mu')\USR(\beta^{{\star}}, r', m', \mu')^{T}\} \\ 
 & \times \EE\{\partial_{\beta}\USR(\beta^{{\star}},r', m', \mu')\}^{-1,T},
\end{align*}
 and $ V^\mathrm{SR} $ can be consistently estimated by 
\begin{align*}
& \ \ \Big[\PP_{n}\{\partial_{\beta}\USR(\hat{\beta}^\mathrm{SR},\hat{r}, \hat{m}, \hat\mu)\}\Big]^{-1} \Big[\PP_{n}\{\USR(\hat{\beta}^\mathrm{SR},\hat{r}, \hat{m}, \hat\mu)\USR(\hat{\beta}^\mathrm{SR},\hat{r}, \hat{m}, \hat\mu)^{T}\} \Big] \\
& \times \Big[\PP_{n}\{\partial_{\beta}\USR(\hat{\beta}^\mathrm{SR},\hat{r}, \hat{m}, \hat\mu)\}\Big]^{-1,T}.
\end{align*}
\end{thm}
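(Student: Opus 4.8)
The plan is to treat $\hat\beta^\mathrm{SR}$ as a Z-estimator with estimated infinite-dimensional nuisances $(\hat r,\hat m,\hat\mu)$ and follow the standard two-step route: first establish consistency from an unbiased (doubly robust) population estimating equation together with uniform convergence, then obtain asymptotic normality from a one-term Taylor expansion in which the nuisance-estimation drift is shown to be negligible. Throughout I would exploit that $\USR$ in \eqref{eq:ee-2-simple-randomization} is an $\omega(t)$-weighted sum over $t$ whose summands can be analyzed separately, so the per-$t$ hypothesis ``$r'_t=r^\star_t$ or $m'_t=m^\star_t$'' is handled one decision point at a time.

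For consistency the crucial step is to verify the doubly robust moment identity $\EE\{\USR(\beta^\star,r',m',\mu')\}=0$, which I would do by computing the conditional mean of the $t$-th summand in two nested stages. First, conditioning on $(H_t,A_t,I_t=1)$ and using that the truth of $\mu_t$ is $\mu^\star_t(H_t,A_t)=\EE(W_{t,\Delta}Y_{t,\Delta}\mid H_t,A_t,I_t=1)$, the residual $W_{t,\Delta}Y_{t,\Delta}-A_t\mu_{1t}-(1-A_t)\mu_{0t}$ has conditional mean $\mu^\star_{A_tt}-\mu_{A_tt}$; integrating $A_t$ over the known randomization probability and combining with the two ``imputation'' lines of \eqref{eq:ee-2-simple-randomization}, the terms carrying the working values $\mu_{1t},\mu_{0t}$ cancel algebraically, leaving an expression that depends on $\mu'$ only through the truth $\mu^\star$. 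This shows $\mu$ enters purely as an efficiency augmentation and that unbiasedness never requires $\mu'=\mu^\star$. Second, I would integrate $H_t$ down to $S_t$; Simple Randomization ($A_t\perp H_t\mid S_t$, Definition \ref{defn: simple randomization}) collapses $\EE(\mu^\star_{at}\mid S_t,I_t=1)$ to the $S_t$-level means $\EE(W_{t,\Delta}Y_{t,\Delta}\mid S_t,A_t=a,I_t=1)$, whose $\logit$-difference is $f_t(S_t)^T\beta^\star$ by \eqref{eq:cee-identifiability-simple-randomization} and \eqref{eq:cee-model}. The remaining check is a finite algebraic identity: when $r'_t=r^\star_t$ the bracketed factors vanish directly from the $\expit$ form of the $S_t$-level means, while when $m'_t=m^\star_t$ the $e^{r_t}$ terms cancel via the Bayes relation linking $m^\star_t=\Pr(A_t=1\mid S_t,Y_{t,\Delta}=0,I_t=1)$ to $p_t$ and the outcome means, after which the residual collapses to zero through the odds-ratio identity. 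Given this moment identity, I would invoke the Section D regularity conditions (uniqueness of the root of $\beta\mapsto\EE\{\USR(\beta,r',m',\mu')\}$ via a nonsingular/strictly monotone derivative, plus a uniform law of large numbers making $\sup_\beta\lVert\PP_n\USR(\beta,\hat r,\hat m,\hat\mu)-\EE\USR(\beta,r',m',\mu')\rVert\pto0$) to conclude $\hat\beta^\mathrm{SR}\pto\beta^\star$.

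For asymptotic normality, expanding $\PP_n\USR(\hat\beta^\mathrm{SR},\hat r,\hat m,\hat\mu)=0$ once in $\beta$ around $\beta^\star$ gives $\sqrt{n}(\hat\beta^\mathrm{SR}-\beta^\star)=-\{\PP_n\partial_\beta\USR(\tilde\beta,\hat r,\hat m,\hat\mu)\}^{-1}\sqrt{n}\,\PP_n\USR(\beta^\star,\hat r,\hat m,\hat\mu)$ for an intermediate $\tilde\beta$, and by consistency plus the ULLN the matrix converges to $\EE\{\partial_\beta\USR(\beta^\star,r',m',\mu')\}$. The remaining work is to show $\sqrt{n}\,\PP_n\USR(\beta^\star,\hat r,\hat m,\hat\mu)=\sqrt{n}\,\PP_n\USR(\beta^\star,r',m',\mu')+o_p(1)$, which I would do by splitting the difference into an empirical-process term $\sqrt{n}(\PP_n-\EE)\{\USR(\beta^\star,\hat r,\hat m,\hat\mu)-\USR(\beta^\star,r',m',\mu')\}$ and a bias term $\sqrt{n}\,\EE\{\USR(\beta^\star,\hat r,\hat m,\hat\mu)-\USR(\beta^\star,r',m',\mu')\}$ with the estimated nuisances held fixed. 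The empirical-process term is $o_p(1)$ under the stated Donsker/stochastic-equicontinuity conditions (or a cross-fitted analogue for machine-learning nuisances). For the bias term I would reuse the conditional-mean computation above: its first-order Gateaux derivatives in the $r$- and $m$-directions vanish at the truth (Neyman orthogonality), the $\mu$-direction contributes no bias by the exact $\mu$-cancellation, and only the mixed second-order remainder survives, factorizing termwise as a bounded multiple of $\sum_t\lVert\hat r_t-r^\star_t\rVert\,\lVert\hat m_t-m^\star_t\rVert=o_p(n^{-1/2})$ by \eqref{eq: rate double robustness}. Hence $\sqrt{n}(\hat\beta^\mathrm{SR}-\beta^\star)=-\{\EE\partial_\beta\USR(\beta^\star,r',m',\mu')\}^{-1}\sqrt{n}\,\PP_n\USR(\beta^\star,r',m',\mu')+o_p(1)$, and the multivariate Lindeberg--L\'evy CLT applied to the i.i.d.\ summands yields $N(0,V^\mathrm{SR})$ in sandwich form; consistency of the plug-in variance estimator then follows from $\hat\beta^\mathrm{SR}\pto\beta^\star$, nuisance convergence, and a further ULLN and continuous-mapping argument.

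I expect the main obstacle to be the bias-term analysis: showing that the second-order remainder of the $S_t$-conditional mean factorizes \emph{exactly} into the product of the $r_t$- and $m_t$-errors (so that \eqref{eq: rate double robustness} suffices with no rate imposed on $\mu$) is the heart of rate double robustness and is precisely where the odds-ratio-specific algebra and the definition of $m^\star_t$ through conditioning on $Y_{t,\Delta}=0$ are essential. A secondary difficulty is discharging the empirical-process term without sample splitting, which forces the Donsker-type restrictions collected in the Section D regularity conditions.
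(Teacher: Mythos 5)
Your proposal is correct and follows essentially the same route as the paper's proof: the paper likewise establishes $\EE\{\USR(\beta^\star,r',m',\mu')\}=0$ by showing the $\mu$-augmentation terms cancel in expectation and the preliminary function factorizes (after conditioning on $(S_t,A_t)$) into $\{e^{r^\star_t-r'_t}-1\}\{A_t-m'_t\}$, proves consistency via a unique-root and Glivenko--Cantelli argument, and obtains normality from the same Taylor expansion with the empirical-process term handled by Donsker conditions (van der Vaart's Lemma 19.24) and the bias term bounded by the exact product $\lVert\hat r_t-r^\star_t\rVert\lVert\hat m_t-m^\star_t\rVert$. Your "Neyman orthogonality" framing of the bias analysis is just a relabeling of the paper's explicit algebra in its rate-double-robustness lemma, not a different argument.
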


\begin{rmk}
    \normalfont
    The estimator $\hat{\beta}^\mathrm{SR}$ is doubly robust as its consistency requires correct specification of only one of $r_t(S_t)$ and $m_t(S_t)$. These two nuisance functions differ from the outcome regression and propensity score pair used in doubly robust estimation for causal excursion risk difference and relative risk \citep{shi2023meta,liu2024incorporating}, due to technical challenges associated with the logit link and the noncollapsibility of the odds ratio \citep{tchetgen2010doubly}.
    % Our proposed estimator extends doubly robust estimation in the logistic partially linear model to time-varying treatment setting \citep{tan2019doubly}.
    The third nuisance function, $\mu_t(H_t, A_t)$, does not need to be correctly specified, as it is not required for identification and only serves to improve efficiency \citep{lok2024estimating}.

\end{rmk}

\begin{rmk}
    \normalfont
    A wide variety of methods can be used in Step 1 of Algorithm \ref{algo:estimator-dr} to fit the nuisance parameters including kernel regression, spline methods with complexity penalties, and ensemble methods \citep{fan1996local, ruppert2003semiparametric, wang2016functional,wood2017generalized}. The condition \eqref{eq: rate double robustness} is referred to as ``rate double robustness'' \citep{smucler2019unifying}, and a sufficient condition is for both $\hat{r}_t$ and $\hat{m}_t$ to converge to their corresponding truths at $o_p(-n^{1/4})$, Many of the data-adaptive algorithms mentioned earlier can acheive such convergence rate. There is no rate requirement for $\mu_t(H_t,A_t)$ and it can be arbitrarily misspecified. If one were to use parametric models for fitting the nuisance functions, as long as either $r_t$ or $m_t$ is correctly specified, $\hat\beta^\mathrm{SR}$ is consistent and asymptotically normal, and its standard error can be estimated using bootstrap even if \eqref{eq: rate double robustness} does not hold.
\end{rmk}

\section{An Alternative Estimator Under General Randomization}
\label{sec: method-general}

Under General Randomization (Definition \ref{defn: general randomization}), Equation \eqref{eq:cee-identifiability-simple-randomization} does not hold necessarily and thus the estimator in Section \ref{sec:method-simple-randomization} may fail. Below we propose an alternative estimator for the General Randomization setting.

For each $t \in [T]$, we define a nuisance function $\psi_t(\cdot)$ with truth 
\begin{align}
    \psi_t^\star(S_t) := \logit[\EE\{ \EE(W_{t, \Delta} Y_{t, \Delta} \mid H_t, A_t = 1) \mid S_t, I_t = 1\}]. \label{eq:association-model}
\end{align}
The quantity $\psi_t^\star(S_t)$ is the first term in the CEE identification formula \eqref{eq: observed causal excursion effect}. Roughly speaking, it captures the association between the weighted outcome and $S_t$ among decision points with $A_t = 1$. We estimate $\psi_t^\star(S_t)$ and use it to facilitate the estimation of $\beta^\star$. A related auxiliary association model to collapse the causal odds ratio over covariates was considered by \citet{vansteelandt2003causal} for point treatment problems. Suppose for a pre-specified $q$-dimensional function $g_t(S_t)$, there exists $\alpha^\star \in \RR^q$ such that $\psi_t^\star(S_t) = g_t(S_t)^T \alpha^\star$. Then $\hat\alpha$, an estimator for $\alpha^\star$, can be obtained by solving $\PP_n Q(\alpha) = 0$ with
\begin{align}
    Q(\alpha) := \sumt \frac{A_t}{p_t(H_t)}  \Big[ W_{t, \Delta} Y_{t, \Delta} - \expit\{ g_t(S_t)^T\alpha \} \Big] g_t(S_t). \label{eq:ee-for-association}
\end{align}

% The fitted association model is thus obtained by $\hat{\psi}_t(S_t) = g_t(S_t)^T \hat{\alpha}$. This formulation of association function $\psi_t(S_t)$ is inspired by the generalized structural mean models in \citet{vansteelandt2003causal}. In our context, the association function estimates the expected outcome in the treatment group, which is the first half of the CEE in \eqref{eq: observed causal excursion effect}. Without this additional model specification, we cannot collapse the causal excursion odds ratio over $S_t$ as we usually do for risk difference and relative risk.

The combination of the causal excursion odds ratio model \eqref{eq:cee-model} and the association model \eqref{eq:association-model} provides a prediction for the exposure-free outcome for each subject, $\expit\{g(S_t)^T\alpha - f_t(S_t)^T\beta\} A_t + W_{t, \Delta}  Y_{t, \Delta}(1-A_t)$. Thus, sequential ignorability (Assumption \ref{assu: Sequential ignorability}) implies that the following preliminary estimating function is unbiased:
\begin{align}
     U^\mathrm{GR-prelim}(\beta, \alpha) =  \sumt \omega(t) \frac{A_t - p_t(H_t)}{p_t(H_t)\{1 - p_t(H_t)\}} I_t & \Big[ \expit\{g(S_t)^T\alpha - f_t(S_t)^T\beta\} A_t \nonumber \\
     & ~~ + W_{t, \Delta}  Y_{t, \Delta}(1-A_t) \Big] f_t(S_t). \label{eq: ee of method 2}
\end{align}
We then use the projection technique as in Section \ref{sec:method-simple-randomization} to obtain a more efficient estimating function:
\begin{align}
    \UGR(\beta, \alpha, \mu_t) = \sumt \omega(t) I_t & \bigg[ \expit\{g_t(S_t)^T \alpha - f_t(S_t)^T\beta\} - \mu_{0t} \nonumber \\
    & ~~ - \frac{1 - A_t}{1 - p_t(H_t)} \{W_{t, \Delta}Y_{t, \Delta} - \mu_{0t} \} \bigg] f_t(S_t).
    \label{eq: ee for method2 with improved efficiency}
\end{align}

Algorithm \ref{algo:estimator-improved} presents a two-stage estimation procedure that first obtains $\hat\alpha$ from $\PP_n Q(\alpha) = 0$ and $\hat\mu_{at}$ as a model fit for $\EE(W_{t, \Delta}Y_{t, \Delta}|H_t, A_t = a, I_t = 1)$, then solves for $\hat\beta$ from $\PP_n \UGR(\beta, \hat\alpha, \hat\mu) = 0$. We call this estimator $\hat{\beta}^\mathrm{GR}$, and we establish its asymptotic property in Theorem \ref{thm: CAN of beta GR}. 
\begin{algorithm}[htbp]
    \caption{The alternative estimator $\hat\beta^\mathrm{GR}$ under General Randomization}
    \label{algo:estimator-improved}
    \spacingset{1.5}
    \vspace{0.3em}
    \textbf{Step 1:} For each $t\in[T]$, fit a model for $\mu_t(H_t, A_t) = \EE(W_{t, \Delta}Y_{t, \Delta} \mid H_t, A_t, I_t = 1)$ and denote the fitted model by $\hat\mu_t(H_t, A_t)$; in practice, this model fitting can pool over $t$. Obtain $\hat{\alpha}$ by solving \eqref{eq:ee-for-association} using weighted logistic regression.

    \textbf{Step 2:} Obtain $\hat{\beta}^\mathrm{GR}$ by solving $\PP_n \UGR(\beta, \alpha, \hat\mu_t) = 0$.

    \vspace{0.3em}
\end{algorithm}

\begin{thm}
\label{thm: CAN of beta GR}
    Suppose Assumption \ref{assu: consistency, Positivity, Sequential ignorability.} and regularity conditions in Section E of Supplementary Material hold. Suppose that given $g_t(S_t)$, there exists $\alpha^\star$ such that $\psi_t^\star(S_t) = g_t(S_t)^T\alpha^\star$ for $\psi_t^\star$ defined in \eqref{eq:association-model}. Then $\hat{\beta}^\mathrm{GR}$ obtained in Algorithm \ref{algo:estimator-improved} is consistent and asymptotically normal: $\sqrt{n}(\hat{\beta}^\mathrm{GR}-\beta^{\star})\xrightarrow{d}N(0,V^\mathrm{GR})$ as $n\rightarrow\infty$. Furthermore, $V^\mathrm{GR}$ can be consistently estimated by the upper diagonal $p$ by $p$ block matrix of 
    \begin{align}
        \bigg[\PP_{n} \bigg\{ \frac{\partial \Phi(\hat{\beta}, \hat{\alpha}, \hat\mu) }{\partial (\beta^T, \alpha^T)} \bigg\}\bigg]^{-1} ~ \bigg[\PP_{n} \bigg \{\Phi(\hat{\beta}, \hat{\alpha}, \hat\mu) \Phi(\hat{\beta}, \hat{\alpha}, \hat\mu) ^{T}\bigg\} \bigg] ~ \bigg[\PP_{n}\bigg\{ \frac{\partial \Phi(\hat{\beta}, \hat{\alpha}, \hat\mu) }{\partial (\beta^T, \alpha^T)} \bigg\}\bigg]^{-1,T}. \label{eq:vcov-betaGR}
    \end{align}
    Here, $\Phi(\beta, \alpha, \mu):=(\UGR(\beta, \alpha, \mu)^T, Q(\alpha)^T)^T$ is the stacked estimating function for $\beta$ and $\alpha$.
\end{thm}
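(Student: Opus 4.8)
The plan is to view $\hat\theta := ((\hat\beta^\mathrm{GR})^T, \hat\alpha^T)^T$ as the root of a single stacked $Z$-estimating equation $\PP_n\Phi(\theta, \hat\mu) = 0$ and invoke standard $M$-estimation theory, isolating the two nonstandard ingredients: (i) the exact insensitivity of the population estimating function to the efficiency nuisance $\mu$, and (ii) the propagation of the first-stage uncertainty in $\hat\alpha$ into $\hat\beta^\mathrm{GR}$ through the off-diagonal block of the Jacobian.

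First I would establish unbiasedness of $\Phi$ at the truth $\theta^\star = ((\beta^\star)^T, (\alpha^\star)^T)^T$. For $Q$ in \eqref{eq:ee-for-association}, conditioning on $H_t$ and using the inverse-probability identity $\EE\{A_t W_{t,\Delta}Y_{t,\Delta}/p_t(H_t) \mid H_t\} = \EE(W_{t,\Delta}Y_{t,\Delta}\mid H_t, A_t = 1)$, then integrating over $H_t$ given $(S_t, I_t = 1)$ and invoking the correct specification $\psi_t^\star = g_t^T\alpha^\star$ from \eqref{eq:association-model}, gives $\EE\{Q(\alpha^\star)\} = 0$. For $\UGR$ in \eqref{eq: ee for method2 with improved efficiency}, the crucial computation is that $\EE(\cdot\mid H_t)$ of the augmentation term $-\mu_{0t} - \tfrac{1-A_t}{1-p_t(H_t)}\{W_{t,\Delta}Y_{t,\Delta} - \mu_{0t}\}$ returns $-\mu_{0t}^\star$ \emph{regardless} of the value plugged in for $\mu_{0t}$, because $\EE\{\tfrac{1-A_t}{1-p_t(H_t)}\mu_{0t}\mid H_t\} = \mu_{0t}$ exactly cancels the explicit $-\mu_{0t}$ term. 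Hence $\EE\{\UGR(\beta, \alpha, \mu)\}$ does not depend on $\mu$ at all and equals $\sumt\omega(t)\EE[I_t\{\expit(g_t^T\alpha - f_t^T\beta) - \mu_{0t}^\star\}f_t]$; evaluating at $(\beta^\star, \alpha^\star)$ and combining the identification formula \eqref{eq: observed causal excursion effect} with the definition of $\beta^\star$ in \eqref{eq:cee-model} (which, under the working linear CEE model, makes $\beta^\star$ the population root) shows it vanishes. This exact mean-insensitivity is the General-Randomization analogue of Neyman orthogonality and is precisely what permits $\mu$ to be arbitrarily misspecified.

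Next I would prove consistency: under a uniform law of large numbers and the identifiability (uniqueness of the population root) supplied by the regularity conditions, $\hat\alpha\pto\alpha^\star$ and then $\hat\beta^\mathrm{GR}\pto\beta^\star$, with $\hat\mu$ converging in $L_2$ to some possibly-wrong limit $\mu'$ being harmless by the insensitivity just established. For asymptotic normality I would take a joint Taylor expansion of $\PP_n\Phi(\hat\theta, \hat\mu) = 0$ about $\theta^\star$:
\begin{equation*}
\sqrt{n}(\hat\theta - \theta^\star) = -\big[\EE\{\partial_\theta\Phi(\theta^\star, \mu')\}\big]^{-1}\sqrt{n}\,\PP_n\Phi(\theta^\star, \mu') + o_p(1).
\end{equation*}
Two terms need care. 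The nuisance-estimation contribution $\sqrt{n}(\PP_n - P)\{\Phi(\theta^\star, \hat\mu) - \Phi(\theta^\star, \mu')\}$ is $o_p(1)$ by stochastic equicontinuity (cross-fitting, or a Donsker condition on the nuisance class), while the companion bias term $\sqrt{n}\,P\{\Phi(\theta^\star,\hat\mu) - \Phi(\theta^\star,\mu')\}$ is identically zero by the exact insensitivity above, so no convergence-rate condition on $\hat\mu$ is required. The Jacobian $\EE\{\partial_\theta\Phi\}$ is block lower-triangular since $Q$ does not involve $\beta$ (so $\partial_\beta Q = 0$); inverting it and reading off the $\beta$-coordinates exhibits how the influence function of $\hat\alpha$ enters that of $\hat\beta^\mathrm{GR}$ through the off-diagonal block $\EE\{\partial_\alpha\UGR\}$, yielding the sandwich covariance estimated by \eqref{eq:vcov-betaGR}, of which $V^\mathrm{GR}$ is the upper-left $p\times p$ block.

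The main obstacle I anticipate is the rigorous control of the empirical-process remainder involving the data-adaptively estimated $\hat\mu$ together with the plugged-in $\hat\alpha$ — specifically, verifying stochastic equicontinuity (or cleanly setting up cross-fitting) so that nuisance estimation is asymptotically negligible. By comparison, the unbiasedness and mean-insensitivity computations are mechanically routine, and the block-triangular propagation of $\hat\alpha$'s uncertainty into $\hat\beta^\mathrm{GR}$ is standard stacked $M$-estimation once the expansion is in hand.
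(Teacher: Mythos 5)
Your proposal is correct and takes essentially the same route as the paper's proof: the paper likewise treats $(\beta,\alpha)$ jointly through the stacked function $\Phi$, rests on the same key lemma that $\PP\{\UGR(\beta^{\star},\alpha^{\star},\mu)\}=0$ for \emph{any} $\mu$ (your ``exact insensitivity,'' obtained by the same cancellation of the $\mu_{0t}$ augmentation term conditional on $H_t$), and controls the nuisance estimation error through a Donsker condition on $\hat\mu_t$ with no rate requirement. The only difference is presentational: rather than writing out the joint Taylor expansion and stochastic-equicontinuity argument by hand as you do, the paper verifies the hypotheses of Theorem 5.1 of \citet{cheng2023efficient} (via the unbiasedness lemma, uniform convergence of $\PP\{\UGR(\theta,\hat\mu)\}$ to $\PP\{\UGR(\theta,\mu')\}$, and $L_2$ convergence of $\UGR(\theta^{\star},\hat\mu)$ to $\UGR(\theta^{\star},\mu')$), which packages exactly the expansion you describe.
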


\begin{rmk}
    \normalfont
    The consistency and asymptotic normality (CAN) of $\hat\beta^\mathrm{GR}$ (Theorem \ref{thm: CAN of beta GR}) requires that the association model $\psi_t(S_t)$ is correctly specified parametrically via $g_t(S_t)^T \alpha$. 
    % This implies that the estimator of the association model, $\hat\psi_t(\cdot)$, must converge at a parametric rate in order to guarantee the asymptotic normality of $\hat{\beta}^\mathrm{GR}$.
    While we need this condition to theoretically establish CAN, we find in simulation studies that nonparametric estimators for $\psi_t(S_t)$, such as tree-based methods or splines, can still yield good finite-sample performance of $\hat\beta^\mathrm{GR}$. Similar to Theorem \ref{thm: CAN of beta SR}, Theorem \ref{thm: CAN of beta GR} does not require the correct model specification for $\mu_t(H_t, A_t)$.
    % As long as $\hat\mu_t$ converges to some limit, the fact that $\hat\mu_t$ is estimated does not affect the asymptotic variance of the proposed estimator.
\end{rmk}

Although the CAN of $\hat\beta^\mathrm{GR}$ requires a correct parametric model for $\psi_t(S_t)$, $\hat\beta^\mathrm{GR}$ is locally robust to misspecified $\psi_t(S_t)$ when $\beta^\star = 0$. This robustness is useful because it controls the type I error rate for testing the null hypothesis of no average causal excursion odds ratio, even when the nuisance models are not correctly specified. We formally establish the result in the following theorem.

\begin{thm}
    \label{thm: robustness of GR under null}
    Suppose Assumption \ref{assu: consistency, Positivity, Sequential ignorability.} and regularity conditions in Section F of Supplementary Material hold. When $\beta^\star = 0$, $\hat{\beta}^\mathrm{GR}$ is consistent and asymptotically normal: $\sqrt{n}(\hat{\beta}^\mathrm{GR}-\beta^{\star})\xrightarrow{d}N(0,V^\mathrm{GR})$ as $n\rightarrow\infty$. Furthermore, $V^\text{GR}$ can be consistently estimated by the upper diagonal $p$ by $p$ block matrix of \eqref{eq:vcov-betaGR}. Here, $\hat{\alpha}$ is the solution to $\PP_n \{Q(\alpha)\} = 0$, and we do not assume that $\psi_t^\star$ is correctly modeled by $g_t(S_t)^T\alpha$ for any $\alpha$. 
\end{thm}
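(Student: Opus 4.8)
The plan is to treat $(\hat\beta^\mathrm{GR},\hat\alpha)$ as a joint $Z$-estimator solving $\PP_n\Phi(\beta,\alpha,\hat\mu)=0$, exactly as in Theorem~\ref{thm: CAN of beta GR}, and to re-run that argument at a \emph{pseudo-true} value of $\alpha$. Standard M-estimation theory gives $\hat\alpha\pto\bar\alpha$, where $\bar\alpha$ solves the population equation $\EE\{Q(\bar\alpha)\}=0$; under the Section~F regularity conditions this limit exists and is unique even though no $\alpha$ satisfies $g_t(S_t)^T\alpha=\psi_t^\star(S_t)$. Because the $Q$-block of $\EE\{\Phi\}$ vanishes at $\bar\alpha$ by construction, the whole theorem collapses to verifying the single identity $\EE\{\UGR(\beta^\star,\bar\alpha,\mu')\}=0$ at $\beta^\star=0$; granting it, consistency, asymptotic normality, and the sandwich form \eqref{eq:vcov-betaGR} follow from the same Taylor expansion of $\PP_n\Phi$ about $(0,\bar\alpha)$ used for Theorem~\ref{thm: CAN of beta GR}, needing only nonsingularity of $\EE\{\partial_{(\beta^T,\alpha^T)}\Phi\}$ there.

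To check the identity I would first evaluate $\EE\{\UGR(0,\alpha,\mu)\}$ for generic $\alpha$. Conditioning on $H_t$ with $I_t=1$ and using $\EE(1-A_t\mid H_t,I_t=1)=1-p_t(H_t)$, the inverse-probability augmentation contributes $\mu_{0t}^\star-\mu_{0t}$, cancelling the explicit $-\mu_{0t}$; this is exactly why the efficiency nuisance $\mu$ need not be correct. What survives is
\begin{align*}
\EE\{\UGR(0,\alpha,\mu)\}=\sumt\omega(t)\,\EE\big[I_t\{\expit\{g_t(S_t)^T\alpha\}-\mu_{0t}^\star\}f_t(S_t)\big],
\end{align*}
and writing $\phi_t^\star:=\logit\EE(\mu_{0t}^\star\mid S_t,I_t=1)$ turns the inner control mean into $\expit\{\phi_t^\star\}$. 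The null now enters: $\beta^\star=0$ encodes the no-effect null $\cee_{\bp,\bpi;\Delta}(t;S_t)=0$, so the two branches of the identification formula \eqref{eq: observed causal excursion effect} coincide and $\phi_t^\star=\psi_t^\star$, hence $\expit\{\phi_t^\star\}=\expit\{\psi_t^\star\}$.

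I would then close the loop through the defining equation of $\bar\alpha$. Conditioning $Q$ on $H_t$ gives $\EE\{Q(\alpha)\}=\sumt\EE[I_t\{\mu_{1t}^\star-\expit\{g_t^T\alpha\}\}g_t]$, and since $\EE(\mu_{1t}^\star\mid S_t,I_t=1)=\expit\{\psi_t^\star\}$ by the very definition of $\psi_t^\star$, evaluation at $\bar\alpha$ yields $\sumt\EE[I_t\{\expit\{\psi_t^\star\}-\expit\{g_t^T\bar\alpha\}\}g_t]=0$. Setting the residual $e_t(S_t):=\expit\{g_t^T\bar\alpha\}-\expit\{\psi_t^\star\}$, this orthogonality reads $\sumt\EE\{I_t e_t g_t\}=0$, whereas the leftover $\UGR$-expression from the previous step is $\sumt\omega(t)\EE\{I_t e_t f_t\}$. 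Both are zero together provided the $f_t$-directions are carried by the association score $g_t$: I would work under the (natural, and I expect already baked into Section~F) condition that $g_t$ contains $f_t$ time-resolvedly, so that $\EE\{I_t e_t g_t\}=0$ forces $\EE\{I_t e_t f_t\}=0$ for each $t$ irrespective of $\omega(t)$; in the special case $g_t=f_t$ with constant $\omega$ the $\UGR$-expression is literally a scalar multiple of the $Q$-moment.

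The main obstacle is precisely this reconciliation of two moment conditions that project the \emph{same} residual $e_t$ onto \emph{different} test functions---the $\omega(t)$-weighted $f_t$ for $\UGR$ versus the unweighted $g_t$ for $Q$. Because $e_t$ lives on the probability scale, the null used here must be the genuine no-effect null $\cee_{\bp,\bpi;\Delta}(t;S_t)=0$ (which $\beta^\star=0$ encodes under a correct effect model), not a merely projected zero; under a projected-only null the difference $\psi_t^\star-\phi_t^\star$ need not vanish in the $g_t$-directions and the cancellation breaks. I would therefore spend most of the effort pinning down the exact compatibility among $f_t$, $g_t$, and $\omega(t)$ that guarantees $\EE\{I_t e_t g_t\}=0\Rightarrow\sumt\omega(t)\EE\{I_t e_t f_t\}=0$, and confirming it follows from the Section~F conditions. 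Once that identity is secured, the asymptotic-normality and variance-estimation steps are routine reuses of the stacked $Z$-estimation machinery behind Theorem~\ref{thm: CAN of beta GR}.
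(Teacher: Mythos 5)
Your proposal takes essentially the same route as the paper's proof: the paper likewise treats $(\hat\beta^\mathrm{GR},\hat\alpha)$ as a stacked $Z$-estimator, reduces the whole theorem to the single identity $\PP\{\UGR(0,\alpha',\mu)\}=0$ at the pseudo-true limit $\alpha'$ of $\hat\alpha$ (its Lemma~\ref{lem: thm3 PPU = 0}), cancels the $\mu$-dependence by conditioning on $H_t$ exactly as you do, and then delegates consistency, asymptotic normality, and the sandwich variance \eqref{eq:vcov-betaGR} to the same stacked-estimating-equation machinery used for Theorem~\ref{thm: CAN of beta GR}.

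The one point worth comparing is the obstacle you flag at the end: it is genuine, and the paper does not actually dispatch it. After the $\mu$-cancellation, the paper's proof arrives at $\sum_t\omega(t)\PP\big(I_t[\expit\{g_t(S_t)^T\alpha'\}-\PP(W_{t,\Delta}Y_{t,\Delta}\mid H_t,A_t=0)]f_t(S_t)\big)$ and declares it zero because $\alpha'$ satisfies $\sum_t\PP\big(I_t[\PP(W_{t,\Delta}Y_{t,\Delta}\mid H_t,A_t=1)-\expit\{g_t(S_t)^T\alpha'\}]g_t(S_t)\big)=0$; but these two displays differ both in the treatment arm ($A_t=0$ versus $A_t=1$) and in the test function ($\omega(t)f_t$ versus $g_t$). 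Bridging the arms requires precisely the pointwise no-effect null you isolate, so that the two branches of \eqref{eq: observed causal excursion effect} have equal $S_t$-conditional means, and bridging the test functions requires your compatibility condition (e.g., $\omega(t)f_t(S_t)$ lying in the span of $g_t(S_t)$ uniformly in $t$, as when $g_t$ contains $f_t$ time-resolvedly, or $g_t=f_t$ with constant $\omega$). Neither condition is stated in Section F of the Supplementary Material; the existence of a common root of the stacked system is instead folded silently into the ``unique zero'' Assumption~\ref{assu: thm3 unique zero}. So your caution does not reflect a missing idea on your side---it makes explicit two conditions that the paper's own argument uses implicitly, and a complete write-up should state them.
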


% \begin{rmk}
%     Theorem \ref{thm: robustness of GR under null} shows that $\hat{\beta}^\mathrm{GR}$ is consistent and asymptotically normal even when the association model for \eqref{eq:association-model} is misspecified given $\beta^\star = 0$. Correct model specification of $\mu(H_t, A_t)$ is not required either. This robustness is particularly valuable because it guarantees the level of test for statistical tests based on $\hat{\beta}^\mathrm{GR}$ under the null hypothesis of no causal effects, even when the nuisance models are not correctly specified.
% \end{rmk}

\subsection{Generalized Causal Excursion Effect Models}
\label{subsec:generalization_of_gr_to_arbitray_link_functions}

We extend the result of $\hat\beta^\mathrm{GR}$ beyond logit links in \eqref{eq: observed causal excursion effect} and \eqref{eq:association-model} to other link functions. Let $h(\cdot)$ and $l(\cdot)$ denote strictly monotone, continuously differentiable, and invertible link functions. The generalized causal excursion effect is then defined as
\begin{align*}
    \cee_{\bp, \bpi; \Delta}(t; S_t) &= h \big[\EE\{\EE( W_{t, \Delta} Y_{t, \Delta} \mid A_t = 1, H_t) \mid S_t, I_t = 1 \} \big] \\
    & ~~~ - h \big[\EE\{\EE( W_{t, \Delta} Y_{t, \Delta} \mid A_t = 0, H_t) \mid S_t, I_t = 1 \} \big].
    \numberthis
    \label{eq: generalized CEE}
\end{align*}
We introduce a similar nuisance function $\psi_t(S_t)$, which can be parametrized by a finite-dimensional parameter $\alpha$, i.e.,
\begin{align*}
    \psi_t^\star(S_t) := l[\EE\{ \EE(W_{t, \Delta} Y_{t, \Delta} \mid H_t, A_t = 1) \mid S_t, I_t = 1\}] = g_t(S_t)^T\alpha^\star
\end{align*}
for a pre-specified $g_t(S_t)$. The link $l(\cdot)$ allows flexible modeling of the nuisance function while preserving interpretability of the causal excursion effect in \eqref{eq: generalized CEE}. The estimation of $\alpha^\star$ follows from solving $\PP_n Q^\text{Generalized}(\alpha) = 0$, where
\begin{align*}
    Q^\text{Generalized}(\alpha) = \sumt \frac{A_t}{p_t(H_t)}  I_t \Big[ W_{t, \Delta} Y_{t, \Delta} - l^{-1}\{ g_t(S_t)^T\alpha \} \Big] g_t(S_t).
\end{align*}

After solving for $\alpha$, we propose the following preliminary estimating function for $\beta^\star$:
\begin{align*}
    \sumt & \omega(t) \frac{A_t - p_t(H_t)}{p_t(H_t)\{1 - p_t(H_t)\}} I_t \\ 
     \times & \Bigg\{ h^{-1} \bigg( h\big[l^{-1}\{g(S_t)^T\alpha \} \big]  - f_t(S_t)^T\beta \bigg)  A_t + W_{t, \Delta}  Y_{t, \Delta}(1-A_t) \Bigg\} f_t(S_t). 
\end{align*}
Using the similar projection technique as in Section \ref{sec:method-simple-randomization} to obtain a more efficient estimating function:
\begin{align*}
    U^\text{GR-Generalized}(\beta, \alpha, \mu) = \sumt \omega(t) I_t \Bigg\{ & h^{-1} \bigg( h\big[l^{-1}\{g(S_t)^T\alpha \} \big]  - f_t(S_t)^T\beta\bigg) - \mu_{0t} \\ 
    &- \frac{1-A_t}{1-p_t(H_t)}(W_{t, \Delta}  Y_{t, \Delta} - \mu_{0t}) \Bigg\} f_t(S_t)
\end{align*}

This extension allows researchers to adopt any link function they prefer. For example, the choice $h(a) = \text{probit}(a)$, the inverse of the cumulative distribution function of the standard normal distribution, leads to another widely used link function for binary outcome. Importantly, the asymptotic results established in Theorem \ref{thm: CAN of beta GR} and Theorem \ref{thm: robustness of GR under null} remain valid under this extension and the proof can be found in Section G of Supplementary Material.

\section{Simulation Study}
\label{sec: simulation}

The simulation study demonstrates the performance of the proposed estimators $\hat\beta^\text{SR}$ and $\hat\beta^\text{GR}$ under two distinct scenarios which differ by whether Simple Randomization (Definition \ref{defn: simple randomization}) holds. In Section \ref{subsec: S1}, the randomization probability depends on a time-varying covariate $X_t$, and we set $S_t = X_t$ to assess moderated effect by $X_t$. In this case, the Simple Randomization condition is satisfied. In Section \ref{subsec: S2}, the randomization probability is again a function of $X_t$, and there we set $S_t = \emptyset$ to assess the marginal effect. In this case, not including $X_t$ in $S_t$ violates the Simple Randomization condition. Throughout, we focus on the immediate causal excursion effect with $\Delta = 1$.

\subsection{Estimators' performance under Simple Randomization}
\label{subsec: S1}

The total number of decision points per individual is set to $T = 20$. For each individual, the time-varying covariate $X_t$ is exogenous (independent of past history) and was generated as $X_t \sim \text{Uniform}(0,2)$. The binary variables $(A_t,Y_{t, 1})$ given $H_t$ are generated jointly using Bernoulli distribution with $\Pr(Y_{t,1} = y, A_{t} = a \mid H_t) = s_{ya}/s$, where $s_{00} := 1$, $s_{01}:= e^{\gamma_0 + h_1(X_t)}$, $s_{10} := e^{\gamma_1 + h_2(X_t)}$, $s_{11} := e^{\beta_0^\star + \beta_1^\star X_t + \gamma_0^\star + \gamma_1^\star + h_1(X_t) + h_2(X_t)}$, and $s := s_{00} + s_{10} + s_{01} + s_{11}$. This implies (with detailed derivation in Section H of the Supplementary Material),
\begin{align}
    \Pr(Y_{t,1 } = 1|X_{t},A_{t}) &= \expit{\{(\beta_0^\star + \beta_1^\star X_t) A_t + \gamma_1^\star + h_2(X_t)\}}, \label{eq: simulation S1 Yt generating probability}\\
   \Pr(A_t = 1|X_{t},Y_{t,1 } = 0) &= \expit{ \{ \gamma_0^\star + h_1(X_t) \}}. \nonumber
\end{align}
The true parameter values are set as $\beta_0^\star = 1, \beta^\star_1 = -0.9, \gamma_0^\star = 0.25, \gamma_1^\star = -0.25$. $h_1(X_t),h_2(X_t)$ are nonlinear functions of $X_t$ and $t$: $h_1(X_t) = -0.5  + 1.1q_{2,2}(X_t/2) - 1.2q_{2,2}(t/T)$, $h_2(X_t) = -0.6 -0.4q_{2,2}(X_t/2) + 2q_{2,2}(t/T)$ with $q_{2,2}$ being the probability density function of Beta$(2, 2)$ distribution. Additional simulation results with different choices of $h_1(\cdot)$ and $h_2(\cdot)$, including linear and periodic functions, are included in Section H of the Supplementary Material.

We set $S_t = X_t$ to assess effects moderated by $X_t$. This makes the Simple Randoization condition satisfied. The nuisance function $r_t(S_t), m_t(S_t)$, and $\psi_t(S_t)$ at truth are in a form that can be correctly specified using a nonlinear model with logit link. Specifically,
\begin{align}
    r_t^\star(S_t) &= \logit \Pr(Y_{t,1} = 1|X_t, A_t = 0) = \gamma_1^\star + h_1(X_t) \nonumber\\
    m_t^\star(S_t) &= \Pr(A_t = 1 | X_t, Y_{t,1} = 0) = \expit{ \{ \gamma_0^\star + h_1(X_t) \}} \nonumber\\
    \psi_t^\star(S_t) &= \logit \Pr(Y_{t,1} = 1|X_t, A_t = 1) = \beta_0^\star + \beta_1^\star X_t + \gamma_1^\star + h_2(X_t). \label{eq: simulation S1 true nuisance functions}
\end{align}
It follows from \eqref{eq: simulation S1 Yt generating probability} immediately that the causal excursion effect at truth is $\cee_{\bp, \Delta = 1} (t;S_{t}) = \beta_0^\star + \beta_1^\star X_t$. Four sample sizes ($n=20, 50, 100, 200$), each with 1000 replications, were simulated.

We consider four implementations of $\hat\beta^\text{SR}$ and $\hat\beta^\text{GR}$ (i.e., Algorithms \ref{algo:estimator-dr} and \ref{algo:estimator-improved}) that differ by their ways of fitting the working models for $\hat{r}_t(S_t)$, $\hat{m}_t(S_t)$, $\hat{\psi}_t(S_t)$, and $\hat\mu_t(H_t, A_t)$ as listed in Table \ref{tab:simulation S1 implementation}. Correctly specified models are shown in green, and misspecified models in red. Misspecified nuisance models omit some covariates from the corresponding truth in \eqref{eq: simulation S1 true nuisance functions}. All nuisance models are fitted with generalized additive models (GAMs) with spline bases for the included covariates, using the \texttt{gam} function in the \texttt{mgcv} package in \textsf{R} \citep{wood2017generalized}. Theorem \ref{thm: CAN of beta SR} implies that $\hat{\beta}^{\mathrm{SR}}$ is consistant under Implementations A, B, and C. Theorem \ref{thm: CAN of beta GR} implies that $\hat{\beta}^{\mathrm{GR}}$ is consistant under Implementations A and C.

Additionally, we simulate two competitor methods: a logistic generalized estimating equation (GEE) and a logistic GAM. To make fair comparisons, we correctly specify the mean model for GEE and GAM in Implementation A, and misspecify the mean model for GEE and GAM by leaving out covariates $(t, A_t t)$ under Implementations B, C, and D, matching the models for $\mu_t$ in Table \ref{tab:simulation S1 implementation}. 

\begin{table}[tb]
    \spacingset{1.5}
    \scriptsize
    \centering
    \begin{tabular}{c|cccc}
    \hline
    \textbf{Implementation} & Model for $r_t$ & Model for $m_t$ & Model for $\psi_t$ & Model for $\mu_t$  \\
    \hline
    A & \cellcolor{NatureLightGreen} $s(t) + s(X_t)$
      & \cellcolor{NatureLightGreen} $s(t) + s(X_t)$
      & \cellcolor{NatureLightGreen} $s(t) + s(X_t)$
      & \cellcolor{NatureLightGreen} $s(t) + s(X_t) + A_ts(t) + A_ts(X_t)$ \\

    B & \cellcolor{NatureMidRed} $s(X_t)$
      & \cellcolor{NatureLightGreen} $s(t) + s(X_t)$
      & \cellcolor{NatureMidRed} $s(X_t)$
      & \cellcolor{NatureMidRed} $s(X_t) + A_ts(X_t)$ \\

    C & \cellcolor{NatureLightGreen} $s(t) + s(X_t)$
      & \cellcolor{NatureMidRed} $s(X_t)$
      & \cellcolor{NatureLightGreen} $s(t) + s(X_t)$
      & \cellcolor{NatureMidRed} $ s(X_t) + A_ts(X_t)$ \\

    D & \cellcolor{NatureMidRed} $s(X_t)$
      & \cellcolor{NatureMidRed} $s(X_t)$
      & \cellcolor{NatureMidRed} $s(X_t)$
      & \cellcolor{NatureMidRed} $ s(X_t) + A_ts(X_t)$ \\
      \hline
    \end{tabular}
    \caption{\footnotesize  Four implementations in Simulation S1 that differ in the working models for the nuisance parameters. Expressions like $s(t)$ denote generalized additive models with penalized spline terms for $t$ with $\logit$ link function. Cells are colored green or red to indicate correctly specified or misspecified models.}
    \label{tab:simulation S1 implementation}
\end{table}

Figures~\ref{fig: simulation result of beta0} and \ref{fig: simulation result of beta1} show the bias, mean squared error (MSE), and coverage probability (CP) for $\beta_0^\star$ and $\beta_1^\star$ under the four implementations. As sample size increases, the bias and MSE of $\hat{\beta}_0^{\mathrm{SR}}$ and $\hat{\beta}_1^{\mathrm{SR}}$ decrease under Implementations A, B, and C, and the coverage probability of 95\% confidence interval is close to the nominal level (green solid lines). $\hat{\beta}_0^{\mathrm{GR}}$ and $\hat{\beta}_1^{\mathrm{GR}}$ have decreasing bias and MSE, and close-to-nominal coverage probability under Implementations A and C (blue dashed lines). On the other hand, the logistic GAM is biased under Implementations B, C, and D, i.e., whenever the mean model is misspecified (orange dot-dashed lines). The logistic GEE is biased under all implementations (magenta dotted lines).

\begin{figure}[ht]
    \centering
    \includegraphics[width=0.9\linewidth]{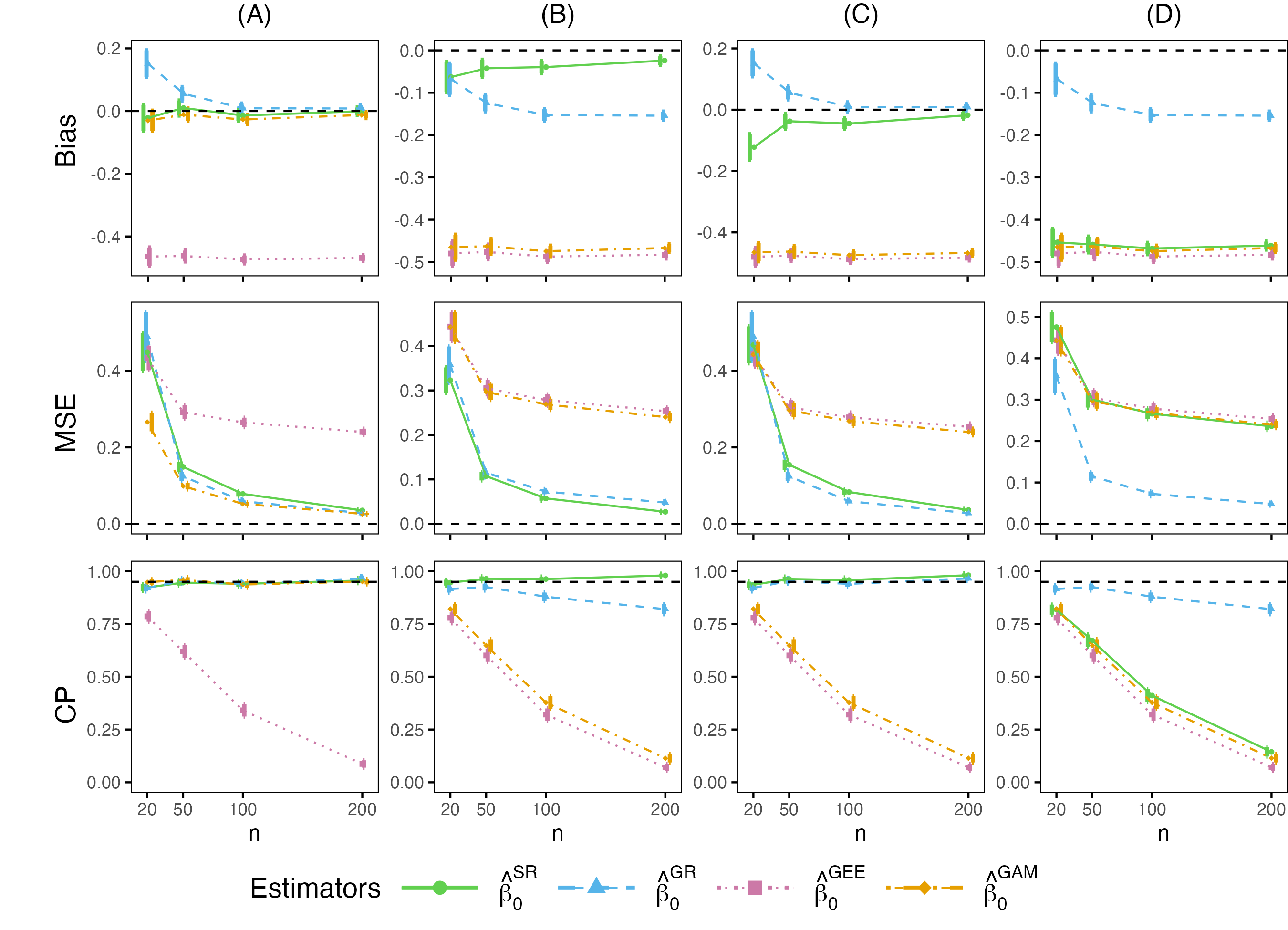}
    \caption{Bias, mean squared error, and coverage probability of $\hat{\beta}_0^\mathrm{SR}$, $\hat{\beta}_0^\mathrm{GR}$, $\hat{\beta}_0^\text{GEE}$, $\hat{\beta}_0^\text{GAM}$ in simulations under Simple Randomization (Section \ref{subsec: S1}). A, B, C, and D represent the implementations specified in Table \ref{tab:simulation S1 implementation}. The vertical bars represent their correponding 95\% confidence intervals.}
    \label{fig: simulation result of beta0}
\end{figure}

\begin{figure}[ht]
    \centering
    \includegraphics[width=0.9\linewidth]{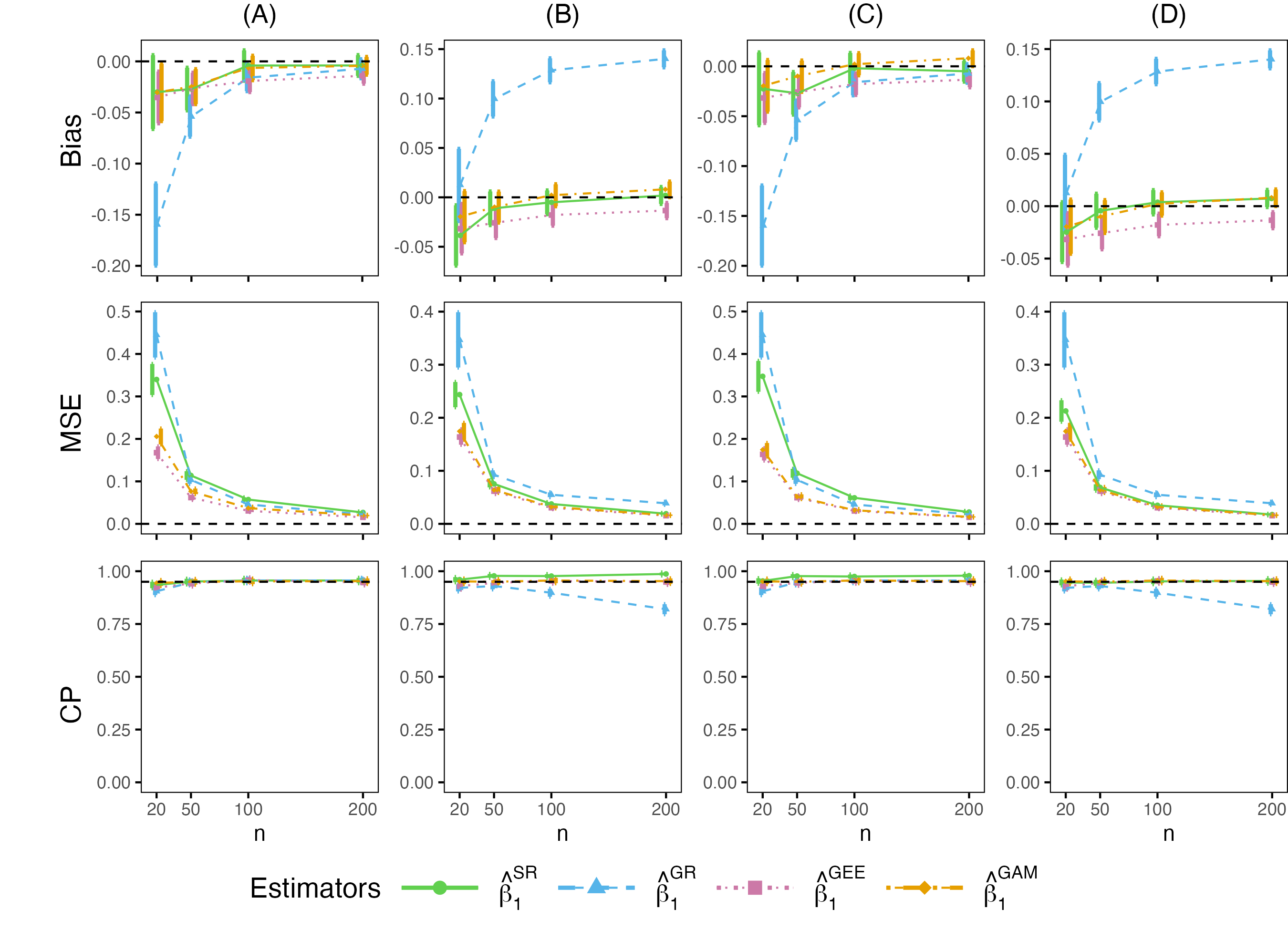}
    \caption{bias, mean squared error, and coverage probability of $\hat{\beta}_1^\mathrm{SR}$, $\hat{\beta}_1^\mathrm{GR}$, $\hat{\beta}_1^\text{GEE}$, $\hat{\beta}_1^\text{GAM}$ in simulations under simple randomization setting (Section \ref{subsec: S1}). A, B, C, and D represent the implementations specified in Table \ref{tab:simulation S1 implementation}. The vertical bars represent their correponding 95\% confidence intervals.}
    \label{fig: simulation result of beta1}
\end{figure}

\subsection{Estimators' performance under General Randomization}
\label{subsec: S2}

The total number of decision points per individual is $T = 20$. The time-varying covariate is generated as $X_t \sim \text{Uniform}(0,2)$. The treatment variable $A_t$ follows a Bernoulli distribution with success probability
\[
\Pr(A_t = 1 \mid X_t) = \expit\{2 - 2 (X_t - 1)\}.
\]
The outcome $Y_t$ is generated from a Bernoulli distribution with success probability $\mu_t^\star(H_t, A_t)$, where $\mu_t^\star(H_t, 1)$ and $\mu_t^\star(H_t, 0)$ can take one of the three forms: linear, where $\mu_t^\star(H_t, 1) = 0.8 - 0.3X_t + 0.1t/T$, $\mu_t^\star(H_t, 0) = 0.1 + 0.3X_t + 0.1t/T$; simple nonlinear, where $\mu_t^\star(H_t, 1) = 0.4 + 0.3 q_{2,2}(X_t/2) - 0.1q_{2,2}(t/T)$, $\mu_t^\star(H_t, 0) = 0.7 -0.4 q_{2,2}(X_t/2) + 0.1q_{2,2}(t/T)$; periodic, where $\mu_t^\star(H_t, 1) = 0.6 + 0.1 \sin(3X_t) - 0.1\sin(t)$, $\mu_t^\star(H_t, 1) = 0.45 + 0.1 \sin(3X_t) + 0.05 \sin(t)$; We set $S_t = \emptyset$, and we numerically compute the true marginal causal excursion effects under each outcome generating models: 0.40, 0.57, or 0.81, respectively, when $\mu_t^\star(H_t, A_t)$ takes the linear, simple nonlinear, or periodic form. 
% \begin{align*}
%     \beta_0^\star &= \logit \mathbb{E} \left\{ \mathbb{E}(Y_t | H_t, A_t = 1) \right\} - \logit \mathbb{E} \left\{ \mathbb{E}(Y_t | H_t, A_t = 0) \right\} \\
%     & = \begin{cases}
%         0.40 & \text{when $\mu_t^\star(H_t, A_t)$ takes linear form}, \\
%        0.57 & \text{when $\mu_t^\star(H_t, A_t)$ takes simple nonlinear form}, \\
%        0.81 & \text{when $\mu_t^\star(H_t, A_t)$ takes periodic form}.
%     \end{cases}
% \end{align*}

We specify the models for $\hat{r}_t(S_t)$, $\hat{m}_t(S_t)$, and $\psi_t(S_t)$ to only include $t$, and the model for $\hat\mu_t(H_t, A_t)$ to include $(t, A_t t)$. They are fitted using the \texttt{gam} function from the \texttt{mgcv} package in \textsf{R}. For comparison, we also implement logistic GEE and logistic GAM, with the mean model specified to include the covariates $(A_t, t)$.

Figure~\ref{fig: simulation result S2} presents the simulation results, where each column corresponds to one of three different outcome generating models. Among all four estimators, only $\hat{\beta}_0^{\mathrm{GR}}$ exhibits a decrease in both bias and MSE when the sample size increases, and its 95\% confidence interval achieves a coverage probability close to the nominal level (blue dashed lines). This verifies Theorem \ref{thm: CAN of beta GR}. As expected, none of the other estimators are consistent in these scenarios.

\begin{figure}[htbp]
    \centering
    \includegraphics[width=0.8\linewidth]{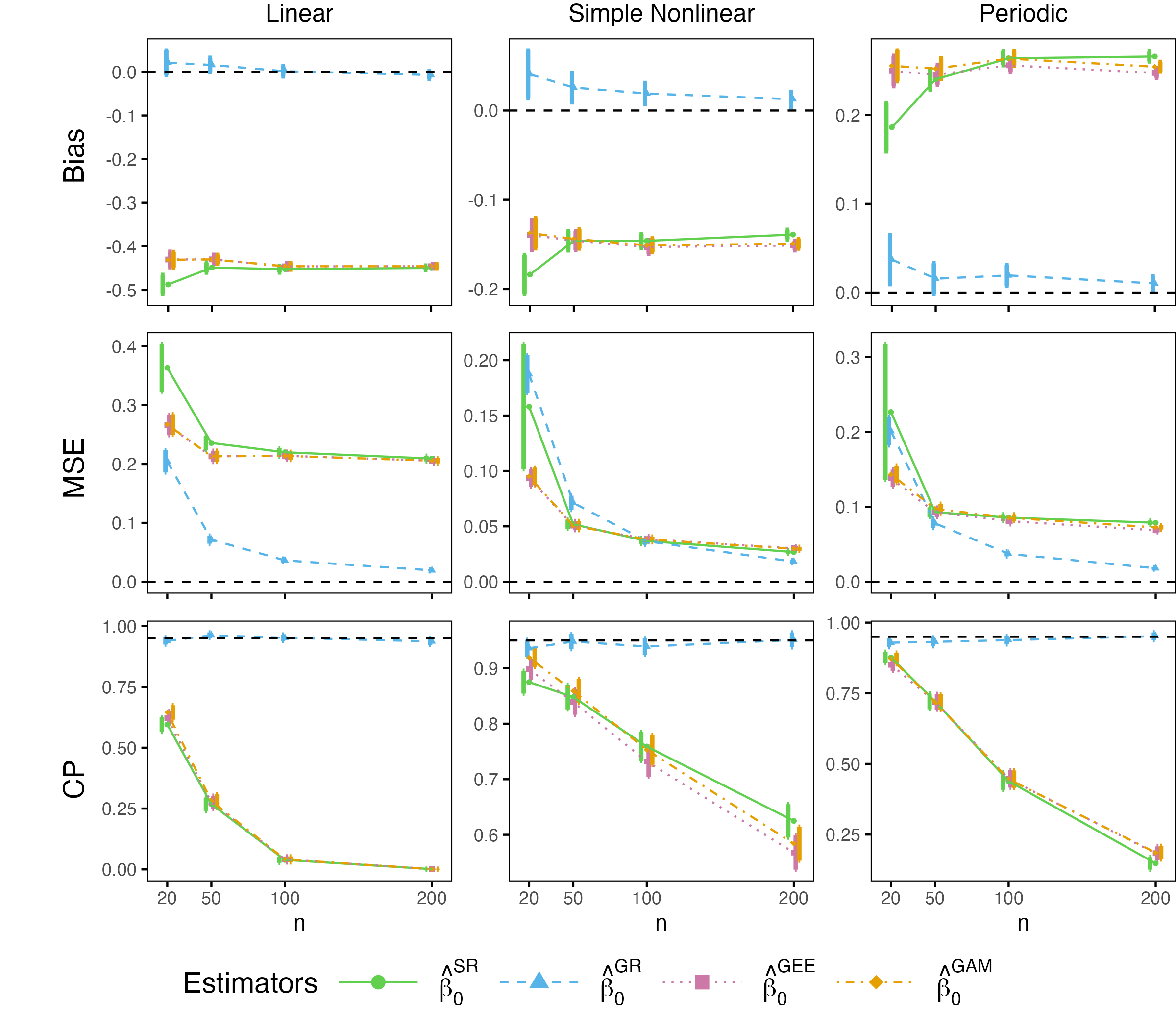}
    \caption{Bias, mean squared error, and coverage probability of $\hat{\beta}_0^\mathrm{SR}$, $\hat{\beta}_0^\mathrm{GR}$, $\hat{\beta}_0^\text{GEE}$, $\hat{\beta}_0^\text{GAM}$ in simulation described in Section \ref{subsec: S2}. The vertical bars represent their correponding 95\% confidence intervals.}
    \label{fig: simulation result S2}
\end{figure}

\section{Application: Drink Less Data}
\label{sec: application}
Drink Less is a smartphone app that aimed to help users reduce harmful alcohol consumption \citep{garnett2019development}. We analyze data from the Drink Less MRT, which evaluated the effect of push notifications on app engagement \citep{bell2020notifications}. 349 participants were randomized daily at 8 p.m. for 30 days. At each decision point, a participant had a 0.6 probability of receiving a push notification prompting the user to open the app and record their daily drinks and a 0.4 probability of receiving no notifications. The proximal outcome is near-term engagement, defined as an indicator of whether the participant opens the app in the hour following the notification (8 p.m. to 9 p.m.).

%Researchers found that receiving a notification, compared with not receiving a notification, increased the probability of opening the app substantially. 

We conducted one marginal analysis ($S_t = \emptyset$) and three moderation analyses, with the moderator being the decision time index, an indicator for app use before 8 p.m. on that day, and an indicator for receiving a notification on the previous day as a proxy for treatment burden, respectively. In each analysis, we applied both the SR estimator and the GR estimator, with all nuisance models fitted using generalized additive models. Specifically, for the marginal analysis, we included a penalized spline term of the decision point index in the nuisance models. For latter two moderation analyses, we additionally included the respective binary moderator.

\begin{table}[ht]
\centering
\small
\begin{tabular}{l cc c cc}
\hline
Moderator & $\hat{\beta}_0^\text{SR}$ & $\hat{\beta}_1^\text{SR}$ & &  $\hat{\beta}_0^\text{GR}$ & $\hat{\beta}_1^\text{GR}$ \\ 
\hline
None & $1.36 \ (1.18, 1.54)$ & --- &  & $1.36 \ (1.18, 1.54)$ & --- \\
Decision point & $1.52 \ (1.20, 1.83)$ & $-0.01 \ (-0.03, 0.01)$ & & $1.52 \ (1.20, 1.84)$ & $-0.01 \ (-0.03, 0.01)$ \\
Use before 8pm & $1.40 \ (1.15, 1.64)$ & $0.02 \ (-0.34, 0.38)$ &  & $1.39 \ (1.15, 1.64)$ & $0.04 \ (-0.33, 0.41)$ \\
Prior-day notification & $1.36 \ (1.07, 1.65)$ & $-0.01 \ (-0.38, 0.36)$ & &  $1.37 \ (1.08, 1.66)$ & $-0.02 \ (-0.38, 0.35)$ \\
\hline
\end{tabular}

\caption{Estimated marginal and moderated CEE using the SR and GR estimators for the Drink Less MRT. The coefficients and the 95\% confidence intervals (in parentheses) are on the log odds ratio scale.}
\label{tab: DA results}
\end{table}

Table \ref{tab: DA results} shows the estimated CEE parameters with their 95\% confidence intervals. When the moderator is None ($S_t = \emptyset$), $\hat{\beta}_0^\text{SR}$ and $\hat{\beta}_0^\text{GR}$ represent the estimated marginal effect using the proposed SR and GR estimators. When a specific moderator is included, $\hat{\beta}_0^\text{SR}$ and $\hat{\beta}_1^\text{GR}$ represent the intercept in the CEE model, whereas $\hat{\beta}_1^\text{SR}$ and $\hat{\beta}_1^\text{GR}$ represent the slope. Push notifications significantly increase the rate of near-term engagement, with a log odds ratio of 1.36 (95\% confidence interval [1.18, 1.54]), which translates to an odds ratio of 3.89 [3.25, 4.66].
% and is qualitatively similar to the relative risk effect estimates for the same study \citep{bell2023notifications}.
We didn't detect significant effect moderation by any of the three moderators. The numeric similarity between the two estimators is likely due to the constant randomization probability in the Drink Less MRT. Section I of Supplementary Material contains estimates of OR, RR, and RD measures, and the results are qualitatively in the same direction.

\section{Discussion}
\label{sec: discussion}

In this paper, we developed two new methods for estimating the causal excursion effect for binary outcomes on the odds ratio scale. The first is a doubly robust estimator that applies when the treatment is randomized based solely on a prespecified set of covariates. This setting, referred to as Simple Randomization, means that the randomization probability depends only on variables included in the moderator set $S_t$, which is selected by the analyst at the analysis stage. The second estimator is designed for more general scenarios in which the randomization mechanism may depend on variables outside of $S_t$. In such cases, we introduce an alternative estimator that incorporates an association model to adjust for the additional complexity in treatment assignment.

From a practical standpoint, the choice between these two estimators depends on how the randomization mechanism relates to the analyst’s choice of moderators. If the randomization probability depends only on variables included in $S_t$, the doubly robust estimator should be used, as it offers protection against misspecification of either the outcome model or the treatment model. This includes common cases in MRTs where randomization is constant over time or depends on a subset of observed covariates. However, if the analyst is interested in estimating a marginal effect (e.g., setting $S_t = \emptyset$) but randomization depends on history information, or if interested in estimating effect moderation by some $S_t$ but the randomization probability further depends on other variables in $H_t$, then Simple Randomization condition is violated. In these settings, the general estimator with the association model is recommended.

There are several directions for future research. One direction for future work is to incorporate nonparametric model in the causal excursion effect model. While our approach employs nonparametric methods for estimating nuisance functions, the causal excursion effect model remains parametric for interpretability of low-dimensional models. Future work could explore nonparametric specifications of the causal excursion effect, allowing for greater flexibility in capturing complex relationships between covariates and treatment effects. Another important direction for future research is to develop a corresponding sample size calculator for detecting a pre-specified differential effect between treatment options. Existing work for binary outcome has focused on relative risk \citep{cohn2023sample}, and extending these to causal excursion odds ratio would be highly valuable for planning micro-randomized trials. Finally, future work could explore improving efficiency of the casual excursion odds rario estimator, for example, by incorporating information from auxiliary variables \citep{shi2025incorporating}.

\section*{Acknowledgements}

We thank Dr. Claire Garnett, Dr. Olga Perski, Dr. Henry W.W. Potts, and Dr. Elizabeth Williamson, and Dr. Lauren Bell for their important contributions to the Drink Less MRT.

\section*{Supplementary Material}
Supplementary material includes theoretical proofs, including identifiability result and proofs of Theorem \ref{thm: CAN of beta SR}, Theorem \ref{thm: CAN of beta GR}, and Theorem \ref{thm: robustness of GR under null}. We also present details on the generative model of simulation, addition simulation results, and illustrations of different measures for binary outcome.  The code for replicating the simulations and data analysis can be accessed at \url{https://github.com/jiaxin4/logisticCEE}.

% \bibliographystyle{agsm}

% \bibliography{mhealth-ref}

% \end{document}

% Appendix
\clearpage
\appendix
\part*{Supplementary Material}
\addcontentsline{toc}{part}{Appendix}
\section{Illustrative Examples of Discrepant Effect Moderation Directions}
\label{supp-sec: discrepancy} 
For binary outcomes, various effect moderation measures exist, including odds ratios (ORs), risk ratios (RRs), and risk differences (RDs). It is known that they can disagree on the \emph{direction} of effect moderation \citep{brumback2008effect}. In the following, we use simple numerical examples to illustrate this: identity, log, and logit. Consider a simple setting with $T=1$, in which case the covariate-treatment-outcome trio is denoted by $(X,A,Y)$ and the causal excursion effects become standard causal effects. The binary covariate $X$ is generated from $\text{Bernoulli}(0.6)$, and the treatment variable $A$ is generated using $\text{Bernoulli}(0.4)$. The binary response satisfies
\begin{align}
	\text{Pr}(Y = 1|A = 1, X) &= a_1 + b_1 X \nonumber\\
	\text{Pr}(Y = 1|A = 0, X) &= a_0 + b_0 X, \label{eq: supp-A generative model}
\end{align}
where $a_1, b_1, a_0, b_0$ are constants $\in (0,1)$ and $a_i + b_i < 1$ for $i = 0, 1$. We consider three causal excursion effects with different link functions:
\begin{align}
	\text{CEE}_{\text{RD}}(X) &= \PP(Y|A = 1, X) - \PP(Y|A = 0, X) = \beta_0^\text{RD} + \beta_1^\text{RD} X, \nonumber \\
	\text{CEE}_{\text{RR}}(X) &= \log \frac{\PP(Y|A = 1, X)}{\PP(Y|A = 0, X)}  = \beta_0^\text{RR} + \beta_1^\text{RR} X \nonumber \\
	\text{CEE}_{\text{OR}}(X) &= \logit{\PP(Y|A = 1, X)} - \logit{\PP(Y|A = 0, X)} = \beta_0^\text{OR} + \beta_1^\text{OR} X,
\label{eq: definition of three link functions}
\end{align}
where $\beta_0^\text{RD}, \beta_0^\text{RR}, \beta_0^\text{OR}$ are intercepts, and $\beta_1^\text{RD}, \beta_1^\text{RR}, \beta_1^\text{OR}$ are the slopes capturing the direction and magnitude of effect moderation by $X$. In particular, $\text{CEE}_{\text{RD}}(X)$ represents causal risk difference in \citet{boruvka2018}, $\text{CEE}_{\text{RR}}(X)$ represents causal risk ratio in \citet{qian2021estimating}, and $\text{CEE}_{\text{OR}}(X)$ represents the causal odds ratio in this paper. We say the effect moderation measures defined in \eqref{eq: definition of three link functions} disagree in direction if one of the effect moderation coefficients $\{\beta_1^\text{RD}, \beta_1^\text{RR}, \beta_1^\text{OR}\}$ has the opposite sign to the other two.

Under the generative model \eqref{eq: supp-A generative model}, the intercepts and the slopes in \eqref{eq: definition of three link functions} can be written in terms of $a_1, b_1, a_0, b_0$:
\begin{align*}
	\text{CEE}_{\text{RD}}(X) &= (a_1 - a_0) + (b_1 - b_0) X, \\
	\text{CEE}_{\text{RR}}(X) &= \log \frac{a_1 + b_1 X}{a_0 + b_0 X} = \log\frac{a_1}{a_0} + \log\frac{a_0(a_1 + b_1)}{a_1(a_0 + b_0)} X, \\
	\text{CEE}_{\text{OR}}(X) &= \logit{(a_1 + b_1 X)} - \logit{(a_0 + b_0 X)} \\ 
	&= \logit{(a_1)} - \logit{(a_0)} + \{\logit{(a_1 + b_1)} - \logit{(a_0 + b_0)} - \logit{(a_1)} + \logit{(a_0)}\}X.
\end{align*}
In the following examples, we set different values of $a_1, b_1, a_0, b_0$ to provide examples where the effect moderation measures disagree in direction. The conditional success probabilities of $Y$ for each example are displayed in Table \ref{tab:conditional probability of Y} to show that none of the examples are extreme or pathological in terms of the success probabilities.

\begin{exam} ($\beta_1^\text{RD}$ disagrees in direction with $\beta_1^\text{RR}$ and $\beta_1^\text{OR}$)
	When $a_1 = 0.48, b_1 = 0.36, a_0 = 0.10, $and $b_0 = 0.32$, the effect moderations for each causal effects defined in \eqref{eq: definition of three link functions} can be calculated as
	\begin{align*}
		\beta_1^\text{RD} = 0.04, \beta_1^\text{RR} \approx -0.875, \beta_1^\text{OR} \approx -0.136.
	\end{align*}
\label{exam: identity is different}
The effect at each strata of $X = 1$ and $X = 0$ can be found in Table \ref{tab:cee_examples}. 
\end{exam}

\begin{exam} ($\beta_1^\text{RR}$ disagrees in direction with $\beta_1^\text{RD}$ and $\beta_1^\text{OR}$)
	When $a_1 = 0.71, b_1 = 0.14, a_0 = 0.09, $and $b_0 = 0.05$, the effect moderations for each causal effects defined in \eqref{eq: definition of three link functions} can be calculated as
	\begin{align*}
		\beta_1^\text{RD} = 0.09, \beta_1^\text{RR} \approx -0.26, \beta_1^\text{OR} \approx 0.34.
	\end{align*}
\label{exam: log is different}
\end{exam}

\begin{exam} ($\beta_1^\text{OR}$ disagrees in direction with $\beta_1^\text{RD}$ and $\beta_1^\text{RR}$)
	When $a_1 = 0.89, b_1 = 0.08, a_0 = 0.59, $and $b_0 = 0.15$, the effect moderations for each causal effects defined in \eqref{eq: definition of three link functions} can be calculated as
	\begin{align*}
		\beta_1^\text{RD} = -0.07, \beta_1^\text{RR} \approx -0.14, \beta_1^\text{OR} \approx 0.70.
	\end{align*}
\label{exam: logit is different}
\end{exam}

\begin{table}[ht]
\centering
\begin{tabular}{ccccc}
\toprule
Example & Strata & RD within Strata & log RR within Strata & log OR within Strata \\
\midrule
\multirow{2}{*}{A.1}
  & X = 0 & 0.38 & 1.57 & 2.12 \\
  & X = 1 & 0.42 & 0.69 & 1.98 \\
\addlinespace
\multirow{2}{*}{A.2}
  & X = 0 & 0.62 & 2.07 & 3.21 \\
  & X = 1 & 0.71 & 1.80 & 3.55 \\
\addlinespace
\multirow{2}{*}{A.3}
  & X = 0 & 0.30 & 0.41 & 1.73 \\
  & X = 1 & 0.23 & 0.27 & 2.43 \\
\addlinespace
\multirow{2}{*}{A.4}
  & X = 0 & -0.25 & -0.98 & -1.33 \\
  & X = 1 & -0.50 & -0.98 & -2.23 \\
 \addlinespace
\multirow{2}{*}{A.5}
  & X = 0 & 0.20 & 0.41 & 0.81 \\
  & X = 1 & 0.10 & 0.12 & 0.81 \\
% repeat blocks like above for however many examples you have
\bottomrule
\end{tabular}
\caption{Causal excursion risk difference (RD), relative risk (RR), and odds ratio (OR) within strata \(X=0\) and \(X=1\) for each example in Section \ref{supp-sec: discrepancy}}
\label{tab:cee_examples}
\end{table}

\begin{exam} ($\beta_1^\text{RR} = 0, \beta_1^\text{OR} \neq 0$)
	When $a_1 = 0.15, b_1 = 0.15, a_0 = 0.4, $and $b_0 = 0.4$, the effect moderations for relative risk and odds ratio can be calculated as
	\begin{align*}
		\beta_1^\text{RR} = 0, \beta_1^\text{OR} \approx -0.90.
	\end{align*}
\label{exam: log is 0 but logit is not}
\end{exam}

\begin{exam} ($\beta_1^\text{RR} \neq 0, \beta_1^\text{OR} = 0$)
	When $a_1 = 0.6, b_1 = 0.3, a_0 = 0.4, $and $b_0 = 0.4$, the effect moderations for relative risk and odds ratio can be calculated as
	\begin{align*}
		\beta_1^\text{RR} \approx -0.29, \beta_1^\text{OR} \approx 0.
	\end{align*}
\label{exam: logit is 0 but log is not}
\end{exam}

\begin{table}[ht]
\centering
\scriptsize
\begin{tabular}{ccccc}
\toprule
Example & $\Pr(Y=1|A=1, X=1)$ & $\Pr(Y=1|A=1, X=0)$ & $\Pr(Y=1|A=0, X=1)$ & $\Pr(Y=1|A=0, X=0)$  \\
\midrule
A.1 & 0.84	&0.48	&0.42	&0.10 \\ 
A.2 & 0.85	&0.71	&0.14	&0.09 \\
A.3 & 0.97	& 0.89	&0.74	&0.59 \\
A.4 & 0.3	&0.15	&0.8	&0.4 \\
A.5 & 0.9	&0.6	&0.8	&0.4 \\
\bottomrule
\end{tabular}
\caption{The conditional probability of $Y$ for each example in Section \ref{supp-sec: discrepancy}}
\label{tab:conditional probability of Y}
\end{table}

These examples make clear that relying on a single link can mask or even reverse the pattern of effect moderation. In particular, Example \ref{exam: identity is different}, Example \ref{exam: log is different}, and Example \ref{exam: logit is different} show that these measures can yield opposite signs of effect moderation. Example \ref{exam: log is 0 but logit is not} and Example \ref{exam: logit is 0 but log is not} demonstrate that one measure can indicate zero moderation while another measure shows a nonzero effect. Thus, we recommend that researchers estimate and report all measures to achieve a more complete understanding of the effect, and this work for estimating causal excursion effects with a logit link supplements existing literature on identity and log links.

\section{Details of Simulations}
\label{supp-sec:details_of_simulations}

\subsection{Data generating mechanisms}
For each individual, time-varying covariate \( X_t \) is exogenous (independent of past history) and was generated using a uniform distribution: \( X_t \sim \text{Unif}(0,2) \). We jointly generated binary variables $A_t, Y_{t,1}$ given $X_t$ according to the probabilities proportional to the entries in the following table:
\[
\begin{array}{|c|c|c|}
	\hline
	& A_t  = 0 & A_t = 1 \\ 
	\hline
	Y_{t, 1} = 0 & 1 & e^{\alpha_0 + h_1(X_t)} \\ 
	Y_{t, 1} = 1 & e^{\alpha_1 + h_2(X_t)} & e^{\beta_0^\star + \beta_1^\star X_t + \alpha_0 + \alpha_1 + h_1(X_t) + h_2(X_t)} \\
		\hline
\end{array}
\]
Here, $\beta_0, \beta_1, \alpha_0, \alpha_1$ are true parameter values, and $h_1(X_t),h_2(X_t)$ are functions in $X_t$ which follow one of three patterns: linear, simple nonlinear, or periodic in $X_t$ and $t$; the detailed functional forms are presented in Table \ref{tab: data generating function for h1 and h2}.
\begin{table}[h]
	\centering
	\begin{tabular}{c c c c} 
		\toprule
		&  $g(\lambda_0, \lambda_1, \lambda_2)$  & $h_1(X_t)$ & $h_2(X_t)$ \\
		\midrule
		Linear &  $\lambda_0 + \lambda_1 X_t +  \lambda_2 t$  & $g(-1, 1, -0.1)$ & $g(0.5, 0.2, -0.1)$\\
		Simple nonlinear & $\lambda_0 + \lambda_1\{q_{2,2}(X_t/2) + \lambda_2\{q_{2,2}(t/T)\}$ & $g(-0.5, 1.1, -1.2)$& $g(-0.6, -0.4, 0.9)$ \\
		Periodic & $\lambda_0 + \lambda_1\sin(X_t) + \lambda_2\sin(t) \}$  & $g(-0.5, 0.8, -0.8)$& $g(-0.2, -0.4, 1)$ \\
		\bottomrule
	\end{tabular}
	\caption{The generating functions are defined through a $g(\cdot, \cdot, \cdot)$ function, which is defined in the last column. $q_{2,2}(\cdot)$ denotes the density function of Beta(2,2) distribution}
		\label{tab: data generating function for h1 and h2}
\end{table}

To find the causal excursion effect model and nuisance function model at truth under such data generating mechanism, one can calculate 
\begin{align*}
	 P(Y_{t,1 } = 1|X_{t},A_{t} = 1) &= \frac{P(Y_{t,1} = 1, A_{t} = 1||X_{t})}{P(A_t = 1|X_t)} \\
	&= \frac{e^{\beta_0^\star + \beta_1^\star X_t+ \alpha_0 + \alpha_1 + h_1(X_t) + h_2(X_t)}}{e^{\alpha_0 + h_1(X_t)} + e^{\beta_0^\star + \beta_1^\star X_t + \alpha_0 + \alpha_1 + h_1(X_t) + h_2(X_t)}} \\
	& = \expit{\{\beta_0^\star + \beta_1^\star X_t + \alpha_1 + h_2(X_t)\}},
\end{align*}
and similarly one can obtain $P(Y_{t,1 } = 1|X_{t},A_{t} = 0) = \expit{\{\alpha_1 + h_2(X_t)\}}$. Thus, the nuisance function $r_t(S_t)$ under the truth is
\begin{align*}
	r_t^\star(S_t) = \logit{\{P(Y_{t,1 } = 1|X_{t},A_{t})\}} = (\beta_0^\star + \beta_1^\star X_t) A_t + \alpha_1 + h_2(X_t).
\end{align*}
To find $m^\star_t(S_t)$, we can calculate
\begin{align*}
	P(A_t = 1|X_{t},Y_{t,1 } = 0) &= \frac{P(A_t = 1,Y_{t,1 } = 0|X_{t})}{P(Y_{t,1 } = 0|X_t)} \\
	&= \frac{e^{\alpha_0 + h_1(X_t)}}{1 + e^{\alpha_0 + h_1(X_t)}} \\
	&= \expit{ \{ \alpha_0 + h_1(X_t) \}},
\end{align*}
thus the nuisance function $m_t(S_t)$ under the truth is
\begin{align*}
	m^\star_t(S_t) = P(A_t = 1|X_{t},Y_{t,1 } = 0) = \expit{ \{ \alpha_0 + h_1(X_t) \}}.
\end{align*}
The causal excusion effect model given $S_t = X_t$ in this case is
\begin{align*}
 	\text{CEE} (t;X_{t}) &= \text{logit} \{P(Y_{t,1} = 1 |X_t,A_t =1) \}-  \text{logit} \{P(Y_{t,1} = 1 |X_t,A_t =0)\} \\
 	&=\{ \beta_0^\star + \beta_1^\star X_t +  \alpha_1 + h_2(X_t) \}- \{ \alpha_1 + h_2(X_t)\} \\
 	& = \beta_0^\star + \beta_1^\star X_t
\end{align*}

\subsection{Additional simulation results}
\label{supp-sec:additional_simulation_results}
 In the simulation section of the paper, we presented the performance of our estimators with simple nonlinear generating functions under scenario 1 (when the randomization probability only depends at most on $S_t$). In this section, we focus on linear and periodic generating functions. The data generating machanism is same as the main paper except that the generating functions for $h_1(X_t)$ and $h_2(X_t)$. We set the total number of decision points per individual to \( T = 20 \). For each individual, the time-varying covariate \( X_t \) is exogenous (independent of past history) and was generated using a uniform distribution: \( X_t \sim \text{Unif}(0,2) \). Then the binary variables $(A_t,Y_{t, 1})$ given $X_t$ are generated jointly using Bernoulli distribution such that the following mechanisms are obtained,
\begin{align*}
    \Pr(Y_{t,1 } = 1|X_{t},A_{t}) &= \expit{\{(\beta_0^\star + \beta_1^\star X_t) A_t + \alpha_1^\star + h_2(X_t)\}}, \\
   \Pr(A_t = 1|X_{t},Y_{t,1 } = 0) &= \expit{ \{ \alpha_0^\star + h_1(X_t) \}}.
\end{align*}
where $\beta_0^\star = 1, \beta^\star_1 = -0.9, \alpha_0^\star = 0.25, \alpha_1^\star = -0.25$; $h_1(X_t),h_2(X_t)$ are functions in $X_t$ which can one of the two forms in $X_t$ and $t$: linear, where $h_1(X_t) = -1 + X_t - 0.1t$ and $h_2(X_t) = 0.5 + 0.2 X_t - 0.1t$; periodic, where $h_1(X_t) = -0.5 + 0.8\sin(X_t) - 0.8\sin(t)$ and $h_2(X_t) = -0.2 - 0.4\sin(X_t) + 1\sin(t)$. In addition, the causal excursion effect is $\cee_{\bp, \Delta = 1} (t;S_{t}) = \beta_0^\star + \beta_1^\star X_t$. For each simulation setup, we used four sample sizes: 20, 50, 100, 200, each with 1000 replications.

We considered four implementations that differed in the ways of the model fit for nuisance parameters. Each implementation specifies four generalized additive models with splines of specified variables, fitted using the \texttt{gam} function from the \texttt{mgcv} package in R \citep{wood2017generalized}.  Details can be found in Table \ref{supp-tab:simulation S1 implementation}.
\begin{table}[tb]
    \scriptsize
    \centering
    \begin{tabular}{c|cccc}
    \hline
    \textbf{Implementation} & Model for $r_t$ & Model for $m_t$ & Model for $\psi_t$ & Model for $\mu_t$  \\
    \hline
    1 & \cellcolor{NatureLightGreen} $s(t) + s(X_t)$
      & \cellcolor{NatureLightGreen} $s(t) + s(X_t)$
      & \cellcolor{NatureLightGreen} $s(t) + s(X_t) + A_ts(t) + A_ts(X_t)$
      & \cellcolor{NatureLightGreen} $s(t) + s(X_t)$ \\

    2 & \cellcolor{NatureMidRed} $s(X_t)$
      & \cellcolor{NatureLightGreen} $s(t) + s(X_t)$
      & \cellcolor{NatureMidRed} $s(X_t) + A_ts(X_t)$
      & \cellcolor{NatureMidRed} $s(X_t)$ \\

    3 & \cellcolor{NatureLightGreen} $s(t) + s(X_t)$
      & \cellcolor{NatureMidRed} $s(X_t)$
      & \cellcolor{NatureLightGreen} $s(t) + s(X_t) + A_ts(t) + A_ts(X_t)$
      & \cellcolor{NatureMidRed} $s(X_t)$ \\
    4 & \cellcolor{NatureMidRed} $s(X_t)$
      & \cellcolor{NatureMidRed} $s(X_t)$
      & \cellcolor{NatureMidRed} $s(X_t) + A_ts(X_t)$
      & \cellcolor{NatureMidRed} $s(X_t)$ \\
      \hline
    \end{tabular}
    \caption{\footnotesize  Four implementations in Section \ref{supp-sec:additional_simulation_results} that differ in how nuisance parameters are estimated. Expressions like $s(t)$ denote generalized additive models with penalized spline terms for $t$ with $\logit$ link function. Cells are colored green or red to indicate correctly specified or misspecified models.}
    \label{supp-tab:simulation S1 implementation}
\end{table}

Additionally, we compare our estimators to two benchmark methods: a generalized estimating equation (GEE) with a logit link and a logistic GAM with a logit link. To make fair comparisons, we correctly specify the mean model for GEE and GAM in Implementation (A), and misspecify the mean model for GEE and GAM by leaving out covariates $(t, A_t t)$ in in Implementation (B), (C), (D).

\begin{figure}[ht]
    \centering
    \includegraphics[width=0.9\linewidth]{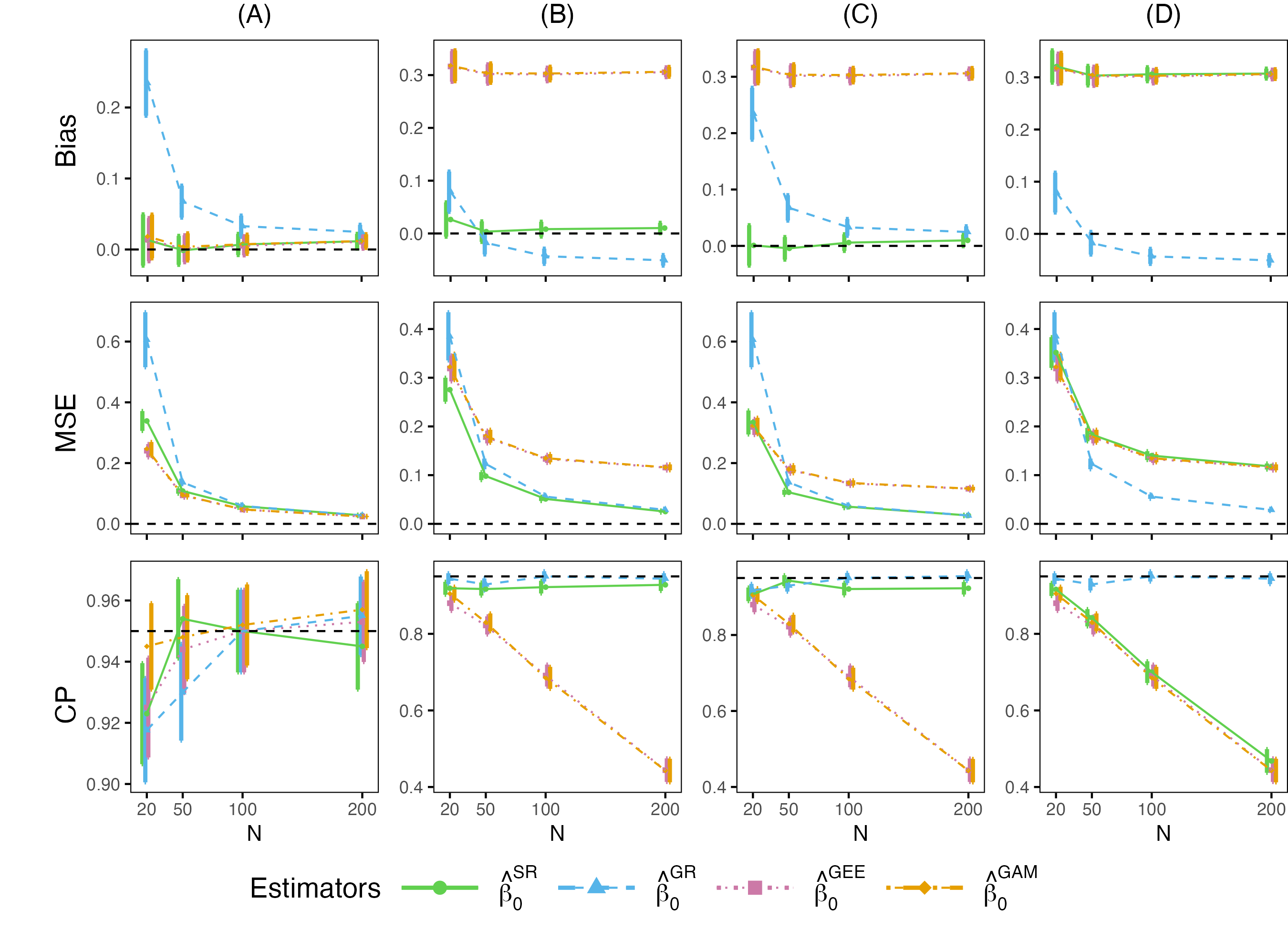}
    \caption{Bias, mean squared error, and coverage probability of $\hat{\beta}_0^\text{SR}$, $\hat{\beta}_0^\text{GR}$, $\hat{\beta}_0^\text{GEE}$, $\hat{\beta}_0^\text{GAM}$ in simulations with linear generating function described in Section \ref{supp-sec:additional_simulation_results}.}
    \label{sup-fig: simulation result of beta0 with linear generating function}
\end{figure}

\begin{figure}[ht]
    \centering
    \includegraphics[width=0.9\linewidth]{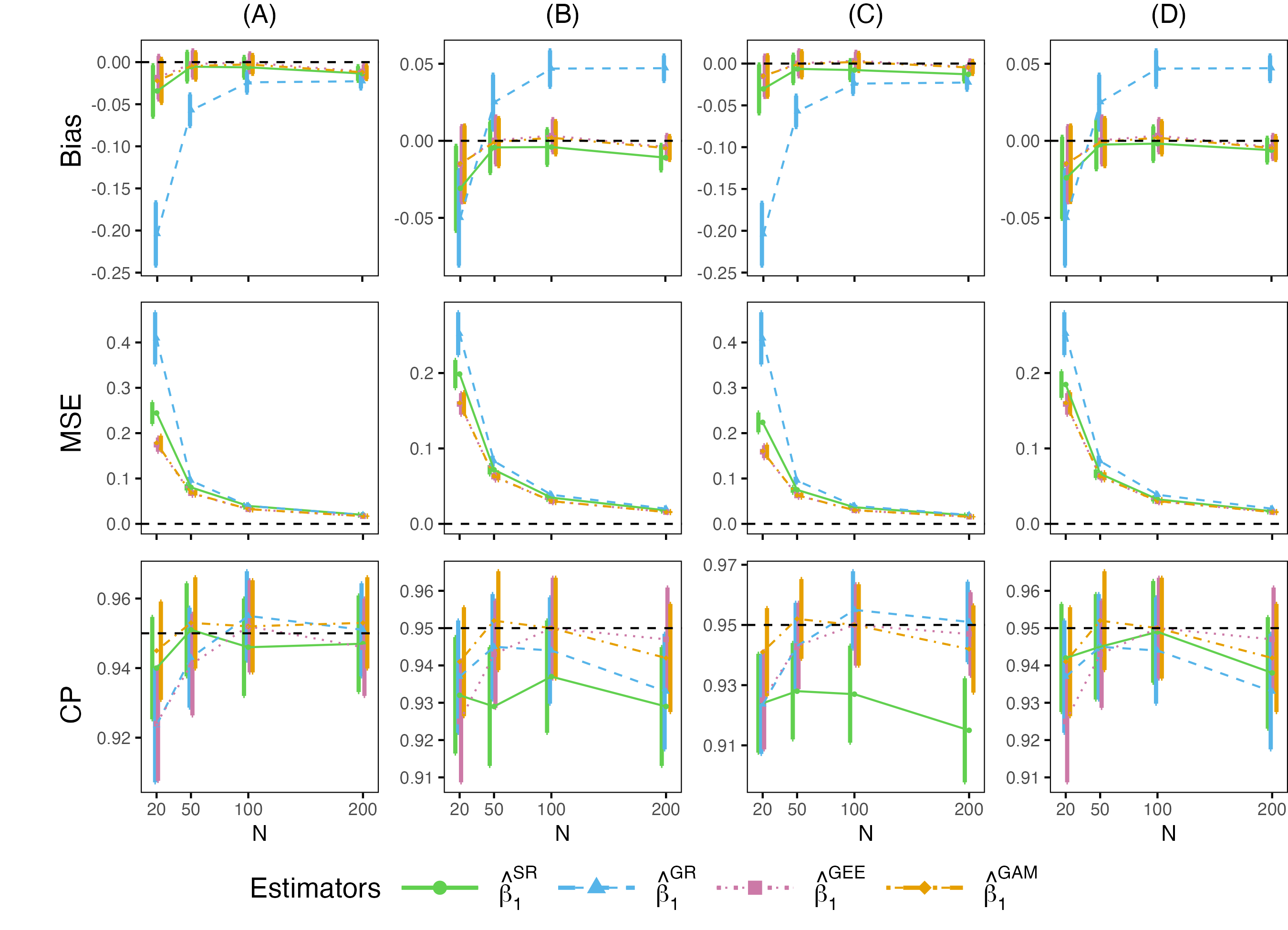}
    \caption{bias, mean squared error, and coverage probability of $\hat{\beta}_1^\text{SR}$, $\hat{\beta}_1^\text{GR}$, $\hat{\beta}_1^\text{GEE}$, $\hat{\beta}_1^\text{GAM}$ in simulations with linear generating function described in Section \ref{supp-sec:additional_simulation_results}.}
    \label{sup-fig: simulation result of beta1 with linear generating function}
\end{figure}{}

\begin{figure}[ht]
    \centering
    \includegraphics[width=0.9\linewidth]{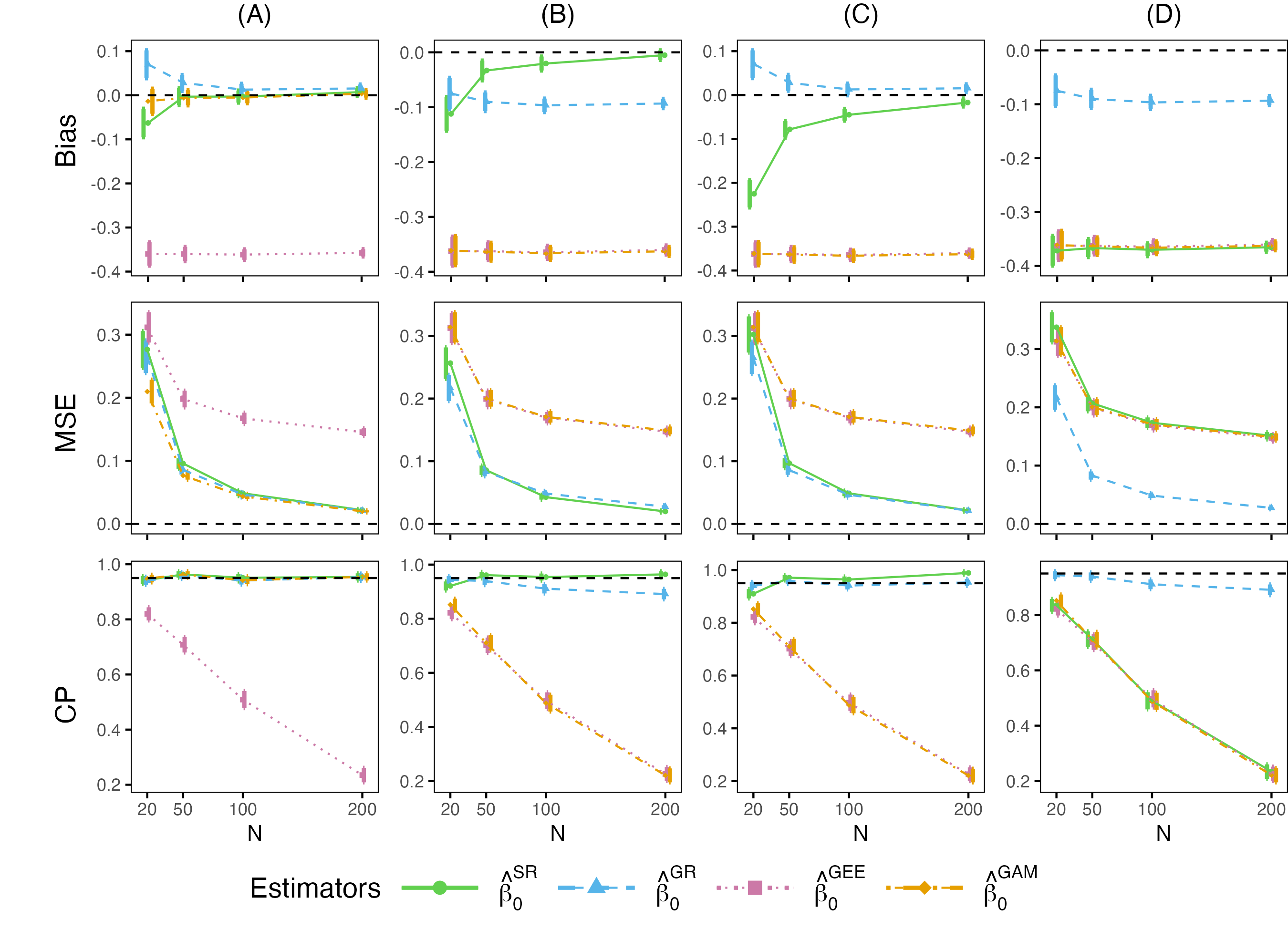}
    \caption{Bias, mean squared error, and coverage probability of $\hat{\beta}_0^\text{SR}$, $\hat{\beta}_0^\text{GR}$, $\hat{\beta}_0^\text{GEE}$, $\hat{\beta}_0^\text{GAM}$ in simulations with periodic generating function described in Section \ref{supp-sec:additional_simulation_results}.}
    \label{sup-fig: simulation result of beta0 with periodic generating function}
\end{figure}

\begin{figure}[ht]
    \centering
    \includegraphics[width=0.9\linewidth]{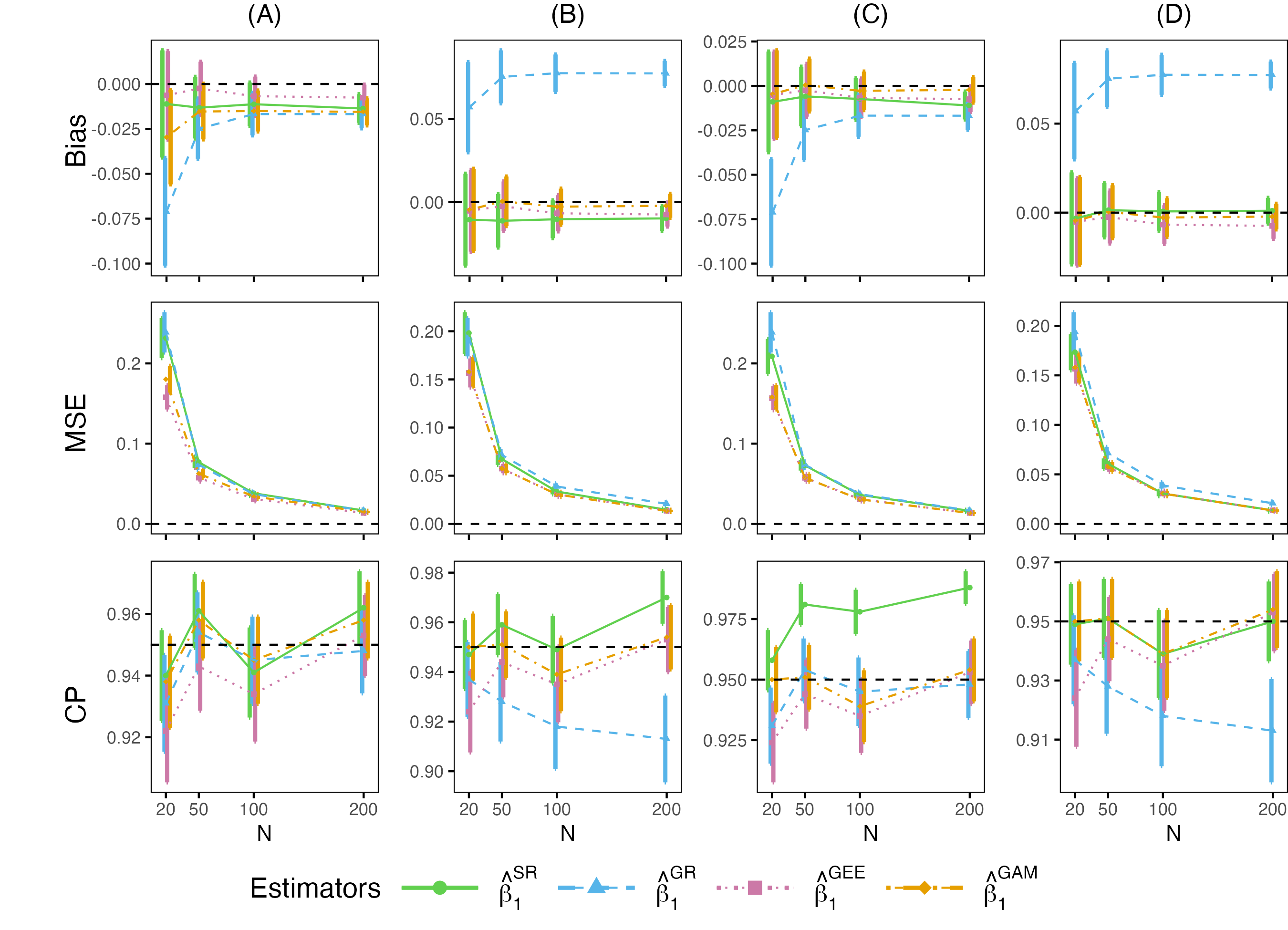}
    \caption{bias, mean squared error, and coverage probability of $\hat{\beta}_1^\text{SR}$, $\hat{\beta}_1^\text{GR}$, $\hat{\beta}_1^\text{GEE}$, $\hat{\beta}_1^\text{GAM}$ in simulations with periodic generating function described in Section \ref{supp-sec:additional_simulation_results}.}
    \label{sup-fig: simulation result of beta1 with periodic generating function}
\end{figure}

Figure \ref{sup-fig: simulation result of beta0 with linear generating function} and \ref{sup-fig: simulation result of beta1 with linear generating function} show the bias, mean squared error (MSE), and coverage probability (CP) of our estimators and two comparators with linear generating functions. As expected, bias and MSE of $\hat{\beta}^\text{SR}$ decreases to 0 and CP gets close to nominal level as sample size increases under implementation (A), (B), and (C). The estimator $\hat{\beta}^\text{GR}$'s MSE decreases and CP gets close to nominal level under implementation (A) and (C). The estimators obtained from GAM and GEE are biased whenever the mean model is misspecified (implementation (B), (C), (D)). Figure \ref{sup-fig: simulation result of beta0 with periodic generating function} and \ref{sup-fig: simulation result of beta1 with periodic generating function} shows the bias, MSE, and CP of the estimators under periodic generating functions. The estimator $\hat{\beta}^\text{SR}$'s MSE decreases and CP gets close to nominal level under implementation (A), (B), and (C). The estimator $\hat{\beta}^\text{GR}$ has good performance under implementation (A) and (C).

%%%%%%%%%%%%%%%%%%%%%%%%%%%%%%%%%%%%%%%%%%%%%%%%%%%%%%%%%%
\section{Proof of Identifiability Results}
\label{supp-sec:proof_of_identifiability_results}

\subsection{Proof of General Identifiability Result Eq. (2.2)}
The proof in this section extends the identifiability proof for continuous outcome in \citet{yu2024doubly}. Formally, we would like to establish that
\begin{align*}
    \cee_{\bp, \bpi; \Delta}(t; S_t) &= \logit[\EE\{\EE( W_{t, \Delta} Y_{t, \Delta} \mid A_t = 1, H_t) \mid S_t, I_t = 1 \} ] \\
    & ~~~ - \logit[\EE\{\EE( W_{t, \Delta} Y_{t, \Delta} \mid A_t = 0, H_t) \mid S_t, I_t = 1 \} ].
    \numberthis
    \label{supp-eq: observed causal excursion effect}
\end{align*}
It suffices to show
\begin{align*}
&\EE\{Y_{t, \Delta}(\bA_{t-1}, a, \bA_{t+1:t+\Delta-1})|S_t(\bA_{t-1}), I_t(\bA_{t-1}) = 1\} \\ 
 = &\EE\{\EE( W_{t, \Delta} Y_{t, \Delta} \mid A_t = a, H_t) \mid S_t, I_t = 1 \} \ \ \text{for } a = 0, 1.
\end{align*}

\subsubsection{Lemma}
\begin{lem}
\label{lem: identifibility lemma} Under Assumption 1 in the main paper, for any $1\leq k\leq\Delta$,
we have
\begin{align}
 & \EE(Y_{t,\Delta}|H_{t},A_{t}=a,I_{t}=1)\nonumber \\
= & \EE\{\prod_{u=t+1}^{t+k-1}(\frac{\pi_{u}}{p_{u}})^{A_{u}}(\frac{1-\pi_{u}}{1-p_{u}})^{1-A_{u}}Y_{t,\Delta}|H_{t},A_{t}=a,I_{t}=1\}\label{eq: identi lemma 1 - result equality}
\end{align}
\end{lem}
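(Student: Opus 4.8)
The plan is to prove the identity \eqref{eq: identi lemma 1 - result equality} by induction on $k$, peeling off one future weight factor at a time and using sequential ignorability to absorb each one. Write $\rho_u := \left(\frac{\pi_u}{p_u}\right)^{A_u}\left(\frac{1-\pi_u}{1-p_u}\right)^{1-A_u}$ for the per-step change-of-measure factor, so that the product appearing in \eqref{eq: identi lemma 1 - result equality} is $\prod_{u=t+1}^{t+k-1}\rho_u$. The base case $k=1$ is immediate, since the product is empty and equals $1$, so the two sides coincide. For the inductive step I would assume the claim for some $k$ with $1 \le k < \Delta$ and establish it for $k+1$; concretely, this amounts to showing that inserting the single extra factor $\rho_{t+k}$ leaves the conditional expectation unchanged once the remaining future treatments are properly accounted for.

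The engine of the inductive step is a conditioning argument on $H_{t+k}$. First I would observe that the partial product $\prod_{u=t+1}^{t+k-1}\rho_u$ is measurable with respect to $H_{t+k}$, since each $\rho_u$ depends only on $A_u$ and on $H_u \subseteq H_{t+k}$; by the tower property it can therefore be pulled outside an inner conditional expectation given $H_{t+k}$. It then remains to analyze the innermost term $\EE(\rho_{t+k}\, Y_{t,\Delta} \mid H_{t+k})$. Conditioning further on $A_{t+k}$ and using $P(A_{t+k}=1 \mid H_{t+k}) = p_{t+k}$ together with the positivity of Assumption \ref{assu: positivity}, the factor $\rho_{t+k}$ performs an exact change of measure, replacing the observed randomization law $p_{t+k}$ for $A_{t+k}$ by the reference law $\pi_{t+k}$. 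Sequential ignorability (Assumption \ref{assu: Sequential ignorability}) is what licenses rewriting the resulting stratum-specific observed means as potential-outcome means that are stable under this re-randomization, while consistency (Assumption \ref{assu: consistency}) ties the observed $Y_{t,\Delta}$ back to $Y_{t,\Delta}(\bA_{t+\Delta-1})$. Carrying this through and taking the outer expectation back to $H_t$ under the conditioning event $\{A_t = a,\, I_t = 1\}$ advances the weight index from $k$ to $k+1$.

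The main obstacle I anticipate is the bookkeeping between observed and potential outcomes across the nested conditional expectations: one must track precisely which future treatments have already been switched to the reference policy $\bpi$ and which still follow the observed policy $\bp$, and invoke sequential ignorability at exactly the index $t+k$ rather than collapsing all of them simultaneously. A secondary technical point is the eligibility constraint $\pi_u(H_u)=0$ whenever $I_u=0$, which must be respected so that the ratios defining $\rho_u$ remain well defined; positivity keeps the denominators bounded away from $0$, and the convention $W_{t,\Delta}=1$ when $\Delta=1$ makes the $k=1$ base case consistent with the definition used elsewhere. Once the inductive step is in place, specializing to $k=\Delta$ produces the full weight $W_{t,\Delta}$ and feeds directly into the identification formula \eqref{supp-eq: observed causal excursion effect}.
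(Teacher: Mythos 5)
Your proposal follows essentially the same route as the paper's own proof: induction on $k$ with the empty-product base case, then an inductive step that conditions on $(H_{t+k}, A_{t+k})$, pulls the $H_{t+k}$-measurable partial product out of the inner expectation, cancels $p_{t+k}$ against the ratio to turn the randomization weights into $\pi_{t+k}$-weights, and invokes sequential ignorability together with the inductive hypothesis to close the step. Your explicit insistence on tracking which future treatments have been switched to $\bpi$ and which still follow $\bp$ is, if anything, more careful than the paper's write-up, which leaves that measure bookkeeping implicit.
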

\begin{proof}
For $k=1$, (\ref{eq: identi lemma 1 - result equality}) holds because
we defined $\prod_{u=t+1}^{t}(\frac{\pi_{u}}{p_{u}})^{A_{u}}(\frac{1-\pi_{u}}{1-p_{u}})^{1-A_{u}}=1.$
In the following, we assume $\Delta\geq2$, and we prove the lemma
by mathemtical induction.

Suppose (\ref{eq: identi lemma 1 - result equality}) holds for $k=k_{0}$
for some $1\leq k_{0}\leq\Delta-1$, that is 
\begin{align}
 & \EE(Y_{t,\Delta}|H_{t},A_{t}=a,I_{t}=1)\nonumber \\
 & =\EE\{\prod_{u=t+1}^{t+k_{0}-1}(\frac{\pi_{u}}{p_{u}})^{A_{u}}(\frac{1-\pi_{u}}{1-p_{u}})^{1-A_{u}}Y_{t,\Delta}|H_{t},A_{t}=a,I_{t}=1\}\label{eq: lemma - hold for k0}
\end{align}

Let $\zeta=\prod_{u=t+1}^{t+k_{0}-1}(\frac{\pi_{u}}{p_{u}})^{A_{u}}(\frac{1-\pi_{u}}{1-p_{u}})^{1-A_{u}}$,
we have
\begin{align}
 & \EE\{\prod_{u=t+1}^{t+k_{0}}(\frac{\pi_{u}}{p_{u}})^{A_{u}}(\frac{1-\pi_{u}}{1-p_{u}})^{1-A_{u}}Y_{t,\Delta}|H_{t},A_{t}=a,I_{t}=1\}\nonumber \\
= & \EE\big\{\zeta(\frac{\pi_{t+k_{0}}}{p_{t+k_{0}}})^{A_{t+k_{0}}}(\frac{1-\pi_{t+k_{0}}}{1-p_{t+k_{0}}})^{1-A_{t+k_{0}}}Y_{t,\Delta}|H_{t},A_{t}=a,I_{t}=1\big\}\nonumber \\
= & \EE\Big[\EE\big\{\zeta(\frac{\pi_{t+k_{0}}}{p_{t+k_{0}}})Y_{t,\Delta}|H_{t+k_{0}},A_{t+k_{0}}=1,H_{t},A_{t}=a,I_{t}=1\big\} p_{t+k_{0}}\nonumber \\
 & +\EE\big\{\zeta(\frac{1-\pi_{t+k_{0}}}{1-p_{t+k_{0}}})Y_{t,\Delta}|H_{t+k_{0}},A_{t+k_{0}}=0,H_{t},A_{t}=a,I_{t}=1\big\}(1-p_{t+k_{0}})\Big|H_{t},A_{t}=a,I_{t}=1\Big]\label{eq:lemm1 - using iterated expectation}\\
= & \EE\Big[\EE\big\{ Y_{t,\Delta}|H_{t+k_{0}},A_{t+k_{0}}=1,H_{t},A_{t}=a,I_{t}=1\big\}\pi_{t+k_{0}} \nonumber \\
&+\EE\big\{Y_{t,\Delta}|H_{t+k_{0}},A_{t+k_{0}}=0,H_{t},A_{t}=a,I_{t}=1\big\}(1-\pi_{t+k_{0}})\Big|H_{t},A_{t}=a,I_{t}=1\Big]\label{eq: lemma - by using k =00003D k0 hold}
\end{align}
where (\ref{eq:lemm1 - using iterated expectation}) follows from
law of iterated expectation. (\ref{eq: lemma - by using k =00003D k0 hold})
follows from (\ref{eq: lemma - hold for k0}) and sequentially ignorability.
Also, we have
\begin{align}
 & \EE(Y_{t,\Delta}|H_{t},A_{t}=a,I_{t}=1)\nonumber \\
= & \EE\big\{\EE(Y_{t,\Delta}|H_{t+k_{0}},A_{t+k_{0}}=1,H_{t},A_{t}=a,I_{t}=1)\pi_{t+k_{0}}\nonumber \\
 & +\EE(Y_{t,\Delta}|H_{t+k_{0}},A_{t+k_{0}}=1,H_{t},A_{t}=a,I_{t}=1)(1-\pi_{t+k_{0}})\Big|H_{t,}A_{t}=a,I_{t}=1\big\}.\label{eq: lemma - expand Y}
\end{align}
Then, by (\ref{eq: lemma - expand Y}) and (\ref{eq: lemma - by using k =00003D k0 hold}),
we have
\begin{align*}
 & \EE(Y_{t,\Delta}|H_{t},A_{t}=a,I_{t}=1)\\
 & =\EE\{\prod_{u=t+1}^{t+k_{0}}(\frac{\pi_{u}}{p_{u}})^{A_{u}}(\frac{1-\pi_{u}}{1-p_{u}})^{1-A_{u}}Y_{t,\Delta}|H_{t},A_{t}=a,I_{t}=1\},
\end{align*}
i.e., we showed that (\ref{eq: identi lemma 1 - result equality})
holds for $k=k_{0}+1$. This completes the proof.
\end{proof}

Now, we show the following: under Assumption 1 (SUTVA, positivity, sequential ignorability),
\begin{align*}
 & \EE\{Y_{t,\Delta}(\bar{A}_{t-1},a,\bar{A}_{t+1:(t+\Delta-1)})|S_{t}(\bar{A}_{t-1})=s,I_{t}(\bar{A}_{t-1})=1\}\\
= & \EE\{\EE(W_{t,\Delta}Y_{t,\Delta}|A_{t}=a,H_{t},I_{t}=1)|S_{t}=s,I_{t}=1\}
\end{align*}

\begin{proof}
We have the following sequence of equality:
\begin{align}
 & \EE\{Y_{t,\Delta}(\bar{A}_{t-1},a,\bar{A}_{t+1:(t+\Delta-1)})|S_{t}(\bar{A}_{t-1})=s,I_{t}(\bar{A}_{t-1})=1\} \nonumber \\
= & \EE\Big[\EE\{Y_{t,\Delta}(\bar{A}_{t-1},a,\bar{A}_{t+1:(t+\Delta-1)})||H_{t}(\bar{A}_{t-1}),I_{t}(\bar{A}_{t-1})=1\}|S_{t}(\bar{A}_{t-1})=s,I_{t}(\bar{A}_{t-1})=1\Big] \label{eq:identi lemma2 - law of iterated expectation} \\
= & \EE\Big[\EE\{Y_{t,\Delta}(\bar{A}_{t-1},a,\bar{A}_{t+1:(t+\Delta-1)})|H_{t},I_{t}=1\}|S_{t}=s,I_{t}=1\Big]\label{eq:identi lemma2 - consistency assumption} \\
= & \EE\Big[\EE\{Y_{t,\Delta}(\bar{A}_{t-1},a,\bar{A}_{t+1:(t+\Delta-1)})|H_{t},I_{t}=1,A_{t}=a\}|S_{t}=s,I_{t}=1\Big]\label{eq:identi lemma2 - sequantial ignorability assumption}\\
= & \EE\Big[\EE\{Y_{t,\Delta}(\bar{A}_{t-1},A_{t},\bar{A}_{t+1:(t+\Delta-1)})|H_{t},I_{t}=1,A_{t}=a\}|S_{t}=s,I_{t}=1\Big]\\
= & \EE\{\EE(Y_{t,\Delta}|H_{t},I_{t}=1,A_{t}=a)|S_{t}=s,I_{t}=1\}\label{eq: identi lemm2 - potential Y =00003D observed Y}\\
= & \EE\{\EE(W_{t,\Delta}Y_{t,\Delta}|A_{t}=a,H_{t},I_{t}=1)|S_{t}=s,I_{t}=1\}\label{eq:identi lemma2 - from lemma 1}
\end{align}
where (\ref{eq:identi lemma2 - law of iterated expectation}) follows
from the law of iterated expectation, (\ref{eq:identi lemma2 - consistency assumption})
follows from consistency assumption, (\ref{eq:identi lemma2 - sequantial ignorability assumption})
follows from sequential ignorability assumption, \eqref{eq: identi lemm2 - potential Y =00003D observed Y}
follows from consistency assumption, and (\ref{eq:identi lemma2 - from lemma 1})
follows from Lemma \ref{lem: identifibility lemma}. This completes
the proof. 
\end{proof}

\subsection{Proof of Identifibaility Result Under Simple Randomization}
In this section, we would like to show that under simple randomization, 
\begin{align*}
    \cee_{\bp, \bpi; \Delta}(t; S_t) &= \logit \{\EE( W_{t, \Delta} Y_{t, \Delta} \mid S_t, A_t = 1, I_t = 1) \} \\
    & ~~~ - \logit \{\EE( W_{t, \Delta} Y_{t, \Delta} \mid S_t, A_t = 0, I_t = 1) \}.
\end{align*}
It suffices to show that
\begin{align*}
&\EE\{\EE( W_{t, \Delta} Y_{t, \Delta} \mid A_t = a, H_t) \mid S_t, I_t = 1 \} \\
= &\EE( W_{t, \Delta} Y_{t, \Delta} \mid S_t, A_t = a, I_t = 1) \ \ \text{for } a = 0, 1.
\end{align*}

We have the following equality:
\begin{align}
	&\EE\{\EE( W_{t, \Delta} Y_{t, \Delta} \mid A_t = a, H_t) \mid S_t, I_t = 1 \} \nonumber \\
= & \EE\{Y_{t, \Delta}(\bA_{t-1}, a, \bA_{t+1:t+\Delta-1})|S_t(\bA_{t-1}), I_t(\bA_{t-1}) = 1\} \label{eq: simple randomization proof use identifiability result}\\
= & \EE\{Y_{t, \Delta}(\bA_{t-1}, a, \bA_{t+1:t+\Delta-1})|S_t(\bA_{t-1}), A_t = a, I_t(\bA_{t-1}) = 1\} \label{eq: simple randomization proof use simple randomization} \\
= & \EE\{Y_{t, \Delta}|S_t, A_t = a, I_t = 1\} \label{eq: simple randomization proof use consistency} \\ 
= & \EE\{W_{t, \Delta}Y_{t, \Delta}|S_t, A_t = a, I_t = 1\} \label{eq: simple randomization proof use lemma} 
\end{align}
where \eqref{eq: simple randomization proof use identifiability result} holds because of the identifiability result; \eqref{eq: simple randomization proof use simple randomization} holds because of the simple randomization setting; \eqref{eq: simple randomization proof use consistency} holds because of the consistency assumption; and \eqref{eq: simple randomization proof use lemma} follows from the variation of Lemma \ref{lem: identifibility lemma}, which replaces $H_t$ with $S_t$.

%%%%%%%%%%%%%%%%%%%%%%%%%%%%%%%%%%%%%%%%%%%%%%%%%%%%%%%%%%
\section{Derivations of the Efficient Estimation Equations}
\label{supp-sec:derivations_of_the_estimation_equations_usr_beta_r_m_mu_and_ugr_beta_alpha_mu_}

\subsection{\texorpdfstring{Derivation of the Estimation Equation $\USR(\beta, r, m, \mu)$}{Derivation of the Estimation Equation USR(beta, r, m, mu)}}
\label{supp-sec:_texorpdfstring_derivation_of_the_estimation_equation_usr_beta_r_m_mu_}
We subtract from $ U^\text{SR-prelim}(\beta, r, m)$ onto the score of the treatment assignment to obtain a more efficient estimating equation:
\begin{align}
   U^\text{SR-prelim}(\beta, r, m) - \sum_{u = 1}^T \Big[ \EE\{ U^\text{SR-prelim}(\beta, r, m)|H_u, A_u\}- \EE\{ U^\text{SR-prelim}(\beta, r, m)|H_u\} \Big] \Bigg).  \label{eq: derivation of USR the big equation}
\end{align}
Define $U_t^\text{SR-prelim}(\beta, r_t, m_t)$ to ba summand in $U^\text{SR-prelim}(\beta, r, m)$ such that $U^\text{SR-prelim}(\beta, r, m) = \sum_{t=1}^T U_t^\text{SR-prelim}(\beta, r_t, m_t)$, and 
\begin{align*}
U_t^\text{SR-prelim}(\beta, r_t, m_t) = \omega(t) & I_t \left\{ W_{t, \Delta}Y_{t, \Delta}e^{-A_t f_t(S_t)^T \beta} - (1 - W_{t, \Delta}Y_{t, \Delta}) e^{r_t(S_t)} \right\}  \\
    & ~~~~ \times \{ A_t - m_t(S_t) \} f_t(S_t). 
\end{align*}

Then \eqref{eq: derivation of USR the big equation} can further expanded to
\begin{align}
	&\sum_{t=1}^T \Bigg(U_t^\text{SR-prelim}(\beta, r_t, m_t) - \EE\{ U_t^\text{SR-prelim}(\beta, r_t, m_t)|H_t, A_t\} + \EE\{ U_t^\text{SR-prelim}(\beta, r_t, m_t)|H_t\} \Bigg) \label{eq: USR first part expanded}\\
	&-\sum_{t=1}^T \Bigg(\sum_{1\leq u \leq T, u\neq t} \Big[ \EE\{ U_t^\text{SR-prelim}(\beta, r_t, m_t)|H_u, A_u\} - \EE\{ U_t^\text{SR-prelim}(\beta, r_t, m_t)|H_u\} \Big] \Bigg). \label{eq: USR second part expanded}
\end{align}

We will first derive \eqref{eq: USR first part expanded}. By definition of $U_t^\text{SR-prelim}(\beta, r_t, m_t)$, we have,
\begin{align*}
	&\EE\{ U_t^\text{SR-prelim}(\beta, r_t, m_t)|H_t, A_t\} \\ 
	= &\omega(t) I_t \Big[ \EE(W_{t,\Delta}Y_{t,\Delta}|H_t, A_t) e^{-A_t f_t(S_t)^T \beta} - \big\{ 1 - \EE(W_{t,\Delta}Y_{t,\Delta}|H_t, A_t) \big\} e^{r_t(S_t)} \Big] 	\{ A_t - m_t(S_t) \} f_t(S_t). \numberthis \label{eq: first first part of EEUSR}
\end{align*}
We also have,
\begin{align*}
	&\EE\{ U_t^\text{SR-prelim}(\beta, r_t, m_t)|H_t\} \\ 
	= & \EE\{ U_t^\text{SR-prelim}(\beta, r_t, m_t)|H_t, I_t = 1\}I_t + \EE\{ U_t^\text{SR-prelim}(\beta, r_t, m_t)|H_t, I_t = 0\}(1 - I_t) \\ 
	=& \EE\{ U_t^\text{SR-prelim}(\beta, r_t, m_t)|H_t, I_t = 1, A_t = 1\}P(A_t = 1|H_t, I_t = 1)I_t \\ 
	&+ \EE\{ U_t^\text{SR-prelim}(\beta, r_t, m_t)|H_t, I_t = 1, A_t = 0\}P(A_t = 0|H_t, I_t = 1)I_t \\ 
	&+ \EE\{ U_t^\text{SR-prelim}(\beta, r_t, m_t)|H_t, I_t = 0\}(1 - I_t) \\ 
	=& \omega(t) I_t \big\{ \mu_{1t}^\star e^{-f_t(S_t)^T\beta} - (1 - \mu_{1t}^\star)e^{r_t(S_t)} \big\} \{ 1 - m_t(S_t) \} p_t(S_t) f_t(S_t)\\ 
	&- \omega(t) I_t \big\{ \mu_{0t}^\star - (1 - \mu_{0t}^\star)e^{r_t(S_t)} \big\} m_t(S_t) \{1 - p_t(S_t)\} f_t(S_t) \numberthis \label{eq: first second part of EEUSR}.
\end{align*}

Putting together \eqref{eq: first first part of EEUSR} and \eqref{eq: first second part of EEUSR}, we have
\begin{align*}
	&U_t^\text{SR-prelim}(\beta, r_t, m_t) - \EE\{ U_t^\text{SR-prelim}(\beta, r_t, m_t)|H_t, A_t\} + \EE\{ U_t^\text{SR-prelim}(\beta, r_t, m_t)|H_t\} \\
	=& \omega(t) I_t \bigg[ (W_{t, \Delta}Y_{t, \Delta} - \mu_t^\star )e^{-A_t f_t(S_t)^T \beta} - \big\{ (1-W_{t, \Delta}Y_{t, \Delta}) - (1-\mu_t^\star) \big\}  e^{r_t(S_t)} \{A_t - m_t(S_t)\} \nonumber \\
	& + \big\{ \mu_{1t}^\star e^{-f_t(S_t)^T\beta} - (1 - \mu_{1t}^\star )e^{r_t(S_t)} \big\} \{ 1 - m_t(S_t) \} p_t(S_t) \nonumber \\
	& - \big\{ \mu_{0t}^\star - (1 - \mu_{0t}^\star)e^{r_t(S_t)} \big\} m_t(S_t) \{1 - p_t(S_t)\} \bigg] f_t(S_t) \\ 
	=& \omega(t) I_t \bigg[ \{ W_{t, \Delta}Y_{t, \Delta} - \mu_t^\star \} \big\{ e^{-A_t f_t(S_t)^T \beta} + e^{r_t(S_t)} \big\} \{A_t - m_t(S_t)\} \nonumber \\
	& + \big\{ \mu_{1t}^\star e^{-f_t(S_t)^T\beta} - (1 - \mu_{1t}^\star)e^{r_t(S_t)} \big\} \{ 1 - m_t(S_t) \} p_t(S_t) \nonumber \\
	& - \big\{ \mu_{0t}^\star - (1 - \mu_{0t}^\star)e^{r_t(S_t)} \big\} m_t(S_t) \{1 - p_t(S_t)\} \bigg] f_t(S_t)
\end{align*} 

The terms in \eqref{eq: USR second part expanded} cannot be analytically derived without imposing additional models on the relationship between current and lagged variables in the longitudinal history. Therefore, we omit them when deriving the improved estimating function. This calculation was also used in \citet{bao2025estimating} and \citet{qian2025distal}. Therefore, this completes the derivation.

\subsection{\texorpdfstring{Derivation of the Estimation Equation $\UGR(\beta, \alpha, \mu)$}{Derivation of the Estimation Equation UGR(beta, alpha, mu)}}
\label{supp-sec:_texorpdfstring_derivation_of_the_estimation_equation_ugr_beta_alpha_mu_}
Define $U_t^\text{GR-prelim}(\beta, \alpha)$ to be a summand of $U^\text{GR-prelim}(\beta, \alpha)$ such that $U^\text{GR-prelim}(\beta, \alpha) = \sum_{t=1}^T U_t^\text{GR-prelim}(\beta, \alpha)$, and 
\begin{align*}
     U_t^\mathrm{GR-prelim}(\beta, \alpha) = \omega(t) \frac{A_t - p_t(H_t)}{p_t(H_t)\{1 - p_t(H_t)\}} I_t & \Big[ \expit\{g(S_t)^T\alpha - f_t(S_t)^T\beta\} A_t \nonumber \\
     & ~~ + W_{t, \Delta}  Y_{t, \Delta}(1-A_t) \Big] f_t(S_t). 
\end{align*}

To derive $\UGR(\beta, \alpha, \mu)$, we subtract from $U^\text{GR-prelim}(\beta, \alpha)$ its projection onto the score of the treatment assignment:
\begin{align}
	&\sum_{t=1}^T \Bigg(U_t^\text{GR-prelim}(\beta, \alpha) - \EE\{ U_t^\text{GR-prelim}(\beta, \alpha)|H_t, A_t\} + \EE\{ U_t^\text{GR-prelim}(\beta, \alpha)|H_t\} \Bigg) \label{eq: UGR first part expanded}\\
	&-\sum_{t=1}^T \Bigg(\sum_{1\leq u \leq T, u\neq t} \Big[ \EE\{ U_t^\text{GR-prelim}(\beta, \alpha)|H_u, A_u\} - \EE\{ U_t^\text{GR-prelim}(\beta, \alpha)|H_u\} \Big] \Bigg). \label{eq: UGR second part expanded}
\end{align}
Because the terms in \eqref{eq: UGR second part expanded} cannot be analytically derived without additional modeling, We will focus on calculating \eqref{eq: UGR first part expanded}. We have
\begin{align*}
	&\EE\{ U_t^\text{GR-prelim}(\beta, \alpha)|H_t, A_t\} \\
	=& \frac{A_t - p_t(H_t)}{p_t(H_t)\{1 - p_t(H_t)\}} \omega(t) I_t\Big[ \expit\{g(S_t)^T\alpha - f_t(S_t)^T\beta\} A_t + \EE(W_{t, \Delta}  Y_{t, \Delta}|H_t, A_t) (1-A_t) \Big] f_t(S_t) \\ 
	=& \frac{A_t - p_t(H_t)}{p_t(H_t)\{1 - p_t(H_t)\}} \omega(t) I_t\Big[ \expit\{g(S_t)^T\alpha - f_t(S_t)^T\beta\} A_t + \EE(W_{t, \Delta}  Y_{t, \Delta}|H_t, A_t = 0) (1-A_t) \Big] f_t(S_t). \numberthis \label{eq: first first part of EEUGR}
\end{align*}

We also have,
\begin{align*}
	&\EE\{ U_t^\text{GR-prelim}(\beta, \alpha)|H_t\} \\
	=& \EE\{ U_t^\text{GR-prelim}(\beta, \alpha)|H_t, I_t = 1\}I_t + \EE\{ U_t^\text{GR-prelim}(\beta, \alpha)|H_t, I_t = 0\}(1-I_t) \\
	= & \EE\{ U_t^\text{GR-prelim}(\beta, \alpha)|H_t, I_t = 1, A_t = 1\}P(A_t = 1|H_t, I_t = 1)I_t \\ 
	&+ \EE\{ U_t^\text{GR-prelim}(\beta, \alpha)|H_t, I_t = 1, A_t = 0\}P(A_t = 0|H_t, I_t = 1)I_t \\ 
	&+ \EE\{ U_t^\text{GR-prelim}(\beta, \alpha)|H_t, I_t = 0\}(1-I_t) \\
	= & \frac{1 - p_t(H_t)}{p_t(H_t)\{1 - p_t(H_t)\}} \expit\{g(S_t)^T\alpha - f_t(S_t)^T\beta\} f_t(S_t) p_t(H_t) I_t \omega(t) \\ 
	&+ \frac{ - p_t(H_t)}{p_t(H_t)\{1 - p_t(H_t)\}} \EE(W_{t, \Delta}  Y_{t, \Delta}|H_t, A_t = 0, I_t = 1) f_t(S_t) \{1 - p_t(H_t)\} I_t  \omega(t)\\ 
	= & \omega(t)I_t\Big[\expit\{g(S_t)^T\alpha - f_t(S_t)^T\beta\} - \EE(W_{t, \Delta}  Y_{t, \Delta}|H_t, A_t = 0, I_t = 1)\Big] f_t(S_t)I_t. \numberthis \label{eq: first second part of EEUGR}
\end{align*}
Combing \eqref{eq: first first part of EEUGR} and \eqref{eq: first second part of EEUGR}, we have 
\begin{align*}
	&U_t^\text{GR-prelim}(\beta, \alpha) - \EE\{ U_t^\text{GR-prelim}(\beta, \alpha)|H_t, A_t\} + \EE\{ U_t^\text{GR-prelim}(\beta, \alpha)|H_t\}  \\ 
	=& \frac{A_t - p_t(H_t)}{p_t(H_t)\{1 - p_t(H_t)\}} \omega(t) I_t \Big[ (W_{t, \Delta}  Y_{t, \Delta} - \mu_{0t}^\star) (1-A_t) \Big] f_t(S_t) \\ 
	&+  \omega(t) I_t \Big[\expit\{g(S_t)^T\alpha - f_t(S_t)^T\beta\} - \mu_{0t}^\star \Big] f_t(S_t). \\ 
		= &\omega(t)I_t \Bigg[\frac{\{A_t - p_t(H_t)\} (1-A_t)}{p_t(H_t)\{1 - p_t(H_t)\}}(W_{t, \Delta}  Y_{t, \Delta} - \mu_{0t}^\star) + \expit\{g(S_t)^T\alpha - f_t(S_t)^T\beta\} - \mu_{0t}^\star \Bigg] f_t(S_t). \\ 
	\numberthis \label{eq: UGR last equation}
\end{align*}
Since $A_t$ is binary, we can further simplify
\begin{align*}
 \frac{\{A_t - p_t(H_t)\} (1-A_t)}{p_t(H_t)\{1 - p_t(H_t)\}} = 
 \begin{cases}
0 \quad \quad \quad \quad \quad \ \   \text{when } A_t = 1 \\ -\frac{1}{1 - p_t(H_t)} \quad \quad \text{when } A_t = 0
 \end{cases}
\end{align*}
Thus, we have \eqref{eq: UGR last equation} equals to
\begin{align*}
	&U_t^\text{GR-prelim}(\beta, \alpha) - \EE\{ U_t^\text{GR-prelim}(\beta, \alpha)|H_t, A_t\} + \EE\{ U_t^\text{GR-prelim}(\beta, \alpha)|H_t\} \\ 
	= &\omega(t) I_t \Bigg[\expit\{g(S_t)^T\alpha - f_t(S_t)^T\beta\} - \mu_{0t}^\star -\frac{1 -A_t}{1 - p_t(H_t)}(W_{t, \Delta}  Y_{t, \Delta} - \mu_{0t}^\star) \Bigg] f_t(S_t).
\end{align*}
This completes the derivation.
%%%%%%%%%%%%%%%%%%%%%%%%%%%%%%%%%%%%%%%%%%%%%%%%%%%%%%%%%%
\section{Proof of Theorem 1}
\label{supp-sec:proof_of_consistency_and_asymptotic_normality_of_method_1}

\subsection{Setup and Notations}
Consider the setting where the randomization probability $p_{t}(H_{t})$
depends at most on $S_{t}$, i.e., $P(A_{t}\mid H_{t},I_{t}=1)=p_{t}(S_{t})$. Let $\eta=(\eta_{1},...,\eta_{T})$ with $\eta_{t}=(r_{t},m_{t},\mu_{t})$, and $\eta'_{t}=(r'_{t},m'_{t},\mu'_{t})$ denote the limits of the fitted nuisance functions.
Define $U_t^\text{SR}(\beta,\eta_t)$ to be a summand of $U^\text{SR}(\beta,\eta)$ such that $U^\text{SR}(\beta,\eta) = \sum_{t=1}^T U_t^\text{SR}(\beta,\eta_t)$, and 
\begin{align}
U_t^\text{SR}(\beta,\eta_t)
= &  \omega(t) I_{t}\bigg[(W_{t,\Delta}Y_{t,\Delta}-\mu_{t})\Big\{ e^{-A_{t}f_{t}(S_{t})^{T}\beta}+e^{r_{t}(S_{t})}\Big\}\{A_{t}-m_{t}(S_{t})\}\nonumber \\
 & \ \ +\Big\{\mu_{1t}e^{-f_{t}(S_{t})^{T}\beta}-(1-\mu_{1t})e^{r_{t}(S_{t})}\Big\}\{1-m_{t}(S_{t})\}p_{t}(S_{t})\nonumber \\
 & -\Big\{\mu_{0t}-(1-\mu_{0t})e^{r_{t}(S_{t})}\Big\} m_{t}(S_{t})\{1-p_{t}(S_{t})\}\bigg]f_{t}(S_{t}).\label{eq: EE of method1 with improved efficiency}
\end{align}
 For simplicity of notation, let $\text{expit}(a)$ denote $1/(1+e^{-a})$.

\subsection{Assumptions and Regularity Conditions}
We state the necessary regularity conditions for establishing Theorem 1 in the main paper. 
% \begin{asu}
% \label{assu: convergence of Nuisance parameter estimator}(Convergence
% of Nuisance Parameter Estimators) $\hat{\delta}$ converges in $L_{2}$
% to some limit $\delta'\in\Tau$ as $n\rightarrow\infty$. In other
% words, for each $t\in[T]$, there exists $r_{t}'(S_{t})$ such that
% $\lVert\hat{r}_{t}-r_{t}'\rVert=o_{p}(1)$; there exists $m_{t}'(S_{t})$
% such that $\lVert\hat{m}_{t}-m_{t}'\rVert=o_{p}(1)$ ; there exists
% $\mu_{t}'(h_{t,}a_{t})$ such that $\lVert\hat{\mu}-\mu'\rVert=o_{p}(1)$.
% \end{asu}
% %
% \begin{asu}
% \label{assu: correctly specify one nuisance in method 1}(Correctly
% specifying one nuisance parameter) For each $t\in[T]$, suppose that $r_t'$ and $m_t'$, which are the limits of the fitted nuisance functions satisfy either  $r_{t}^{\star}=r_{t}'$ or $m_{t}^{\star}=m_{t}'$.
% \end{asu}
% %
% \begin{asu}
% \label{assu: (Regularity conditions on probability of Y_=00007Bt,=00005CDelta=00007D)}(Regularity
% conditions on probability of $Y_{t,\Delta}$) There exists positive
% constants $C_{1},C_{2}$ such that $\PP(Y_{t,\Delta}=1|S_{t})$ is
% bounded between them.
% \end{asu}
% %
\begin{asu}
\label{assu: unique zero}(Unique zero). There exists unique $\beta\in\Theta$
such that $\PP\{\USR(\beta,\eta')\}=0.$
\end{asu}
\begin{asu}
	\label{assu: Regularity conditions}(Regularity conditions)
	\begin{asulist}
    \item 
    \label{assu: Regularity conditions: support compact space} Suppose the parameter space $\Theta$ of $\beta$ is
compact.
    \item 
    \label{assu: (Regularity conditions: support of data is bounded)} Suppose the support of $O$ is bounded.
		\item 
		\label{assu: partialU is invertiable} Suppose
$\PP\{\partial_{\beta}\USR(\beta^{\star},\eta')\}$ is invertible.
		\item 
		\label{assu:(Regularity-conditions) mt rt are between 0 and 1} For each $t$, $\hat{r}_{t},r_{t}',\hat{m}_{t}$, $m'_{t}$
$\in[\epsilon,1-\epsilon]$ for $\epsilon>0$.
		\item 
		\label{assu:(Regularity-conditions) derivative is uniformly bounded} $\partial_{\beta}\USR(\beta,\eta)$ and $\USR(\beta,\eta)\USR(\beta,\eta)^{T}$
are uniformaly bounded by integrable function. 
\end{asulist}
\end{asu}

\begin{asu}
\label{assu:(Donsker-condition)}(Donsker condition) Suppose $\{\USR(\beta,\eta):\beta\in\Theta,\eta\in T\}$ and $\{\partial_{\beta}\USR(\beta,\eta):\beta\in\Theta,\eta\in T\}$
are P-Donsker classes.
\end{asu}

\subsection{Lemmas}
\begin{lem}
\label{lem: PU =00003D 0} Suppose either $r'_t = r^\star_t$ or $m'_t = m^\star_t$, then $\PP\{U^\text{SR-prelim}_{t}(\beta^{\star},r'_{t},m'_{t})\}=0$ for $t\in[T]$.
\end{lem}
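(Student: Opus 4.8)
The plan is to reduce the claim to a statement about a single conditional expectation and then verify it by direct odds‑ratio algebra in two cases. Since the summand carries the factor $\omega(t) I_t$ and ends in $f_t(S_t)$, I first condition on $(S_t, I_t = 1)$ and factor out $f_t(S_t)$; it then suffices to show that
\[
\EE\Big[\big\{w_t\, e^{-A_t f_t(S_t)^T\beta^\star} - (1-w_t)\,e^{r'_t(S_t)}\big\}\{A_t - m'_t(S_t)\}\,\Big|\, S_t, I_t=1\Big]=0 ,
\]
where I abbreviate $w_t := W_{t,\Delta}Y_{t,\Delta}$ and $\bar\mu_a := \EE(w_t \mid S_t, A_t=a, I_t=1)$. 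Two facts are used throughout. First, the variation of Lemma~\ref{lem: identifibility lemma} (replacing $H_t$ by $S_t$) gives the weight‑collapsing identity $\bar\mu_a = \EE(Y_{t,\Delta}\mid S_t, A_t=a, I_t=1)$, so each $\bar\mu_a\in[0,1]$ is a counterfactual success probability. Second, the identification formula \eqref{eq:cee-identifiability-simple-randomization} with the CEE model yields $f_t(S_t)^T\beta^\star = \logit\bar\mu_1 - \logit\bar\mu_0$, equivalently $e^{r_t^\star(S_t)} = \bar\mu_0/(1-\bar\mu_0)$ and $e^{-f_t(S_t)^T\beta^\star} = \bar\mu_0(1-\bar\mu_1)/\{\bar\mu_1(1-\bar\mu_0)\}$.

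For the case $r'_t = r_t^\star$, I note that $\{A_t - m'_t(S_t)\}$ is a function of $(A_t,S_t)$, so I further condition on $(S_t, A_t=a, I_t=1)$ and show the bracket itself has conditional mean zero for each $a\in\{0,1\}$, \emph{irrespective of} $m'_t$. For $a=0$ the bracket's mean is $\bar\mu_0 - (1-\bar\mu_0)e^{r_t^\star}=0$ by the form of $e^{r_t^\star}$; for $a=1$ it is $\bar\mu_1 e^{-f_t(S_t)^T\beta^\star} - (1-\bar\mu_1)e^{r_t^\star}$, which collapses to $0$ after substituting the two displayed identities. Averaging the constant $(a-m'_t)$ against these zero conditional means gives the claim.

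For the case $m'_t = m_t^\star$ with $r'_t$ arbitrary, I split the bracket into an outcome part $w_t e^{-A_t f_t^T\beta^\star}$ and a nuisance part $-(1-w_t)e^{r'_t}$. Pulling out $e^{r'_t(S_t)}$, the nuisance part requires $\EE[(1-w_t)(A_t - m_t^\star)\mid S_t,I_t=1]=0$; conditioning on $(S_t,A_t,I_t=1)$ and invoking the weight‑collapsing identity replaces $w_t$ by $Y_{t,\Delta}$, reducing this to $\EE[\indic(Y_{t,\Delta}=0)(A_t - m_t^\star)\mid S_t,I_t=1]$, which vanishes because $m_t^\star = P(A_t=1\mid S_t,Y_{t,\Delta}=0,I_t=1)$. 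The outcome part, after conditioning on $A_t$, equals $p_t(1-m_t^\star)e^{-f_t^T\beta^\star}\bar\mu_1 - (1-p_t)m_t^\star\bar\mu_0$; substituting $e^{-f_t^T\beta^\star}\bar\mu_1 = \bar\mu_0(1-\bar\mu_1)/(1-\bar\mu_0)$ together with the Bayes expression $m_t^\star = (1-\bar\mu_1)p_t/\{(1-\bar\mu_1)p_t + (1-\bar\mu_0)(1-p_t)\}$ shows both summands coincide, so it is zero.

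I expect the main obstacle to be the treatment of the weight $W_{t,\Delta}$ when $\Delta>1$: the estimating function contains $(1-w_t)=(1-W_{t,\Delta}Y_{t,\Delta})$ rather than the indicator $\indic(Y_{t,\Delta}=0)$ that the definition of $m_t^\star$ references, and these agree only \emph{after} the weight‑collapsing identity is applied inside the conditional expectation. Justifying that $W_{t,\Delta}$ can be integrated out against $(A_t - m_t^\star)$—which is a function of $(A_t,S_t)$ alone—is the crux; the remaining steps are the elementary odds‑ratio bookkeeping that verifies the two pieces cancel.
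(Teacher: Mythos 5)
Your proof is correct, and it reaches the conclusion by a genuinely different organization than the paper's. The paper's proof performs one unified manipulation before any case split: it factors $e^{r'_t(S_t)}$ out of the bracket, conditions on $(S_t,A_t)$, and uses the model identity $\logit\,\EE(W_{t,\Delta}Y_{t,\Delta}\mid S_t,A_t=a,I_t=1)=r_t^\star(S_t)+a\,f_t(S_t)^T\beta^\star$ to replace $W_{t,\Delta}Y_{t,\Delta}e^{-A_tf_t^T\beta^\star-r'_t}$ by $(1-W_{t,\Delta}Y_{t,\Delta})e^{r_t^\star-r'_t}$ inside the conditional expectation, arriving at a single expression proportional to $(1-W_{t,\Delta}Y_{t,\Delta})\{e^{r_t^\star(S_t)-r'_t(S_t)}-1\}\{A_t-m'_t(S_t)\}$; the $r$-case then kills the middle factor and the $m$-case kills the expectation over $(A_t,Y_{t,\Delta})$. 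You instead split into cases from the start: your $r'_t=r_t^\star$ case establishes the same key odds-ratio identity (zero conditional mean of the bracket given $(S_t,A_t)$, irrespective of $m'_t$), while your $m'_t=m_t^\star$ case uses a different decomposition into an outcome part and a nuisance part that vanish \emph{separately}, the outcome part requiring an explicit Bayes representation of $m_t^\star$ in terms of $\bar\mu_0,\bar\mu_1,p_t(S_t)$ that the paper never needs. What the paper's route buys is brevity and a single residual expression whose product structure also drives the rate-double-robustness lemma later; what your route buys is transparency at exactly the point where the paper is terse: its final sentence asserts the $m$-case vanishes without spelling out that $(1-W_{t,\Delta}Y_{t,\Delta})$ must first be converted to $\indic(Y_{t,\Delta}=0)$ via the $S_t$-variant of the weight-collapsing lemma before the definition $m_t^\star=\Pr(A_t=1\mid S_t,Y_{t,\Delta}=0,I_t=1)$ can be invoked, which is precisely the step you identify and justify. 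One shared caveat: both your argument and the paper's use the pointwise specification $\cee_{\bp,\bpi;\Delta}(t;S_t)=f_t(S_t)^T\beta^\star$ (you via the identity $f_t^T\beta^\star=\logit\bar\mu_1-\logit\bar\mu_0$, the paper via its substitution step), which is strictly stronger than the $L_2$-projection definition of $\beta^\star$ in the main text, so you are on the same footing as the paper there.
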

\begin{proof}
The defintion of $\beta^\star$ in the simple randomization setting and the defintion of the nuisance function $r_t^\star(S_t)$ implies that
\begin{align*}
	\PP \Bigg(\sum_{t=1}^T \omega(t) \Big[\logit\{\PP(W_{t,\Delta}Y_{t,\Delta}|S_t, A_t = 1, I_t = 1\} - r_t^\star(S_t) - f_t(S_t)^T \beta^\star \Big] \Bigg) = 0. \numberthis \label{eq: the USR causal model with r and beta}
\end{align*}

We have
\begin{align}
&\PP\{U^\text{SR-prelim}(\beta,r',m')\} \nonumber \\
= & \sum_{t=1}^T \PP\Big[\omega(t)I_{t}\{W_{t, \Delta}Y_{t,\Delta}e^{-A_{t}f_{t}(S_{t})^{T}\beta}-(1-W_{t, \Delta}Y_{t,\Delta})e^{r'_{t}(S_{t})}\}\{A_{t}-m'_{t}(S_{t})\}f_{t}(S_{t})\Big]\nonumber \\
= & \sum_{t=1}^T \PP\Big[\omega(t)I_{t}e^{r_{t}'(S_{t})}\{W_{t, \Delta}Y_{t,\Delta}e^{-A_{t}f_{t}(S_{t})^{T}\beta-r_{t}'(S_{t})}-(1-W_{t, \Delta}Y_{t,\Delta})\}\{A_{t}-m'_{t}(S_{t})\}f_{t}(S_{t})\Big]\\
= & \sum_{t=1}^T \PP\Bigg\{  e^{r_{t}'(S_{t})} \PP\Big[\omega(t)I_{t} \{W_{t, \Delta}Y_{t,\Delta}e^{-A_{t}f_{t}(S_{t})^{T}\beta-r_{t}'(S_{t})}-(1-W_{t, \Delta}Y_{t,\Delta})\}|S_{t},A_{t}\Big]\nonumber \\
 & \ \ \ \ \ \ \ \ \ \ \ \ \times \{A_{t}-m'_{t}(S_{t})\}f_{t}(S_{t}) \Bigg\}\nonumber \\
\end{align}
where the last equality holds because of iterated expectation. Replacing $\beta$ with $\beta^\star$, we have
\begin{align}
& \PP\{U^\text{SR-prelim}(\beta^\star,r',m')\} \nonumber \\
 = & \sum_{t=1}^T \PP\Bigg\{  e^{r_{t}'(S_{t})} \PP\Big[\omega(t) I_{t} \{(1 - W_{t, \Delta}Y_{t, \Delta}) e^{r_t^\star(S_t) -r'_t(S_t)}-(1-W_{t, \Delta}Y_{t,\Delta})\}|S_{t},A_{t}\Big]\nonumber \\
 & \ \ \ \ \ \ \ \ \ \ \ \ \times \{A_{t}-m'_{t}(S_{t})\}f_{t}(S_{t}) \Bigg\} \label{eq: inside first lemma equation of expit} \\
 =&\sum_{t=1}^T \PP\Bigg\{  e^{r_{t}'(S_{t})} \PP\Big[\omega(t) I_{t} (1 - W_{t, \Delta}Y_{t, \Delta})  \{ e^{r_t^\star(S_t) -r'_t(S_t)}-1\}|S_{t},A_{t}\Big]\nonumber \\
 & \ \ \ \ \ \ \ \ \ \ \ \ \times \{A_{t}-m'_{t}(S_{t})\}f_{t}(S_{t}) \Bigg\} \label{eq: inside first lemma equation of dr} 
\end{align}
where \eqref{eq: inside first lemma equation of expit} holds because of \eqref{eq: the USR causal model with r and beta}.
% \begin{align*}
% \PP (W_{t, \Delta}Y_{t, \Delta}|S_t, A_t, I_t = 1) =  
% \frac{1} {1 + e^{-A_tf_t(S_t)^T \beta^\star - r_t^\star(S_t)}}.
% \end{align*}
Thus, \eqref{eq: inside first lemma equation of dr} equals 0 when either $r'_t(S_t) = r_t(S_t)^\star$ or $m'_t(S_t) = m_t(S_t)^\star$. The completes the proof.
\end{proof}

\begin{lem}
\label{lem: EE =00003D 0}Suppose either $r'_t = r^\star_t$ or $m'_t = m^\star_t$, then $\PP\{\USR(\beta^{\star},\eta')\}=0$ for $t\in[T]$.
\end{lem}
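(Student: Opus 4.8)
The plan is to reduce $\PP\{\USR(\beta^{\star},\eta')\}$ to $\PP\{U^\text{SR-prelim}(\beta^{\star},r',m')\}$ and then invoke Lemma \ref{lem: PU =00003D 0}. The starting point is the decomposition established in the derivation of $\USR$ (Section \ref{supp-sec:_texorpdfstring_derivation_of_the_estimation_equation_usr_beta_r_m_mu_}): for each $t$,
\begin{align*}
U_t^\text{SR}(\beta,\eta_t) = U_t^\text{SR-prelim}(\beta,r_t,m_t) - a_t(\beta,\eta_t) + b_t(\beta,\eta_t),
\end{align*}
where $a_t$ is the expression \eqref{eq: first first part of EEUSR} with the true conditional mean $\EE(W_{t,\Delta}Y_{t,\Delta}\mid H_t,A_t)$ replaced by $A_t\mu_{1t}+(1-A_t)\mu_{0t}$ (a function of $(H_t,A_t)$), and $b_t$ is the expression \eqref{eq: first second part of EEUSR} (a function of $H_t$). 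Summing over $t$ and taking the population expectation, it suffices to show that each augmentation $-a_t+b_t$ has mean zero, since then $\PP\{\USR(\beta^{\star},\eta')\}=\PP\{U^\text{SR-prelim}(\beta^{\star},r',m')\}$.

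The key step is to verify that $b_t$ is \emph{exactly} the true conditional expectation of $a_t$ given $H_t$, i.e. $b_t(\beta,\eta_t)=\EE\{a_t(\beta,\eta_t)\mid H_t\}$, even when the limit $\mu'_t$ is arbitrary. This holds because $a_t$ depends on $A_t$ only through the binary factors $\{A_t-m_t\}$, $e^{-A_t f_t(S_t)^T\beta}$, and $A_t\mu_{1t}+(1-A_t)\mu_{0t}$, so averaging $a_t$ over $A_t\in\{0,1\}$ against the true conditional probabilities $P(A_t=1\mid H_t,I_t=1)=p_t(S_t)$ and $P(A_t=0\mid H_t,I_t=1)=1-p_t(S_t)$ reproduces precisely the two weighted terms defining $b_t$. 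Here the Simple Randomization identity $P(A_t=1\mid H_t,I_t=1)=p_t(S_t)$ is essential: it guarantees that the known probability $p_t(S_t)$ used in constructing $b_t$ coincides with the true treatment-assignment law, so no bias from the possibly misspecified $\mu'_t$ survives. By the law of iterated expectations, $\PP\{b_t\}=\PP\{\EE(a_t\mid H_t)\}=\PP\{a_t\}$, whence $\PP\{-a_t+b_t\}=0$ for every $t$ and every choice of $\mu'_t$.

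Combining the two steps gives $\PP\{\USR(\beta^{\star},\eta')\}=\sum_{t=1}^T\PP\{U_t^\text{SR-prelim}(\beta^{\star},r'_t,m'_t)\}$, and Lemma \ref{lem: PU =00003D 0} shows that each summand vanishes whenever $r'_t=r^{\star}_t$ or $m'_t=m^{\star}_t$. This yields $\PP\{\USR(\beta^{\star},\eta')\}=0$, as claimed. I expect the main obstacle to be precisely the key step: confirming that the added term $b_t$ equals the exact conditional expectation $\EE\{a_t\mid H_t\}$ under the plug-in $\mu'_t$, rather than merely the true-$\mu$ projection. This is exactly what makes $\mu_t$ an efficiency-only nuisance that need not be correctly specified, and it is where the Simple Randomization condition $p_t(H_t)=p_t(S_t)$ enters the argument; the remaining manipulations (the tower property and the appeal to Lemma \ref{lem: PU =00003D 0}) are routine.
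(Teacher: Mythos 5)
Your proposal is correct and follows essentially the same route as the paper's own proof: the paper rewrites $\USR(\beta,\eta)$ as $\sum_t\{U_t^\text{SR-prelim}(\beta,r_t,m_t)-H_t(\beta,\eta_t)+K_t(\beta,\eta_t)\}$ (your $a_t$ and $b_t$ are exactly its $H_t$ and $K_t$), shows $\PP\{H_t-K_t\}=0$ by conditioning on $(H_t,A_t)$ and averaging over $A_t$ with the randomization probabilities $p_t(S_t)$ — precisely your observation that $K_t=\EE\{H_t\mid H_t\}$ holds for arbitrary $\mu'_t$ — and then invokes Lemma \ref{lem: PU =00003D 0} for the preliminary term. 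Your write-up is if anything slightly more explicit about why the cancellation is pointwise in $\mu'_t$ and where Simple Randomization enters, but the decomposition, key cancellation, and appeal to Lemma \ref{lem: PU =00003D 0} coincide with the paper's argument.
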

\begin{proof}
We can rewrite $\USR(\beta,\eta)$ as
\begin{align}
\USR(\beta,\eta) & =\sum_{t}\{U_{t}^\text{SR-prelim}(\beta,r_{t},m_{t})-H_{t}(\beta,\eta_{t})+K_{t}(\beta,\eta_{t})\},\label{eq: lemma on Ptilde_U =00003D 0 rewrite tilde_U}
\end{align}
where 
\begin{align*}
H_{t}(\beta,\eta_{t}) & =\omega(t)I_{t}\{\mu_{t}e^{-A_{t}f_{t}(S_{t})^{T}\beta}-(1-\mu_{t})e^{r_{t}(S_{t})}\}\{A_{t}-m_{t}(S_{t})\}f_{t}(S_{t})\\
K_{t}(\beta,\eta_{t}) & =\omega(t)I_{t}\{\mu_{1t}e^{-f_{t}(S_{t})^{T}\beta}-(1-\mu_{1t})e^{r_{t}(S_{t})}\}\{1-m_{t}(S_{t})\}f_{t}(S_{t})p_{t}(S_{t})\\
 & \ \ -\omega(t)I_{t}\{\mu_{0t}-(1-\mu_{0t})e^{r_{t}(S_{t})}\}m_{t}(S_{t})f_{t}(S_{t})\{1-p_{t}(S_{t})\}
\end{align*}
By Lemma \ref{lem: PU =00003D 0}, $\PP\{U_{t}^\text{SR-prelim}(\beta^{\star},r'_{t},m'_{t})\}=0$.
Thus, it suffices to show that $\PP\sum_{t}\{H_{t}(\beta,\eta_{t}')-K_{t}(\beta,\eta_{t}')\}=0$.
By iterated expectation, 
\begin{align*}
 & \PP\sum_{t}\{H_{t}(\beta,\eta_{t}')-K_{t}(\beta,\eta_{t}')\}\\
= & \sum_{t}\PP\Big[\PP\{H_{t}(\beta,\eta_{t}')-K_{t}(\beta,\eta_{t}')|H_{t},A_{t}\}\Big]\\
= & \sum_{t}\PP\Big[\PP\{H_{t}(\beta,\eta_{t}')|H_{t},A_{t}=1\}p_{t}(S_{t})+\PP\{H_{t}(\beta,\eta_{t}')|H_{t},A_{t}=1\}\{1-p(S_{t})\}\Big]\\
 & -\sum_{t}\PP\{K_{t}(\beta,\eta_{t}')\}\\
= & 0.
\end{align*}
The conclusion of the lemma follows.
\end{proof}

\begin{lem}
\label{lem: sup|PU - PU| =00003D o_p(1)}Suppose Assumptions
\ref{assu: Regularity conditions: support compact space}, \ref{assu: (Regularity conditions: support of data is bounded)},
and \ref{assu:(Regularity-conditions) mt rt are between 0 and 1};
hold. Then for $\USR(\beta,\eta)$ defined in (\ref{eq: EE of method1 with improved efficiency}),
$\sup_{\beta\in\Theta}|\PP\{\USR(\beta,\hat{\eta})\}-\PP\{\USR(\beta,\eta')\}|=o_{p}(1)$.
\end{lem}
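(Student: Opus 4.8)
The plan is to exploit that $\USR(\beta,\eta)=\sum_{t=1}^T U_t^{\mathrm{SR}}(\beta,\eta_t)$ is a finite sum (with $T$ fixed) of terms depending smoothly on the nuisance triple $\eta_t=(r_t,m_t,\mu_t)$, and to bound the difference uniformly in $\beta$ by the $L_2$ distances $\|\hat r_t-r_t'\|$, $\|\hat m_t-m_t'\|$, and $\|\hat\mu_t-\mu_t'\|$. Each of these is $o_p(1)$ by the very definition of $\eta'$ as the $L_2$-limit of $\hat\eta$. Since the final bound will not involve $\beta$, passing to the supremum over $\Theta$ is immediate.

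First I would write, for each $t$, via Jensen's inequality,
\[
\bigl|\PP\{U_t^{\mathrm{SR}}(\beta,\hat\eta_t)\}-\PP\{U_t^{\mathrm{SR}}(\beta,\eta_t')\}\bigr|
\le \PP\bigl|U_t^{\mathrm{SR}}(\beta,\hat\eta_t)-U_t^{\mathrm{SR}}(\beta,\eta_t')\bigr|,
\]
and then decompose the integrand by replacing the nuisances one at a time (for instance $\mu$ first, then $r$, then $m$, so that the $\mu$-type quantities are held at the bounded limiting value $\mu'$ in the subsequent steps). The estimating function is affine in $\mu_t,\mu_{0t},\mu_{1t}$ and $m_t$, and depends on $r_t$ only through $e^{r_t}$, which is Lipschitz on the bounded range guaranteed by Assumption \ref{assu:(Regularity-conditions) mt rt are between 0 and 1}; hence each replacement isolates a factor $|\hat r_t-r_t'|$, $|\hat m_t-m_t'|$, or $|\hat\mu_t-\mu_t'|$ multiplied by the remaining factors.

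Next I would argue that those remaining factors are bounded by a fixed constant $C$, almost surely and uniformly over $\beta\in\Theta$. This is where the three regularity conditions combine: Assumption \ref{assu: (Regularity conditions: support of data is bounded)} bounds $W_{t,\Delta}Y_{t,\Delta}$, $A_t$, $I_t$, $p_t(S_t)$, and $f_t(S_t)$; Assumption \ref{assu:(Regularity-conditions) mt rt are between 0 and 1} bounds $m_t$, $e^{r_t}$, and the $\mu$-type factors; and Assumption \ref{assu: Regularity conditions: support compact space}, together with bounded $f_t(S_t)$, forces $f_t(S_t)^T\beta$ into a fixed compact interval, so the exponential factors $e^{-A_t f_t(S_t)^T\beta}$ and $e^{-f_t(S_t)^T\beta}$ are bounded above and below uniformly in $\beta$. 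Using $\PP|\cdot|\le\|\cdot\|$ (or Cauchy--Schwarz against the bounded factors), summing over the $T$ terms, and taking the supremum over $\beta$, which the right-hand side does not depend on, then yields
\[
\sup_{\beta\in\Theta}\bigl|\PP\{\USR(\beta,\hat\eta)\}-\PP\{\USR(\beta,\eta')\}\bigr|
\le C\sum_{t=1}^T\bigl(\|\hat r_t-r_t'\|+\|\hat m_t-m_t'\|+\|\hat\mu_t-\mu_t'\|\bigr)=o_p(1).
\]

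The main obstacle is making the Lipschitz bound genuinely uniform in $\beta$: every factor carrying $\beta$ enters through an exponential, so I must verify that $f_t(S_t)^T\beta$ stays in a fixed compact interval as $\beta$ ranges over $\Theta$ and the data over its support. This is exactly what compactness of $\Theta$ (Assumption \ref{assu: Regularity conditions: support compact space}) and boundedness of the data support (Assumption \ref{assu: (Regularity conditions: support of data is bounded)}) jointly deliver. Once the uniform constant $C$ is secured, the remainder is a routine triangle-inequality and $L_2$-convergence argument, with the finiteness of $T$ ensuring the sum of $o_p(1)$ terms is itself $o_p(1)$.
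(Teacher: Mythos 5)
Your proof is correct, but it takes a genuinely different route from the paper's. You bound the integrand pointwise: writing $\USR(\beta,\eta)=\sum_{t=1}^T U_t^{\mathrm{SR}}(\beta,\eta_t)$, you swap the nuisances one at a time, use that $U_t^{\mathrm{SR}}$ is affine in $(\mu_t,\mu_{0t},\mu_{1t})$ and in $m_t$ and Lipschitz in $r_t$ through $e^{r_t}$ on its bounded range, and absorb all remaining factors (including the $\beta$-dependent exponentials, uniformly over the compact $\Theta$ with bounded $f_t(S_t)$) into a constant, arriving at the bound $C\sum_t\bigl(\lVert\hat r_t-r_t'\rVert+\lVert\hat m_t-m_t'\rVert+\lVert\hat\mu_t-\mu_t'\rVert\bigr)$. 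The paper instead integrates first: it computes $\PP\{\USR(\beta,\eta)\}$ in closed form and exploits that, by the very construction of the projection terms, the dependence on $\mu$ cancels identically under the expectation, leaving $\PP\bigl[\sum_t\bigl\{g_{1t}(r_t,m_t)+e^{-f_t(S_t)^T\beta}g_{2t}(m_t)\bigr\}\bigr]$ in which $\beta$ enters only through the single factor $e^{-f_t(S_t)^T\beta}$; it then controls the two pieces by Cauchy--Schwarz using only the $L_2$ convergence of $\hat r_t$ and $\hat m_t$. The practical difference is that your argument invokes $\lVert\hat\mu_t-\mu_t'\rVert=o_p(1)$, which is legitimate here because $\mu_t'$ is by definition the $L_2$-limit of $\hat\mu_t$, but which the paper's argument never needs: under the paper's route the lemma is completely insensitive to the behavior of $\hat\mu_t$, in line with the paper's message that the $\mu_t$ model can be arbitrarily misspecified, whereas your route is more elementary and more generic, applying verbatim to any estimating function that is smooth in its nuisances without relying on the special cancellation structure of $\USR$. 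One minor point, shared with the paper's own proof of the neighboring lemmas: both arguments implicitly treat $\hat\mu_t$ (hence its limit $\mu_t'$) as bounded, which is not among the three listed regularity conditions but is harmless for a fitted conditional mean of a bounded outcome.
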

\begin{proof}
Let 
\begin{align*}
g_{1t}(r_{t},m_{t}) & =\omega(t)I_{t}\Big[\PP(W_{t, \Delta}Y_{t, \Delta}|H_{t,}A_{t}=1)e^{r_{t}(S_{t})}\{1-m_{t}(S_{t})\}p_{t}(S_{t})\\
 & \ \ +\PP(W_{t, \Delta}Y_{t, \Delta}|H_{t,}A_{t}=0)\{1+e^{r_{t}(S_{t})}\}\{-m_{t}(S_{t})\}\{1+p_{t}(S_{t})\}\Big],\\
g_{2t}(m_{t}) & =\omega(t)I_{t}\PP(W_{t, \Delta}Y_{t, \Delta}|H_{t,}A_{t}=1)\{1-m_{t}(S_{t})\}p_{t}(S_{t}).
\end{align*}
Then, $\PP\{\USR(\beta,\eta)\}$ can be rewritten as 
\[
\PP\{\USR(\beta,\eta)\}=\PP\Big[\sum_{t}\Big\{ g_{1t}(r_{t},m_{t})+e^{f_{t}(S_{t})^{T}\beta}g_{2t}(m_{t})\Big\}\Big].
\]
Therefore, we have
\begin{align}
 & \sup_{\beta\in\Theta}\Big|\PP\{\USR(\beta,\hat{\eta})\}-\PP\{\USR(\beta,\eta')\}\Big|\nonumber \\
= & \sup_{\beta\in\Theta}\Bigg|\PP\Big[\sum_{t}\Big\{ g_{1t}(\hat{r}_{t},\hat{m}_{t})+e^{-f_{t}(S_{t})^{T}\beta}g_{2t}(\hat{m}_{t})\Big\}\Big]\nonumber -\PP\Big[\sum_{t}\Big\{ g_{1t}(r_{t}',m_{t}')+e^{-f_{t}(S_{t})^{T}\beta}g_{2t}(m_{t}')\Big\}\Big]\Bigg|\nonumber \\
= & \sum_{t}\sup_{\beta\in\Theta}\Bigg|\PP\Big[e^{-f_{t}(S_{t})^{T}\beta}\Big\{ g_{2t}(\hat{m}_{t})-g_{2t}(m_{t}')\Big\}  -\Big\{ g_{1t}(\hat{r}_{t},\hat{m}_{t})-g_{1t}(r_{t}',m_{t}')\Big\}\Big]\nonumber \\
\leq & \sup_{\beta\in\Theta}\Bigg\{\Bigg|\PP\Big[e^{-f_{t}(S_{t})^{T}\beta}\Big\{ g_{2t}(\hat{m}_{t})-g_{2t}(m_{t}')\Big\}\Big]\Bigg|  +\Bigg|\PP\Big\{ g_{1t}(\hat{r}_{t},\hat{m}_{t})-g_{1t}(r_{t}',m_{t}')\Big\}\Bigg|\Bigg\},\label{eq:in PU - PU =00003D o_p(1) lemma whole equations}
\end{align}
where the last line follows from triangle inequality. It suffices
to show that each of the term in (\ref{eq:in PU - PU =00003D o_p(1) lemma whole equations})
is $o_{p}(1)$.

To control the first term in (\ref{eq:in PU - PU =00003D o_p(1) lemma whole equations}),
Cauchy-Schwartz inequality yields
\begin{align*}
 & \sup_{\beta\in\Theta}\Bigg|\PP\Big[e^{-f_{t}(S_{t})^{T}\beta}\Big\{ g_{2t}(\hat{m}_{t})-g_{2t}(m_{t}')\Big\}\Big]\Bigg|\\
\leq & \sup_{\beta\in\Theta}\lVert e^{-f_{t}(S_{t})^{T}\beta}\rVert\lVert g_{2t}(\hat{m}_{t})-g_{2t}(m_{t}')\lVert,
\end{align*}
and $\sup_{\beta\in\Theta}\lVert e^{-f_{t}(S_{t})^{T}\beta}\rVert$
is bounded because of Assumption \ref{assu: Regularity conditions: support compact space}
and \ref{assu: (Regularity conditions: support of data is bounded)}.
To control $\lVert g_{2t}(\hat{m}_{t})-g_{2t}(m_{t}')\lVert$, we
have
\begin{align}
 & \lVert g_{2t}(\hat{m}_{t})-g_{2t}(m_{t}')\lVert^{2}\nonumber \\
= & \int\big\{ g_{2t}(\hat{m}_{t})-g_{2t}(m_{t}')\big\}^{2}dP\nonumber \\
\leq & C\int(\hat{m}_{t}-m'_{t})^{2}dP\label{eq: in lemma sup PU - PU =00003D o_p(1) form of g}\\
= & o_{p}(1),\label{eq:in lemma sup PU - PU =00003D o_p(1) convergence of delta}
\end{align}
where (\ref{eq: in lemma sup PU - PU =00003D o_p(1) form of g}) follows
from the form of $g_{2t}$, and Assumptions \ref{assu: (Regularity conditions: support of data is bounded)}
and \ref{assu:(Regularity-conditions) mt rt are between 0 and 1};
(\ref{eq:in lemma sup PU - PU =00003D o_p(1) convergence of delta})
follows from the limit of $\hat{m}_{t}$.

To control the second term in (\ref{eq:in PU - PU =00003D o_p(1) lemma whole equations}),
we have
\begin{align}
 & \Bigg|\PP\Big\{ g_{1t}(\hat{r}_{t},\hat{m}_{t})-g_{1t}(r_{t}',m_{t}')\Big\}\Bigg|\nonumber \\
\le & \int\Big|g_{1t}(\hat{r}_{t},\hat{m}_{t})-g_{1t}(r_{t}',m_{t}')\Big|dP\nonumber \\
\leq & C\int|\hat{\eta}-\eta'|dP\label{eq: in lemma sup PU - PU =00003D op(1) bound of g1t}\\
\leq & (C\int|\hat{\eta}-\eta'|^{2}dP)^{1/2}\label{eq:in lemma sup PU - PU =00003D op(1) cauchy schwartz to get square}\\
= & o_{p}(1),\label{eq:in lemma sup PU - PU =00003D op(1) convergence of delta again}
\end{align}
where (\ref{eq: in lemma sup PU - PU =00003D op(1) bound of g1t})
follows from the form of $g_{1t}$, and Assumption \ref{assu: (Regularity conditions: support of data is bounded)}
and \ref{assu:(Regularity-conditions) mt rt are between 0 and 1};
(\ref{eq:in lemma sup PU - PU =00003D op(1) cauchy schwartz to get square})
follows from Cauchy-Schwartz inequality; and (\ref{eq:in lemma sup PU - PU =00003D op(1) convergence of delta again})
holds because of limit of $\hat{\eta}$. The conclusion of
the lemma follows.
\end{proof}

\begin{lem}
\label{lem:Consistency of =00005Chat=00007B=00005Cbeta=00007D}(Consistency
of $\hat{\beta}$). Suppose Assumptions
\ref{assu: unique zero}, \ref{assu: Regularity conditions: support compact space},
and \ref{assu:(Donsker-condition)} hold. Suppose either $r'_t = r^\star_t$ or $m'_t = m^\star_t$, then $\hat{\beta}\pto\beta^{\star}$
as $n\rightarrow\infty$.
\end{lem}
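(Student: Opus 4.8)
The plan is to treat this as a standard consistency argument for Z-estimators (in the style of van der Vaart's consistency theorem for estimating equations). Set $\Psi_n(\beta) := \PP_n\{\USR(\beta,\hat\eta)\}$ for the random criterion and $\Psi(\beta) := \PP\{\USR(\beta,\eta')\}$ for its deterministic target. By construction $\hat\beta$ satisfies $\Psi_n(\hat\beta)=0$, so it suffices to establish two ingredients: (i) $\beta^\star$ is a \emph{well-separated} zero of $\Psi$ over the compact set $\Theta$, and (ii) $\Psi_n$ converges to $\Psi$ \emph{uniformly} over $\Theta$. Given these, the zero of the empirical criterion must fall, with probability tending to one, into any neighborhood of the unique well-separated zero of the limit, yielding $\hat\beta\pto\beta^\star$.

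For ingredient (i), Lemma \ref{lem: EE =00003D 0} gives $\Psi(\beta^\star)=\PP\{\USR(\beta^\star,\eta')\}=0$ whenever either $r'_t=r^\star_t$ or $m'_t=m^\star_t$, and Assumption \ref{assu: unique zero} guarantees this zero is the unique one in $\Theta$. Continuity of $\Psi$ in $\beta$ follows from the smooth (indeed entire) dependence of $\USR(\beta,\eta')$ on $\beta$, which enters only through $e^{-A_t f_t(S_t)^T\beta}$ and $e^{-f_t(S_t)^T\beta}$, together with the dominated-convergence theorem using the integrable envelope supplied by the regularity conditions. A continuous function on the compact set $\Theta$ (Assumption \ref{assu: Regularity conditions: support compact space}) whose only zero is $\beta^\star$ is automatically well separated: for every $\epsilon>0$, the infimum of $\|\Psi(\beta)\|$ over the compact set $\{\beta\in\Theta:\|\beta-\beta^\star\|\ge\epsilon\}$ is attained and must be strictly positive.

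For ingredient (ii), I would split by the triangle inequality:
\[
\sup_{\beta\in\Theta}\big\|\Psi_n(\beta)-\Psi(\beta)\big\|
\le \sup_{\beta\in\Theta}\big\|(\PP_n-\PP)\USR(\beta,\hat\eta)\big\|
+ \sup_{\beta\in\Theta}\big\|\PP\{\USR(\beta,\hat\eta)\}-\PP\{\USR(\beta,\eta')\}\big\|.
\]
The second term is exactly Lemma \ref{lem: sup|PU - PU| =00003D o_p(1)}, hence $o_p(1)$. For the first term, the Donsker condition (Assumption \ref{assu:(Donsker-condition)}) makes $\{\USR(\beta,\eta):\beta\in\Theta,\eta\in T\}$ a $P$-Donsker class; since $P$-Donsker implies $P$-Glivenko--Cantelli, the empirical process $\sup_{\beta,\eta}\|(\PP_n-\PP)\USR(\beta,\eta)\|$ vanishes in probability, and evaluating at the random $\hat\eta\in T$ keeps this supremum $o_p(1)$ uniformly in $\beta$. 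Combining the two pieces gives $\sup_{\beta\in\Theta}\|\Psi_n(\beta)-\Psi(\beta)\|=o_p(1)$, completing (ii) and hence the proof.

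The main obstacle is the first term of ingredient (ii): because $\hat\eta$ is a random, data-dependent nuisance estimate, a pointwise law of large numbers does not justify replacing $\PP_n$ by $\PP$ at $\hat\eta$. One must control the sampling fluctuation uniformly over \emph{both} $\beta$ and the nuisance index simultaneously, and this is precisely where the Donsker assumption does the heavy lifting, decoupling the randomness in $\hat\eta$ from the sampling variability so that the plugged-in estimator behaves asymptotically like the deterministic limit $\eta'$. Everything else reduces to the unique-zero and compactness bookkeeping of a routine Z-estimation argument.
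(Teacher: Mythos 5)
Your proposal is correct and takes essentially the same approach as the paper's proof: both establish well-separation of $\beta^\star$ via compactness, continuity, uniqueness (Assumption \ref{assu: unique zero} together with Lemma \ref{lem: EE =00003D 0}), and both control exactly the same two error terms --- the nuisance plug-in bias via Lemma \ref{lem: sup|PU - PU| =00003D o_p(1)} and the sampling fluctuation via the Donsker-implies-Glivenko--Cantelli step. The only cosmetic difference is that you state ingredient (ii) as uniform convergence of $\PP_n\{\USR(\beta,\hat\eta)\}$ to $\PP\{\USR(\beta,\eta')\}$ over $\Theta$ and then invoke the standard Z-estimator consistency theorem, whereas the paper applies the identical triangle-inequality decomposition directly at the point $\hat\beta$ (using $\PP_n\{\USR(\hat\beta,\hat\eta)\}=0$) to conclude $\PP\{\USR(\hat\beta,\eta')\}\pto 0$.
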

\begin{proof}
For arbitray $\epsilon>0$, we would like to show $\lim_{n\rightarrow\infty}P(|\hat{\beta}-\beta^{\star}|>\epsilon)=0$. For any $\epsilon > 0$, consider $C:=\inf_{\beta\in\Theta:d(\beta,\beta^{\star})>\epsilon}|\PP\{\USR(\beta,\eta')\}|$.
Because the parameter space $\Theta$ is compact (Assumption \ref{assu: Regularity conditions: support compact space})
and the fact that $\PP\{\USR(\beta,\eta')\}$ is a continuous function
in $\beta$ , the infimum of $\PP\{\USR(\beta,\eta')\}$ is
attained, i.e. $\inf_{\beta\in\Theta:d(\beta,\beta^{\star})>\epsilon}|\PP\{\USR(\beta,\eta')\}|=\min_{\beta\in\Theta}|\PP\{\USR(\beta,\eta')\}|$.
Since $\beta^{\star}$ is the unique zero of $\PP\{\USR(\beta,\eta')\}$
(Assumption \ref{assu: unique zero} and Lemma \ref{lem: EE =00003D 0}), we have 
$\min_{\beta\in\Theta}|\PP\{\USR(\beta,\eta')\}|>0$. By onstructing
such C, we obtained that for any $\epsilon$, $|\beta-\beta^{\star}|>\epsilon$
implies $\PP\{\USR(\hat{\beta},\eta')\}>C$. Take $\beta=\hat{\beta}$,
we have 
$$
P(|\hat{\beta}-\beta^{\star}|>\epsilon)\leq P\big[\PP\{\USR(\hat{\beta},\eta')\} > C \big].
$$
Therefore, it suffices to show that $P\big[\PP\{\USR(\hat{\beta},\eta')\}\big]$ converges in probability to 0.

We first decompose 
\begin{align}
|\PP\{\USR(\hat{\beta},\eta')\}\}\leq|\PP\{\USR(\hat{\beta},\eta')\}-\PP\{\USR(\hat{\beta},\hat{\eta})\}|+|\PP\{\USR(\hat{\beta},\hat{\eta})\}-\PP_{n}\{\USR(\hat{\beta},\hat{\eta})\}|,\label{eq:in lemm consistency two terms}
\end{align}
where the inequality holds by triangle inequality. Next, we want to
show that both terms in (\ref{eq:in lemm consistency two terms})
are $o_{p}(1)$. For the first term in (\ref{eq:in lemm consistency two terms}),
by Lemma \ref{lem: sup|PU - PU| =00003D o_p(1)}, we have 
\[
|\PP\{\USR(\hat{\beta},\eta')\}-\PP\{\USR(\hat{\beta},\hat{\eta})\}|\leq\sup_{\beta\in\Theta}|\PP\{\USR(\beta,\eta')\}-\PP\{\USR(\beta,\hat{\eta})\}|=o_{p}(1).
\]
By Assumption \ref{assu:(Donsker-condition)}, $\{\USR(\beta,\eta):\beta\in\Theta,\eta\in\Tau\}$
is a Glivenko-Cantelli class. Thus, the second term in (\ref{eq:in lemm consistency two terms})
is 
\begin{align*}
|\PP\{\USR(\hat{\beta},\hat{\eta})\}-\PP_{n}\{\USR(\hat{\beta},\hat{\eta})\}| & \leq\sup_{\beta\in\Theta,\eta\in\Tau}|(\PP_{n}-\PP)\{\USR(\beta,\eta)\}|\\
 & =o_{p}(1).
\end{align*}
Therefore, $\PP\{\USR(\hat{\beta},\eta')\}$ converges in probability to 0 and conclusion of the lemma follows.
\end{proof}

\begin{lem}
\label{lem: Convergence of derivative}(Convergence of derivative).
Suppose Assumptions
\ref{assu: Regularity conditions: support compact space}, \ref{assu: (Regularity conditions: support of data is bounded)},
\ref{assu:(Donsker-condition)} , and \ref{assu:(Regularity-conditions) derivative is uniformly bounded}
hold. If $\hat{\beta}\pto\beta^{\star}$, then $\ensuremath{\PP_{n}\{\partial_{\beta}\USR(\hat{\beta},\hat{\eta})\}\pto\PP\{\partial_{\beta}\USR(\beta^{\star},\eta')\}}$
as $\ensuremath{n\rightarrow\infty}$.
\end{lem}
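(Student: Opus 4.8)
The plan is to show $\PP_n\{\partial_\beta \USR(\hat\beta, \hat\eta)\} - \PP\{\partial_\beta \USR(\beta^\star, \eta')\} = o_p(1)$ using the same two-step template as in the consistency proof (Lemma~\ref{lem:Consistency of =00005Chat=00007B=00005Cbeta=00007D}): split the difference, via the triangle inequality, into an empirical-process fluctuation term and a population-continuity term, namely
\begin{align*}
& |\PP_n\{\partial_\beta \USR(\hat\beta, \hat\eta)\} - \PP\{\partial_\beta \USR(\beta^\star, \eta')\}| \\
& \quad \le |(\PP_n - \PP)\{\partial_\beta \USR(\hat\beta, \hat\eta)\}| + |\PP\{\partial_\beta \USR(\hat\beta, \hat\eta)\} - \PP\{\partial_\beta \USR(\beta^\star, \eta')\}|,
\end{align*}
and then argue that each summand converges to zero in probability.

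For the first summand, because the random pair $(\hat\beta, \hat\eta)$ lies in $\Theta \times \Tau$, I would dominate it by the uniform deviation $\sup_{\beta \in \Theta,\, \eta \in \Tau} |(\PP_n - \PP)\{\partial_\beta \USR(\beta, \eta)\}|$. Assumption~\ref{assu:(Donsker-condition)} states that $\{\partial_\beta \USR(\beta, \eta)\}$ is $P$-Donsker, hence Glivenko--Cantelli, so this supremum is $o_p(1)$. This is precisely the device applied to the second term of \eqref{eq:in lemm consistency two terms} in the consistency proof.

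For the second summand I would establish continuity of the deterministic map $(\beta, \eta) \mapsto \PP\{\partial_\beta \USR(\beta, \eta)\}$ at $(\beta^\star, \eta')$ and then invoke the continuous mapping theorem with $\hat\beta \pto \beta^\star$ (the hypothesis) and $\|\hat\eta - \eta'\| \pto 0$. Continuity in $\beta$ comes from dominated convergence: each entry of $\partial_\beta \USR$ is smooth in $\beta$ (a product of polynomials and exponentials of $f_t(S_t)^T\beta$ and $r_t(S_t)$), and the integrable envelope supplied by Assumption~\ref{assu:(Regularity-conditions) derivative is uniformly bounded}, together with compactness of $\Theta$ (Assumption~\ref{assu: Regularity conditions: support compact space}), lets the limit pass inside the expectation. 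Continuity in $\eta$ reuses the $L_2$/Cauchy--Schwarz estimate of Lemma~\ref{lem: sup|PU - PU| =00003D o_p(1)}: since the support of $O$ is bounded (Assumption~\ref{assu: (Regularity conditions: support of data is bounded)}) and the nuisances stay in $[\epsilon, 1-\epsilon]$ (Assumption~\ref{assu:(Regularity-conditions) mt rt are between 0 and 1}), $\partial_\beta \USR$ is linear in bounded, Lipschitz functions of the nuisances, so $|\PP\{\partial_\beta \USR(\beta, \hat\eta)\} - \PP\{\partial_\beta \USR(\beta, \eta')\}|$ is bounded uniformly in $\beta$ by a constant multiple of $\|\hat\eta - \eta'\|$, which is $o_p(1)$.

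The main obstacle will be handling the two arguments simultaneously: $\hat\beta$ is finite-dimensional and converges in probability, whereas $\hat\eta$ is an infinite-dimensional nuisance converging only in $L_2$. The crux is to establish the joint continuity of the population functional at $(\beta^\star, \eta')$ with moduli that are uniform over the relevant ranges, so that the continuous mapping theorem applies to the random pair $(\hat\beta, \hat\eta)$. The bounded-support and bounded-nuisance assumptions are what make this possible, since they render the exponential and polynomial factors in $\partial_\beta \USR$ uniformly bounded and uniformly Lipschitz, allowing the dominated-convergence step in $\beta$ and the Cauchy--Schwarz step in $\eta$ to proceed with constants that do not depend on the other argument.
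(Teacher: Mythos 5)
Your proposal is correct and follows essentially the same route as the paper: the paper likewise bounds the difference by the uniform empirical-process deviation (controlled via the Donsker, hence Glivenko--Cantelli, condition) plus the population drift, which it splits into a move in $\beta$ handled by dominated convergence under the integrable envelope and a move in $\eta$ handled by a Cauchy--Schwarz bound of order $\int|\hat{\eta}-\eta'|\,dP$. Your two-term decomposition with joint continuity argued argument-by-argument is just a repackaging of the paper's three-term decomposition, with identical assumptions and tools.
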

\begin{proof}
We have
\begin{align}
 & \ \PP_{n}\{\partial_{\beta}\USR(\hat{\beta},\hat{\eta})\}-\PP\{\partial_{\beta}\USR(\beta^{\star},\eta')\}\nonumber \\
 & \leq\sup_{\beta\in\Theta,\eta\in\Tau}(\PP_{n}-\PP)\{\partial_{\beta}\USR(\hat{\beta},\hat{\eta})\}\nonumber \\
 & \ \ \ \ +\Big[\PP\{\partial_{\beta}\USR(\hat{\beta},\hat{\eta})\}-\PP\{\partial_{\beta}\USR(\beta^{\star},\hat{\eta})\}\Big]+\Big[\PP\{\partial_{\beta}\USR(\beta^{\star},\hat{\eta})\}-\PP\{\partial_{\beta}\USR(\beta^{\star},\eta')\}\Big],\label{eq:in lemma convergence of derivative three terms}
\end{align}
To show that $|\PP_{n}\{\partial_{\beta}\USR(\hat{\beta},\hat{\eta})\}-\PP\{\partial_{\beta}\USR(\beta^{\star},\eta')\}|=o_{p}(1)$,
it suffices to show that each of the three terms in (\ref{eq:in lemma convergence of derivative three terms})
is $o_{p}(1)$.

For the first term in (\ref{eq:in lemma convergence of derivative three terms}),
Assumptino \ref{assu:(Donsker-condition)} implies that $\{\partial_{\beta}\USR(\beta,\eta):\beta\in\Theta,\eta\in\Tau\}$
is a Glivenko-Cantelli class. Thus, $\sup_{\beta\in\Theta,\eta\in\Tau}(\PP_{n}-\PP)\{\partial_{\beta}\USR(\hat{\beta},\hat{\eta})\}=o_{p}(1)$.

For the second term in (\ref{eq:in lemma convergence of derivative three terms}),
by the fact that $\hat{\beta}\pto\beta^{\star},$ dominatedness of
$\partial_{\beta}\USR(\beta,\eta)$ (Assumption\ref{assu:(Regularity-conditions) derivative is uniformly bounded}),
and dominated convergence theorem (\citet{chung2001course} Result (viii) of Chapter 3.2), we have 
\[
|\PP\{\partial_{\beta}\USR(\hat{\beta},\hat{\eta})\}-\PP\{\partial_{\beta}\USR(\beta^{\star},\hat{\eta})\}|=o_{p}(1).
\]

For the third term in (\ref{eq:in lemma convergence of derivative three terms}),
based on the form of $\USR(\beta,\eta)$ and Assumption \ref{assu: Regularity conditions: support compact space},
\ref{assu: (Regularity conditions: support of data is bounded)},
we have
\begin{align*}
\PP\{\partial_{\beta}\USR(\beta^{\star},\hat{\eta})\}-\PP\{\partial_{\beta}\USR(\beta^{\star},\eta')\} & \leq\int\Big[\PP\{\partial_{\beta}\USR(\beta^{\star},\hat{\eta})\}-\PP\{\partial_{\beta}\USR(\beta^{\star},\eta')\}\Big]dP\\
 & \leq C\int|\hat{\eta}-\eta'|dP\\
 & =o_{p}(1).
\end{align*}
This completes the proof.
\end{proof}

\begin{lem}
\label{lem:(Convergence-of-the meat term)}(Convergence of the \textquotedblleft meat\textquotedblright{}
term). Suppose Assumptions
\ref{assu:(Donsker-condition)} and \ref{assu:(Regularity-conditions) derivative is uniformly bounded}
hold. If $\hat{\beta}\pto\beta^{\star}$, then $\PP_{n}\{\USR(\hat{\beta},\hat{\eta})\USR(\hat{\beta},\hat{\eta})^{T}\}\pto\PP\{\USR(\beta^{\star},\eta')\USR(\beta^{\star},\eta')^{T}\}$. 
\end{lem}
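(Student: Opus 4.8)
The plan is to mirror the structure of the proof of Lemma \ref{lem: Convergence of derivative}, working entrywise since $\USR\USR^T$ is matrix-valued, and decomposing the target difference into an empirical-process term, a $\beta$-variation term, and an $\eta$-variation term. Writing $G(\beta,\eta) := \USR(\beta,\eta)\USR(\beta,\eta)^T$, I would first insert and subtract intermediate terms to obtain, by the triangle inequality (interpreted entrywise),
\begin{align*}
 \PP_n\{G(\hat{\beta},\hat{\eta})\} - \PP\{G(\beta^\star,\eta')\}
 &\leq \sup_{\beta\in\Theta,\eta\in\Tau}(\PP_n - \PP)\{G(\beta,\eta)\} \\
 &\quad + \Big[\PP\{G(\hat{\beta},\hat{\eta})\} - \PP\{G(\beta^\star,\hat{\eta})\}\Big]
 + \Big[\PP\{G(\beta^\star,\hat{\eta})\} - \PP\{G(\beta^\star,\eta')\}\Big],
\end{align*}
and then show each of the three terms is $o_p(1)$.

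For the empirical-process term, the key is to upgrade the Donsker assumption on $\USR$ (Assumption \ref{assu:(Donsker-condition)}) to a Glivenko--Cantelli statement for the product class $\{G(\beta,\eta):\beta\in\Theta,\eta\in\Tau\}$. Since $\USR(\beta,\eta)$ is P-Donsker and, by Assumption \ref{assu:(Regularity-conditions) derivative is uniformly bounded}, $G=\USR\USR^T$ is dominated by an integrable envelope, I would invoke the standard preservation result that products of uniformly bounded Donsker classes are again Donsker, hence Glivenko--Cantelli; applied to each entry of the matrix this yields $\sup_{\beta\in\Theta,\eta\in\Tau}(\PP_n-\PP)\{G(\beta,\eta)\}=o_p(1)$.

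For the second term I would fix $\eta=\hat{\eta}$ and combine $\hat{\beta}\pto\beta^\star$ with the continuity of $G$ in $\beta$, the integrable domination from Assumption \ref{assu:(Regularity-conditions) derivative is uniformly bounded}, and the dominated convergence theorem to conclude $|\PP\{G(\hat{\beta},\hat{\eta})\}-\PP\{G(\beta^\star,\hat{\eta})\}|=o_p(1)$. For the third term I would fix $\beta=\beta^\star$ and bound the difference by an integral of $|\hat{\eta}-\eta'|$; the uniform boundedness and the locally Lipschitz structure of $G$ in $\eta$, together with Cauchy--Schwarz and the $L_2$-convergence $\hat{\eta}\to\eta'$, give $o_p(1)$, exactly as in the third term of Lemma \ref{lem: Convergence of derivative}.

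The main obstacle I anticipate is the empirical-process term: the Donsker hypothesis is stated for $\USR$ itself rather than for the product $G=\USR\USR^T$, so the crux is justifying that the Glivenko--Cantelli property is preserved under the (matrix) product, which is precisely where the uniform-boundedness assumption is essential. The remaining two terms are routine continuity-plus-domination arguments, identical in spirit to those already carried out for the ``bread'' matrix in the preceding lemma, so I would keep their treatment brief and refer back to that proof.
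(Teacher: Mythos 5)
Your proposal is correct and takes essentially the same route as the paper: bound the error by an empirical-process term, handled via the Glivenko--Cantelli property of the product class $\{\USR(\beta,\eta)\USR(\beta,\eta)^{T}\}$ inherited from the Donsker assumption, plus a deterministic bias term, handled via convergence of $(\hat{\beta},\hat{\eta})$ together with the integrable domination of $\USR\USR^{T}$. The only cosmetic difference is that the paper keeps the bias as a single term and applies dominated convergence jointly in $(\beta,\eta)$, whereas you split it into a $\beta$-part (dominated convergence) and an $\eta$-part (Lipschitz bound plus Cauchy--Schwarz), mirroring the structure of the derivative lemma; in fact you are slightly more explicit than the paper about why the Donsker property of $\USR$ transfers to the product class.
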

\begin{proof}
We have
\begin{align}
 & \Big|\PP_{n}\{\USR(\hat{\beta},\hat{\eta})\USR(\hat{\beta},\hat{\eta})^{T}\}-\PP\{\USR(\beta^{\star},\eta')\USR(\beta^{\star},\eta')^{T}\}\Big|\nonumber \\
\leq & \sup_{\beta\in\Theta,\eta\in\Tau}|(\PP_{n}-\PP)\{\USR(\hat{\beta},\hat{\eta})\USR(\hat{\beta},\hat{\eta})^{T}\}|\nonumber \\
 & \ \ \ +\Big|\PP\{\USR(\hat{\beta},\hat{\eta})\USR(\hat{\beta},\hat{\eta})^{T}\}-\PP\{\USR(\beta^{\star},\eta')\USR(\beta^{\star},\eta')^{T}\}\Big|.\label{eq: in lemma convergence of meat term two terms}
\end{align}
To show $\PP_{n}\{\USR(\hat{\beta},\hat{\eta})\USR(\hat{\beta},\hat{\eta})^{T}\}\pto\PP\{\USR(\beta^{\star},\eta')\USR(\beta^{\star},\eta')^{T}\}$,
it suffices to show that each of the two terms in (\ref{eq: in lemma convergence of meat term two terms})
is $o_{p}(1)$.

For the first term in (\ref{eq: in lemma convergence of meat term two terms}),
by Assumption \ref{assu:(Donsker-condition)}, $\{\USR(\beta,\eta)\USR(\beta,\eta)^{T}:\beta\in\Theta,\eta\in\Tau\}$
is a Glivenko-Cantelli class. Thus, $\sup_{\beta\in\Theta,\eta\in\Tau}|(\PP_{n}-\PP)\{\USR(\hat{\beta},\hat{\eta})\USR(\hat{\beta},\hat{\eta})^{T}\}|=o_{p}(1)$.

For the second term in (\ref{eq: in lemma convergence of meat term two terms}),
since $(\hat{\beta},\hat{\eta})\pto(\beta^{\star},\eta'$) and dominatedness of $\USR(\beta,\eta)\USR(\beta,\eta)^{T}$
(Assumption \ref{assu:(Regularity-conditions) derivative is uniformly bounded}),
\[
\Big|\PP\{\USR(\hat{\beta},\hat{\eta})\USR(\hat{\beta},\hat{\eta})^{T}\}-\PP\{\USR(\beta^{\star},\eta')\USR(\beta^{\star},\eta')^{T}\}\Big|=o_{p}(1)
\]
by dominated convergence theorem (\citet{chung2001course} Result (viii) of Chapter 3.2). This completes the proof.
\end{proof}

\begin{lem}
\label{lem: bound of U(beta_star, delta_hat) and U(beta_star, delta')}Suppose
Assumptions
\ref{assu: Regularity conditions: support compact space}, \ref{assu: (Regularity conditions: support of data is bounded)}, and 
\ref{assu:(Regularity-conditions) mt rt are between 0 and 1} hold.
Then for $\USR(\beta,\eta)$ defined in $\eqref{eq: EE of method1 with improved efficiency}$,
we have $\lVert\USR(\beta^{\star},\hat{\eta})-\USR(\beta^{\star},\eta')\rVert^{2}=o_{p}(1)$.
\end{lem}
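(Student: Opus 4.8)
The plan is to exploit the finiteness of the number of decision points together with the $L_2$-convergence of each estimated nuisance component. Since $\USR(\beta^\star,\eta)=\sumt U_t^\text{SR}(\beta^\star,\eta_t)$ is a sum of $T$ terms, the triangle inequality for the $L_2$ norm gives $\lVert\USR(\beta^\star,\hat\eta)-\USR(\beta^\star,\eta')\rVert\le\sumt\lVert U_t^\text{SR}(\beta^\star,\hat\eta_t)-U_t^\text{SR}(\beta^\star,\eta'_t)\rVert$. Because $T$ is fixed, it suffices to show that each summand is $o_p(1)$; the total norm is then $o_p(1)$ and so is its square, which is the claim.

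Fixing $t$, I would show that the map $\eta_t=(r_t,m_t,\mu_t)\mapsto U_t^\text{SR}(\beta^\star,\eta_t)$ is Lipschitz pointwise in $O$, with a constant $C$ that is uniform over the region determined by the regularity conditions. The key structural observation is that $m_t$, $\mu_t$, $\mu_{0t}$, and $\mu_{1t}$ all enter $U_t^\text{SR}$ linearly, whereas $r_t$ enters only through $e^{r_t(S_t)}$, which is Lipschitz on a bounded interval with derivative $e^{r_t}$. Under Assumption \ref{assu: (Regularity conditions: support of data is bounded)} the quantities $W_{t,\Delta}Y_{t,\Delta}$, $f_t(S_t)$, and $p_t(S_t)$ are bounded; under Assumption \ref{assu: Regularity conditions: support compact space} the factor $e^{-A_t f_t(S_t)^T\beta^\star}$ is bounded; and under Assumption \ref{assu:(Regularity-conditions) mt rt are between 0 and 1}, together with boundedness of the outcome-regression fits $\mu_t$, the quantities $e^{r_t}$, $m_t$, and $\mu_t$ all stay in a bounded range. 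The only nonlinearities are the products among these bounded factors (for instance $\mu_t e^{r_t}\{A_t-m_t\}$), which I would handle by a telescoping decomposition, replacing one nuisance component at a time and bounding each increment using the uniform bounds on the remaining fixed factors. This yields, pointwise in $O$, the estimate $|U_t^\text{SR}(\beta^\star,\hat\eta_t)-U_t^\text{SR}(\beta^\star,\eta'_t)|\le C(|\hat r_t-r'_t|+|\hat m_t-m'_t|+|\hat\mu_t-\mu'_t|)$.

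Squaring this pointwise bound, applying $(a+b+c)^2\le 3(a^2+b^2+c^2)$, and integrating against $P$ gives $\lVert U_t^\text{SR}(\beta^\star,\hat\eta_t)-U_t^\text{SR}(\beta^\star,\eta'_t)\rVert^2\le 3C^2\big(\lVert\hat r_t-r'_t\rVert^2+\lVert\hat m_t-m'_t\rVert^2+\lVert\hat\mu_t-\mu'_t\rVert^2\big)$. Each term on the right is $o_p(1)$ because $\hat r_t$, $\hat m_t$, and $\hat\mu_t$ converge in $L_2(P)$ to their limits $r'_t$, $m'_t$, and $\mu'_t$ by the very definition of those limits. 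Combining over the finitely many $t$ and taking square roots completes the argument.

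The main obstacle I anticipate is verifying the uniform Lipschitz property for the genuinely nonlinear pieces, namely the exponential $e^{r_t(S_t)}$ and the bilinear products of nuisance functions; this is where the regularity conditions do the real work, since the exponential is Lipschitz only once $r_t$ is confined to a bounded interval (Assumption \ref{assu:(Regularity-conditions) mt rt are between 0 and 1}) and the product terms require a uniform bound on every factor, including $\mu_t$. A minor gap to close is that Assumption \ref{assu:(Regularity-conditions) mt rt are between 0 and 1} bounds $r_t$ and $m_t$ but not $\mu_t$ explicitly; I would address this by noting that $\mu_t$ is the regression of the bounded variable $W_{t,\Delta}Y_{t,\Delta}$, so its truth and $L_2$-limit are bounded and the fits may be taken bounded under the bounded support of $O$ (Assumption \ref{assu: (Regularity conditions: support of data is bounded)}).
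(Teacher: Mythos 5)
Your proposal is correct and follows essentially the same route as the paper's proof: both telescope through the nuisance components one at a time (replacing $\hat r_t$, then $\hat m_t$, then $\hat\mu_t$), exploit the regularity conditions to bound the exponential term $e^{r_t}$ and the remaining factors, and finish with the $L_2$-convergence of each nuisance estimator to its limit; the only cosmetic differences are that the paper bounds $\lVert\sum_t(\cdot)\rVert^2$ by $2T^2\max_t\lVert\cdot\rVert^2$ where you use the triangle inequality, and the paper works with norm-level bounds of each telescoped difference where you derive a pointwise Lipschitz estimate first. Your closing remark about $\mu_t$ not being explicitly bounded by the stated assumptions is a fair observation — the paper's proof relies on this boundedness implicitly as well — and your fix (boundedness of the regression of the bounded variable $W_{t,\Delta}Y_{t,\Delta}$) is the right one.
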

\begin{proof}
We have 
\begin{align*}
 & \lVert\USR(\beta^{\star},\hat{\eta})-\USR(\beta^{\star},\eta')\rVert^{2}\\
= & \int|\USR(\beta^{\star},\hat{\eta})-\USR(\beta^{\star},\eta')|^{2}dP\\
= & \int\Big|\sum_{t}\Big\{\USR_{t}(\beta^{\star},\hat{r}_{t},\hat{m}_{t},\hat{\mu}_{t,})-\USR_{t}(\beta^{\star},r'_{t},\hat{m}_{t},\hat{\mu}_{t,})+\USR_{t}(\beta^{\star},r'_{t},\hat{m}_{t},\hat{\mu}_{t,})-\USR_{t}(\beta^{\star},r'_{t},m'_{t},\hat{\mu}_{t,})\\
 & \ \ +\USR_{t}(\beta^{\star},r'_{t},m'_{t},\hat{\mu}_{t,})-\USR_{t}(\beta^{\star},r'_{t},m'_{t},\mu'_{t})\Big\}\Big|^{2}dP\\
\leq & 2T^{2}\Bigg\{\max_{1\leq t\leq T}\int|\USR_{t}(\beta^{\star},\hat{r}_{t},\hat{m}_{t},\hat{\mu}_{t,})-\USR_{t}(\beta^{\star},r'_{t},\hat{m}_{t},\hat{\mu}_{t,})|^{2}dP\\
 & \ \ \ +\max_{1\leq t\leq T}\int|\USR_{t}(\beta^{\star},r'_{t},\hat{m}_{t},\hat{\mu}_{t,})-\USR_{t}(\beta^{\star},r'_{t},m'_{t},\hat{\mu}_{t,})|^{2}dP\\
 & \ \ \ +\max_{1\leq t\leq T}\int|\USR_{t}(\beta^{\star},r'_{t},m'_{t},\hat{\mu}_{t,})-\USR_{t}(\beta^{\star},r'_{t},m'_{t},\mu'_{t})\Big\}\Big|^{2}dP\Bigg\}
\end{align*}
where the last inequality follows from (\citet{cheng2023efficient} 
Lemma B2). Thus, it suffices to show that for all $t$, 
\begin{align}
\int|\USR_{t}(\beta^{\star},\hat{r}_{t},\hat{m}_{t},\hat{\mu}_{t,})-\USR_{t}(\beta^{\star},r'_{t},\hat{m}_{t},\hat{\mu}_{t,})|^{2}dP & =o_{p}(1)\label{eq: in Lemma U - U is op(1) three term r_t}\\
\int|\USR_{t}(\beta^{\star},r'_{t},\hat{m}_{t},\hat{\mu}_{t,})-\USR_{t}(\beta^{\star},r'_{t},m'_{t},\hat{\mu}_{t,})|^{2}dP & =o_{p}(1)\label{eq: in Lemma U - U is op(1) three term m_t}\\
\int|\USR_{t}(\beta^{\star},r'_{t},m'_{t},\hat{\mu}_{t,})-\USR_{t}(\beta^{\star},r'_{t},m'_{t},\mu'_{t})\Big\}\Big|^{2}dP & =o_{p}(1).\label{eq:in Lemma U - U is op(1) three term mu_t}
\end{align}

To show (\ref{eq: in Lemma U - U is op(1) three term r_t}), we have
\begin{align}
 & \int|\USR_{t}(\beta^{\star},\hat{r}_{t},\hat{m}_{t},\hat{\mu}_{t,})-\USR_{t}(\beta^{\star},r'_{t},\hat{m}_{t},\hat{\mu}_{t,})|^{2}dP\nonumber \\
= & \int\Big|\big\{ e^{\hat{r}_{t}(S_{t})}-e^{r_{t}'(S_{t})}\big\} \omega(t) I_{t}f_{t}(S_{t})\Bigg[(W_{t,\Delta} Y_{t,\Delta}-\hat{\mu}_{t})\{A_{t}-\hat{m}_{t}(S_{t})\}\nonumber \\
 & \ \ +(1-\hat{\mu}_{1t})\{1-\hat{m}_{t}(S_{t})\}p_{t}(S_{t})+(1-\hat{\mu}_{0t})\hat{m}_{t}(S_{t})\{1-p_{t}(S_{t})\}\Bigg]\Big|^{2}dP\nonumber \\
\leq & C\int|e^{\hat{r}_{t}(S_{t})}-e^{r_{t}'(S_{t})}|^{2}dP\label{eq: in lemma U -U is op(1) r integral}\\
= & o_{p}(1),\label{eq:in lemma U -U is op(1) r is op(1)}
\end{align}
where (\ref{eq: in lemma U -U is op(1) r integral}) follows from
Assumptions \ref{assu: (Regularity conditions: support of data is bounded)}
and \ref{assu:(Regularity-conditions) mt rt are between 0 and 1}
, and (\ref{eq:in lemma U -U is op(1) r is op(1)}) follows from limit
of $\hat{r}_{t}$, 
exponential function is continuous, and $r_{t}$ is bounded away from
0 and 1 (Assumption \ref{assu:(Regularity-conditions) mt rt are between 0 and 1}).

To show (\ref{eq: in Lemma U - U is op(1) three term m_t}), we have
\begin{align}
 & \int|\USR_{t}(\beta^{\star},r'_{t},\hat{m}_{t},\hat{\mu}_{t,})-\USR_{t}(\beta^{\star},r'_{t},m'_{t},\hat{\mu}_{t,})|^{2}dP\nonumber \\
= & \int\Big|\{\hat{m}_{t}(S_{t})-m_{t}'(S_{t})\}\omega(t) I_{t}f_{t}(S_{t})\Big[(W_{t,\Delta}Y_{t,\Delta}-\mu_{t})\Big\{ e^{-A_{t}f_{t}(S_{t})^{T}\beta^{\star}}+e^{r_{t}'(S_{t})}\Big\}\nonumber \\
 & +\Big\{\mu_{1t}e^{-f_{t}(S_{t})^{T}\beta^{\star}}-(1-\mu_{1t})e^{r_{t}'(S_{t})}\Big\} p_{t}(S_{t})+\Big\{\mu_{0t}-(1-\mu_{0t})e^{r_{t}'(S_{t})}\Big\}\{1-p_{t}(S_{t})\}\Big]\Big|^{2}dP\nonumber \\
\leq & \int|\hat{m}_{t}(S_{t})-m_{t}'(S_{t})|^{2}dP\label{eq: in lemma U - U is op(1) difference of m}\\
= & o_{p}(1)
\end{align}
where (\ref{eq: in lemma U - U is op(1) difference of m}) follows
from Assumptions \ref{assu: Regularity conditions: support compact space},
\ref{assu: (Regularity conditions: support of data is bounded)} and
\ref{assu:(Regularity-conditions) mt rt are between 0 and 1}.

To show (\ref{eq:in Lemma U - U is op(1) three term mu_t}), we have
\begin{align*}
 & \int|\USR_{t}(\beta^{\star},r'_{t},m'_{t},\hat{\mu}_{t,})-\USR_{t}(\beta^{\star},r'_{t},m'_{t},\mu'_{t})\Big\}\Big|^{2}dP\\
= & \int\Big|(\mu_{t}'-\hat{\mu}_{t})\Big\{ e^{-A_{t}f_{t}(S_{t})^{T}\beta^{\star}}+e^{r_{t}'(S_{t})}\Big\}\Big|^{2}dP\\
\leq & \int|\mu_{t}'-\hat{\mu}_{t}|^{2}dP\\
= & o_{p}(1)
\end{align*}
Thus, (\ref{eq:in Lemma U - U is op(1) three term mu_t})
holds. The conclusion of the lemma follows.
\end{proof}

\begin{lem}
\label{lem:(Rate-Double-Robustness)}(Rate Double Robustness) Suppose
Assumptions \ref{assu: Regularity conditions: support compact space},
\ref{assu: (Regularity conditions: support of data is bounded)},
and \ref{assu:(Regularity-conditions) mt rt are between 0 and 1}
hold. Then for
$\USR(\beta,\eta)$ defined in (\ref{lem: EE =00003D 0}),
we have $|\PP\{\USR(\beta^{\star},\hat{\eta})\}|\lessrate\lVert\hat{r}_t-r^{\star}_t\rVert\lVert\hat{m}_t-m^{\star}_t\rVert$.
\end{lem}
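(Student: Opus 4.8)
The plan is to reduce the claim to a ``mixed bias'' that is exactly bilinear in the two nuisance errors and then bound it by Cauchy--Schwarz. Write $\rho_t := W_{t,\Delta}Y_{t,\Delta}$ and set $b(r,m) := \PP\{U^{\text{SR-prelim}}(\beta^\star, r, m)\}$, regarding $r,m$ as fixed (deterministic) functions. First I would recall the algebraic decomposition already used in the proof of Lemma~\ref{lem: EE =00003D 0}, namely $\USR(\beta,\eta)=\sum_t\{U_t^{\text{SR-prelim}}(\beta,r_t,m_t)-H_t(\beta,\eta_t)+K_t(\beta,\eta_t)\}$, together with the observation from that proof that $\PP\sum_t\{H_t(\beta^\star,\eta_t)-K_t(\beta^\star,\eta_t)\}=0$ for \emph{arbitrary} nuisance functions: it only uses iterated expectation over $A_t$ with the known randomization probability $p_t$, and $K_t$ is precisely $\PP\{H_t\mid H_t\}$. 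Consequently $\PP\{\USR(\beta^\star,\hat\eta)\}=b(\hat r,\hat m)$; in particular $\hat\mu$ drops out entirely, so it suffices to bound $b(\hat r,\hat m)$.

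Next I would exploit the two moment identities already contained in Lemma~\ref{lem: PU =00003D 0}: for every fixed $m$ one has $b(r^\star,m)=0$, and for every fixed $r$ one has $b(r,m^\star)=0$. Because $U^{\text{SR-prelim}}$ is \emph{affine} in $m$ through the factor $\{A_t-m_t\}$, the second-order difference collapses exactly:
\begin{align*}
b(\hat r,\hat m)
&= b(\hat r,\hat m)-b(\hat r,m^\star)-b(r^\star,\hat m)+b(r^\star,m^\star)\\
&= \sumt \PP\big[\omega(t)\,I_t\,(1-\rho_t)\,\{e^{\hat r_t(S_t)}-e^{r^\star_t(S_t)}\}\,\{\hat m_t(S_t)-m^\star_t(S_t)\}\,f_t(S_t)\big],
\end{align*}
where the three subtracted terms vanish by the above, and the $\rho_t e^{-A_tf_t^T\beta^\star}$ contribution cancels between the two differences since it does not involve $r$. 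This is the crux: the bias is a single product of $(e^{\hat r_t}-e^{r^\star_t})$ and $(\hat m_t-m^\star_t)$, with no stand-alone first-order piece.

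Finally I would bound each summand. Positivity (Assumption~\ref{assu: positivity}) keeps $W_{t,\Delta}$ bounded, and boundedness of the support (Assumption~\ref{assu: (Regularity conditions: support of data is bounded)}) bounds $|I_t(1-\rho_t)f_t(S_t)|$ by a constant; boundedness of $r^\star_t,\hat r_t$ (Assumption~\ref{assu:(Regularity-conditions) mt rt are between 0 and 1}) and the mean value theorem give $|e^{\hat r_t}-e^{r^\star_t}|\le C\,|\hat r_t-r^\star_t|$. Hence
\[
|b(\hat r,\hat m)|\le C\sumt \omega(t)\,\PP\big(|\hat r_t-r^\star_t|\,|\hat m_t-m^\star_t|\big)\le C\sumt \omega(t)\,\lVert\hat r_t-r^\star_t\rVert\,\lVert\hat m_t-m^\star_t\rVert
\]
by Cauchy--Schwarz, and since $T$ is finite and $\omega(t)$ is bounded this yields $|\PP\{\USR(\beta^\star,\hat\eta)\}|\lessrate \lVert\hat r_t-r^\star_t\rVert\,\lVert\hat m_t-m^\star_t\rVert$, as claimed.

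The step I expect to be the main obstacle is the exact mixed-difference identity in the second paragraph: one must verify that the $r$-free term $\rho_t e^{-A_t f_t^T\beta^\star}$ truly cancels and that affineness in $m$ leaves no higher-order residual, which relies on Lemma~\ref{lem: PU =00003D 0} holding for arbitrary (not necessarily correct) $m$ and $r$ respectively. A secondary technical point is confirming that the reduction $\PP\{\USR(\beta^\star,\hat\eta)\}=b(\hat r,\hat m)$ is legitimate when $\hat\eta$ is data-dependent; this is where one treats $\hat\eta$ as a fixed $L_2$-limit (or invokes sample splitting together with the Donsker condition in Assumption~\ref{assu:(Donsker-condition)}), so that the population expectation over an independent copy applies.
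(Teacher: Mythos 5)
Your proposal is correct, and it reaches the paper's conclusion by a genuinely different (more structural) route. The paper proves the lemma by direct computation: it conditions on $(S_t,A_t)$, uses the identity $\EE(W_{t,\Delta}Y_{t,\Delta}\mid S_t,A_t,I_t=1)=\expit\{r_t^\star(S_t)+A_tf_t(S_t)^T\beta^\star\}$ to turn the outcome bracket into $\{e^{r_t^\star}-e^{\hat r_t}\}/\{1+e^{A_tf_t^T\beta^\star+r_t^\star}\}$, then uses the definition of $m_t^\star$ to convert the factor $\{A_t-\hat m_t(S_t)\}$ into $\{m_t^\star(S_t)-\hat m_t(S_t)\}$, and finally bounds $e^{r_t^\star-\hat r_t}-1$ by a Taylor expansion and applies Cauchy--Schwarz. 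You instead exploit the fact that $U^{\mathrm{SR\text{-}prelim}}$ is affine in $e^{r_t}$ and affine in $m_t$, and collapse the bias via the second-order mixed difference $b(\hat r,\hat m)-b(\hat r,m^\star)-b(r^\star,\hat m)+b(r^\star,m^\star)$, with the three subtracted terms vanishing by the two null identities in Lemma~\ref{lem: PU =00003D 0} (which, as you note, hold for arbitrary nuisances on one side); this yields the same bilinear product $\PP\big[\omega(t)I_t(1-\rho_t)\{e^{\hat r_t}-e^{r_t^\star}\}\{\hat m_t-m_t^\star\}f_t\big]$ without recomputing any conditional expectations, and makes the absence of first-order terms (the Neyman-orthogonality structure underlying rate double robustness) transparent. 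Your proof also makes explicit a step the paper performs silently: its first display replaces $\USR$ by $U^{\mathrm{SR\text{-}prelim}}$, which is justified exactly by your observation that $K_t=\EE(H_t\mid H_t)$ for \emph{arbitrary} nuisance values, so the $\hat\mu$-dependent terms have mean zero and drop out. The trade-off: the paper's computation produces the explicit closed form of the bias (useful if one wants constants), while your argument is more modular and would survive unchanged under reparametrizations of the estimating function that preserve bilinearity.
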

\begin{proof}
We have 
\begin{align}
& \ \ \PP\{\USR(\beta^{\star},\hat{\eta})\} \\ 
& =\sum_{t=1}^{T}\PP\Big[ \omega(t) I_{t}\{W_{t,\Delta}Y_{t,\Delta}e^{-A_{t}f_{t}(S_{t})^{T}\beta^{\star}}-(1-Y_{t,\Delta})e^{\hat{r}_{t}(S_{t})}\}\{A_{t}-\hat{m}_{t}(S_{t})\}f_{t}(S_{t})\Big]\nonumber \\
 & =\sum_{t=1}^{T}\PP\Bigg\{\PP\Big[\omega(t) I_{t}\{W_{t,\Delta}Y_{t,\Delta}e^{-A_{t}f_{t}(S_{t})^{T}\beta^{\star}}-(1-W_{t,\Delta}Y_{t,\Delta})e^{\hat{r}_{t}(S_{t})}\}\{A_{t}-\hat{m}_{t}(S_{t})\}f_{t}(S_{t})\Big]\Big|S_{t},A_{t}\Bigg\}\nonumber \\
 & =\sum_{t=1}^{T}\PP\Bigg[\omega(t) I_{t}\Big\{\frac{e^{r_{t}^{\star}(S_{t})}}{1+e^{A_{t}f_{t}(S_{t})^{T}\beta^{\star}+r_{t}^{\star}(S_{t})}}-\frac{e^{\hat{r}_{t}(S_{t})}}{1+e^{A_{t}f_{t}(S_{t})^{T}\beta^{\star}+r_{t}^{\star}(S_{t})}}\Big\}\{m_{t}^{\star}(S_{t})-\hat{m}_{t}(S_{t})\}f_{t}(S_{t})\Bigg]\nonumber \\
 & =\sum_{t=1}^{T}\PP\Bigg[\omega(t) I_{t}\frac{e^{\hat{r}_{t}(S_{t})}}{1+e^{A_{t}f_{t}(S_{t})^{T}\beta^{\star}+r_{t}^{\star}(S_{t})}}\{e^{r_{t}^{\star}(S_{t})-\hat{r}_{t}(S_{t})}-1\}\{m_{t}^{\star}(S_{t})-\hat{m}_{t}(S_{t})\}f_{t}(S_{t})\Bigg].\label{eq: inside lemma rate DR simplified eq}
\end{align}
By Assumption \ref{assu: Regularity conditions: support compact space},
\ref{assu: (Regularity conditions: support of data is bounded)},
and \ref{assu:(Regularity-conditions) mt rt are between 0 and 1},
\[
\PP\{\USR(\beta^{\star},\hat{\eta})\}\lessrate\sum_{t=1}^{T}\PP\Big[\{e^{r_{t}^{\star}(S_{t})-\hat{r}_{t}(S_{t})}-1\}\{m_{t}^{\star}(S_{t})-\hat{m}_{t}(S_{t})\}\Big].
\]
By Taylor expansion on exponential function, $e^{r_{t}^{\star}(S_{t})-\hat{r}_{t}(S_{t})}-1\lessrate r_{t}^{\star}(S_{t})-\hat{r}_{t}(S_{t})$
when $\lVert r^{\star}_t-\hat{r}_t\rVert^{2}=o_{p}(1)$. Thus, $|\PP\{\USR(\beta^{\star},\hat{\eta})\}|\lessrate\lVert\hat{r}_t-r^{\star}_t\rVert\lVert\hat{m}_t-m^{\star}_t\rVert$. 
\end{proof}

\subsection{\texorpdfstring{Asymptotic Normality of $\hat{\beta}^\text{SR}$}{Asymptotic Normality of beta-hat SR}}
\begin{thm}
Suppose Assumption 1 (Consistency, Positivity, and Sequentially
ignorability) in the main paper, Assumptions \ref{assu: unique zero}, 
\ref{assu: Regularity conditions} hold.
For each $t \in [T]$, suppose that either $r'_t = r^\star_t$ or $m'_t = m^\star_t$, then $\hat{\beta}^\text{SR}$ is consistent. Furthermore, if the fitted nuisance functions satisfies 
\begin{align}
     \lVert\hat{r}_t-r_t^{\star}\rVert\lVert\hat{m}_t-m^{\star}_t\rVert=o_{p}(n^{-1/2}),
\end{align}
for each $t \in [T]$, then $\hat{\beta}^\text{SR}$ is asymptotically normal: $\sqrt{n}(\hat{\beta}^\text{SR}-\beta^{\star})\xrightarrow{d}N(0,V^\text{SR})$
as $n\rightarrow\infty$ where 
\[
 V^\text{SR} =\EE\{\partial_{\beta}\USR(\beta^{{\star}}, \eta')\}^{-1}\PP\{\USR(\beta^{{\star}},\eta')\USR(\beta^{{\star}}, \eta')^{T}\}\PP\{\partial_{\beta}\USR(\beta^{{\star}},\eta')\}^{-1,T},
\]
 and $ V^\text{SR} $ can be consistently estimated by 
\[
\PP_{n}\{\partial_{\beta}\USR(\hat{\beta}^\text{SR},\hat{\eta})\}^{-1}\PP_{n}\{\USR(\hat{\beta}^\text{SR},\hat{\eta})\USR(\hat{\beta}^\text{SR},\hat{\eta})^{T}\}\PP_{n}\{\partial_{\beta}\USR(\hat{\beta}^\text{SR},\hat{\eta})\}^{-1,T}.
\]
\end{thm}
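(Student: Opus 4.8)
The plan is to treat $\hat\beta^{\mathrm{SR}}$ as a $Z$-estimator solving $\PP_n\{\USR(\beta,\hat\eta)\}=0$ and to run the standard two-step argument: establish consistency first, then linearize to obtain asymptotic normality, with the nuisance-estimation error controlled by the rate double robustness condition. For the consistency claim I would invoke the consistency lemma (Lemma~\ref{lem:Consistency of =00005Chat=00007B=00005Cbeta=00007D}) directly: under the unique-zero assumption, compactness of $\Theta$, and the Donsker/Glivenko--Cantelli condition, together with the key identity $\PP\{\USR(\beta^\star,\eta')\}=0$ established in Lemma~\ref{lem: EE =00003D 0} whenever, for each $t$, either $r'_t=r_t^\star$ or $m'_t=m_t^\star$, this yields $\hat\beta^{\mathrm{SR}}\pto\beta^\star$.

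For asymptotic normality I would start from $\PP_n\{\USR(\hat\beta^{\mathrm{SR}},\hat\eta)\}=0$ and perform a mean-value expansion in $\beta$ around $\beta^\star$, giving
\[
\sqrt{n}(\hat\beta^{\mathrm{SR}}-\beta^\star) = -\big[\PP_n\{\partial_\beta\USR(\tilde\beta,\hat\eta)\}\big]^{-1}\sqrt{n}\,\PP_n\{\USR(\beta^\star,\hat\eta)\},
\]
where $\tilde\beta$ lies on the segment between $\hat\beta^{\mathrm{SR}}$ and $\beta^\star$. The derivative factor converges in probability to the invertible matrix $\PP\{\partial_\beta\USR(\beta^\star,\eta')\}$ by Lemma~\ref{lem: Convergence of derivative} (using $\tilde\beta\pto\beta^\star$), so it suffices to analyze the normalized estimating function $\sqrt{n}\,\PP_n\{\USR(\beta^\star,\hat\eta)\}$, which I would split as
\[
\sqrt{n}\,\PP_n\{\USR(\beta^\star,\hat\eta)\} = \sqrt{n}(\PP_n-\PP)\{\USR(\beta^\star,\hat\eta)\} + \sqrt{n}\,\PP\{\USR(\beta^\star,\hat\eta)\}.
\]
For the empirical-process term, the Donsker condition (Assumption~\ref{assu:(Donsker-condition)}) together with the $L_2$-convergence $\lVert\USR(\beta^\star,\hat\eta)-\USR(\beta^\star,\eta')\rVert=o_p(1)$ from Lemma~\ref{lem: bound of U(beta_star, delta_hat) and U(beta_star, delta')} gives asymptotic equicontinuity, letting me replace $\hat\eta$ by $\eta'$ at a cost of $o_p(1)$; a CLT then applies to $\sqrt{n}(\PP_n-\PP)\{\USR(\beta^\star,\eta')\}$, which (using $\PP\{\USR(\beta^\star,\eta')\}=0$) converges in distribution to $N\big(0,\PP\{\USR(\beta^\star,\eta')\USR(\beta^\star,\eta')^{T}\}\big)$. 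The deterministic-bias term is where the rate condition enters: by Lemma~\ref{lem:(Rate-Double-Robustness)}, $|\PP\{\USR(\beta^\star,\hat\eta)\}|\lessrate\sum_t\lVert\hat r_t-r_t^\star\rVert\lVert\hat m_t-m_t^\star\rVert=o_p(n^{-1/2})$, so $\sqrt{n}\,\PP\{\USR(\beta^\star,\hat\eta)\}=o_p(1)$. Combining these via Slutsky's theorem delivers $\sqrt{n}(\hat\beta^{\mathrm{SR}}-\beta^\star)\dto N(0,V^{\mathrm{SR}})$ with the stated sandwich form.

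For the consistency of the plug-in variance estimator I would apply Lemma~\ref{lem: Convergence of derivative} to the ``bread'' and Lemma~\ref{lem:(Convergence-of-the meat term)} to the ``meat'' $\PP_n\{\USR\,\USR^{T}\}$, then invoke the continuous mapping theorem for matrix inversion and multiplication. The main obstacle is the control of the deterministic-bias term $\sqrt{n}\,\PP\{\USR(\beta^\star,\hat\eta)\}$: the noncollapsibility of the logit link means $\USR$ is not Neyman-orthogonal in a term-by-term sense, and the crucial point is that the construction underlying Lemma~\ref{lem:(Rate-Double-Robustness)} makes the bias a \emph{product} of the $\hat r_t$ and $\hat m_t$ errors, so that individual $n^{-1/4}$ rates suffice, while the efficiency-augmentation nuisance $\mu_t$ drops out of the bias entirely and may be arbitrarily misspecified. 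The empirical-process replacement of $\hat\eta$ by $\eta'$ is the second delicate step, handled by pairing the Donsker condition with the $L_2$-consistency established in Lemma~\ref{lem: bound of U(beta_star, delta_hat) and U(beta_star, delta')}.
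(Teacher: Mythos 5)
Your proposal is correct and follows essentially the same route as the paper's proof: consistency via the consistency lemma, a mean-value expansion around $\beta^{\star}$, the decomposition of $\sqrt{n}\,\PP_n\{\USR(\beta^{\star},\hat\eta)\}$ into an empirical-process term (handled by the Donsker condition, the $L_2$-convergence lemma, and asymptotic equicontinuity) plus a deterministic bias term (killed by the rate double robustness lemma), and finally Slutsky plus the derivative and ``meat'' convergence lemmas with the continuous mapping theorem for the variance estimator. The only minor difference is cosmetic: you write the bias bound as a sum over $t$ of the products $\lVert\hat r_t-r_t^{\star}\rVert\lVert\hat m_t-m_t^{\star}\rVert$, which is if anything a slightly more careful statement than the paper's, and your remarks on Neyman orthogonality are commentary rather than a deviation in the argument.
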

\begin{proof}
Combing the fact that $\PP_{n}\USR(\hat{\beta},\hat{\eta})=0$ and
the Lagrange mean value theorem, we have
\begin{align*}
\PP_{n}\USR(\beta^{\star},\hat{\eta})+\{\frac{{\partial}}{\partial\beta^{T}}\PP_{n}\USR(\beta',\hat{\eta})\}(\hat{\beta}-\beta^{\star}) & =0\\
\PP_{n}\USR(\beta^{\star},\hat{\eta})+\{\PP_{n}\partial_{\beta}\USR(\beta',\hat{\eta})\}(\hat{\beta}-\beta^{\star}) & =0,
\end{align*}
where $\beta'$ is between $\hat{\beta}$ and $\beta^{\star}$. By
Lemma \ref{lem:Consistency of =00005Chat=00007B=00005Cbeta=00007D},
$\hat{\beta}\pto\beta^{\star}$. Thus, $\beta'\pto\beta^{\star}$.
By Lemma \ref{lem: Convergence of derivative}, $\PP_{n}\{\partial_{\beta}\USR(\hat{\beta},\hat{\eta})\}\pto\PP\{\partial_{\beta}\USR(\beta^{\star},\eta')\}$.
By Assumption \ref{assu: partialU is invertiable}, we have 
\begin{equation}
\sqrt{n}(\hat{\beta}-\beta^{\star})=-\{\PP_{n}\partial_{\beta}\USR(\beta^{\star},\eta')\}^{-1}\sqrt{n}\{\PP_{n}\USR(\beta^{\star},\hat{\eta})\}.\label{eq: inside theorem decompose equatioin}
\end{equation}
For $\sqrt{n}\{\PP_{n}\USR(\beta^{\star},\hat{\eta})\}$ in
(\ref{eq: inside theorem decompose equatioin}), we can expand it
into 
\begin{align}
\sqrt{n}\PP_{n}\USR(\beta^{\star},\hat{\eta}) & =\sqrt{n}\Big\{\PP_{n}\USR(\beta^{\star},\hat{\eta})-\PP\USR(\beta^{\star},\hat{\eta})+\PP\USR(\beta^{\star},\hat{\eta})+\PP\USR(\beta^{\star},\eta')\Big\}\label{eq: inside theorem using lemma unique zero}\\
 & =\sqrt{n}(\PP_{n}-\PP)\{\USR(\beta^{\star},\hat{\eta})\}+\sqrt{n}\PP\{\USR(\beta^{\star},\hat{\eta})-\USR(\beta^{\star},\eta')\},\label{eq: inside lemma showing the second part}
\end{align}
where (\ref{eq: inside theorem using lemma unique zero}) holds because
of Lemma \ref{lem: EE =00003D 0}. Next, we want to show that the
first term in (\ref{eq: inside lemma showing the second part}), $(\PP_{n}-\PP)\{\USR(\beta^{\star},\hat{\eta})\}$
is asymptotically normal. By Assumption \ref{assu:(Donsker-condition)},
Lemma \ref{lem: bound of U(beta_star, delta_hat) and U(beta_star, delta')},
and (Lemma 19.24 of \citet{van2000asymptotic}), we have 
\begin{align}
\sqrt{n}(\PP_{n}-\PP)\{\USR(\beta^{\star},\hat{\eta})-\USR(\beta^{\star},\eta')\} & =o_{p}(1)\nonumber \\
\Rightarrow\ \ \ \ \ \ \ \ \ \ \ \ \sqrt{n}(\PP_{n}-\PP)\{\USR(\beta^{\star},\hat{\eta})\} & =\sqrt{n}(\PP_{n}-\PP)\{\USR(\beta^{\star},\eta')\}+o_{p}(1).\label{eq: inside theorem swap between delta_hat and delta'}
\end{align}
Since $\PP\{\USR(\beta^{\star},\eta')\}=0$ (Lemma \ref{lem: PU =00003D 0}),
$$\sqrt{n}(\PP_{n}-\PP)\{\USR(\beta^{\star},\eta')\}\dto N(0,\PP\{\USR(\beta^{\star},\eta')\USR(\beta^{\star},\eta')^{T}\})$$
by Linderberg-Feller Central Limit Theorem. By (\ref{eq: inside theorem swap between delta_hat and delta'}),
we have 
\[
\sqrt{n}(\PP_{n}-\PP)\{\USR(\beta^{\star},\hat{\eta})\}\dto N(0,\PP\{\USR(\beta^{\star},\eta')\USR(\beta^{\star},\eta')^{T}\}).
\]

For the second term in (\ref{eq: inside lemma showing the second part}),
combing with Lemma \ref{lem: EE =00003D 0} and \ref{lem:(Rate-Double-Robustness)}
yields that
\[
|\PP\{\USR(\beta^{\star},\hat{\eta})\}-\PP\{\USR(\beta^{\star},\eta')\}|=|\PP\{\USR(\beta^{\star},\hat{\eta})\}|\lessrate\lVert\hat{r}_t-r^{\star}_t\rVert\lVert\hat{m}_t-m^{\star}_t\rVert.
\]
By Assumption of the theorem, $|\PP\{\USR(\beta^{\star},\hat{\eta})\}-\PP\{\USR(\beta^{\star},\eta')\}|=o_{p}(1/\sqrt{n}).$
Therefore, it follows from Slutsky\textquoteright s theorem and the
continuous mapping theorem that $\sqrt{n}(\hat{\beta}-\beta^{\star})\dto N(0,V^\text{SR})$
with $V^\text{SR}$ defined in the theorem statement. By Lemma \ref{lem: Convergence of derivative},
Lemma \ref{lem:(Convergence-of-the meat term)}, and the continuous
mapping theorem, consistency of the variance estimator follows. This
completes the proof.
\end{proof}

%%%%%%%%%%%%%%%%%%%%%%%%%%%%%%%%%%%%%%%%%%%%%%%%%%%%
\section{Proof of Theorem 2}
\label{supp-sec:proof_of_consistency_and_asymptotic_normality_of_method_2}
\subsection{Setup and Notations}
Define $\UGR_t(\beta, \alpha, \mu_t)$ to be a summand of $\UGR(\beta, \alpha, \mu)$ such that $\UGR(\beta, \alpha, \mu) = \sum_{t=1}^T \UGR_t(\beta, \alpha, \mu_t)$, and
\begin{align}
&\UGR_t(\beta, \alpha, \mu_t) \\
  = &\sum_{t=1}^{T} \omega(t) I_t\bigg[ \expit\{g_t(S_t)^T \alpha - f_t(S_t)^T\beta\} - \mu_{0t} - \frac{1 - A_t}{1 - p_t(H_t)} \{W_{t, \Delta}Y_{t, \Delta} - \mu_{0t} \} \bigg] f_t(S_t).\label{eq: m2 - estimating equation}
\end{align}
Let $\theta = (\beta, \alpha)$ and $\mu_t'$ denote the limit of the fitted nuisance function $\hat{\mu}_t$.

\subsection{Assumptions}
We state the following necessary regularity conditions for establishing Theorem 2 in the main paper
\begin{asu}
\label{assu: convergence of the nuisance limit} (Convergence of nuisance parameter)
There exists some $\mu_t'$ such that $\lVert \hat{\mu}_t - \mu_t' \rVert = o_p(1)$.
\end{asu}

\begin{asu}
\label{assu:m2 - Unique zero}(Unique zero) Given the limit $\mu'$, $\PP\{\UGR(\theta, \mu')\}$ as a function of $\theta$ has a unique zero.
\end{asu}
\begin{asu} (Regularity conditions)
	\label{assu: m2 - regularity condition}
	\begin{asulist}
% 	\item \label{assu:m2  - Regularity conditions on probability of Y_=00007Bt,=00005CDelta=00007D)} There exist positive
% constants $C_{1},C_{2}$ such that $\PP(Y_{t,\Delta}=1|H_{t},A_{t})$
% is bounded between them.
	\item \label{assu: m2-Regularity conditions: support compact space} Suppose the parameter sapce $\Theta$ of $\beta$ is
compact.
	\item \label{assu: m2- (Regularity conditions: support of data is bounded)} Suppose the support of $O$ is bounded.
	\item \label{assu: m2 - partialU is invertiable}
Suppose $\PP\{\partial_{\theta}\UGR(\theta^{\star}, \mu')\}$ is invertible.
	\item \label{assu: m2 - (Regularity-conditions) derivative is uniformly bounded} $\partial_{\theta}\UGR(\theta,\mu)$ and $\UGR(\theta,\mu)\UGR(\theta,\mu)^{T}$
are uniformly bounded by integrable function. 
	\end{asulist}
\end{asu}
\begin{asu}
\label{assu:m2 - donsker class}(Donsker condition). Suppose for each t, the estimator $\hat{\mu}_t$ take values in Donsker class.
\end{asu}

\subsection{Lemmas}
\begin{lem}
\label{lem: m2 - EE =00003D 0} $\PP\{\UGR(\theta^{\star}, \mu)\}=0$ for any $\mu$.
\end{lem}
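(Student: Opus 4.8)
The plan is to verify this mean-zero identity directly at the population level by nested conditioning, exploiting the augmentation structure that renders the estimating function unbiased for \emph{any} choice of $\mu$. Writing $\PP\{\UGR(\theta^\star,\mu)\} = \sumt \omega(t)\,\EE[\cdot]$, where the $t$-th summand is $I_t\big[\expit\{g_t(S_t)^T\alpha^\star - f_t(S_t)^T\beta^\star\} - \mu_{0t} - \frac{1-A_t}{1-p_t(H_t)}\{W_{t,\Delta}Y_{t,\Delta}-\mu_{0t}\}\big]f_t(S_t)$, I would treat each $t$ separately and first take the conditional expectation given $H_t$ on the event $I_t=1$.

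The first step handles the inverse-probability term. Since $\mu_{0t}$ and $p_t(H_t)$ are $H_t$-measurable and $\EE[(1-A_t)\mid H_t,I_t=1]=1-p_t(H_t)$, we obtain $\EE\big[\frac{1-A_t}{1-p_t(H_t)}\mu_{0t}\mid H_t,I_t=1\big]=\mu_{0t}$; while by the definition $\mu_{0t}^\star=\EE(W_{t,\Delta}Y_{t,\Delta}\mid H_t,A_t=0,I_t=1)$ together with iterated expectation, $\EE\big[\frac{1-A_t}{1-p_t(H_t)}W_{t,\Delta}Y_{t,\Delta}\mid H_t,I_t=1\big]=\mu_{0t}^\star$. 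Hence the conditional expectation of the bracket given $H_t$ collapses to $\expit\{g_t(S_t)^T\alpha^\star - f_t(S_t)^T\beta^\star\}-\mu_{0t}^\star$: the arbitrary nuisance $\mu_{0t}$ cancels, which is exactly the robustness in $\mu$ that the lemma encodes.

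The second step is a further expectation given $S_t$. The key algebraic identity, obtained by combining the association model $\psi_t^\star(S_t)=g_t(S_t)^T\alpha^\star$, the CEE model $\cee(t;S_t)=f_t(S_t)^T\beta^\star$, and the identification formula \eqref{eq: observed causal excursion effect}, is $\logit[\EE\{\mu_{0t}^\star\mid S_t,I_t=1\}]=g_t(S_t)^T\alpha^\star - f_t(S_t)^T\beta^\star$, equivalently $\expit\{g_t(S_t)^T\alpha^\star - f_t(S_t)^T\beta^\star\}=\EE\{\mu_{0t}^\star\mid S_t,I_t=1\}$. Because $f_t(S_t)$ and the $\expit$ term are $S_t$-measurable, conditioning on $S_t$ gives $\EE\big[I_t\{\expit(\cdot)-\mu_{0t}^\star\}\mid S_t\big]=\expit(\cdot)\,P(I_t=1\mid S_t)-\EE[I_t\mu_{0t}^\star\mid S_t]$, and since $\EE[I_t\mu_{0t}^\star\mid S_t]=\EE\{\mu_{0t}^\star\mid S_t,I_t=1\}\,P(I_t=1\mid S_t)$, the two terms cancel and each $t$ contributes zero.

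The main obstacle is bookkeeping around the eligibility indicator $I_t$ and the paper's convention that conditioning on $I_t=1$ means conditioning on $H_t\setminus\{I_t\}$ together with $\{I_t=1\}$. I would be careful that $\mu_{0t}$, $\mu_{0t}^\star$, and $p_t(H_t)$ are $H_t$-measurable whereas $f_t(S_t)$ and $g_t(S_t)$ depend only on $S_t$, so that the nested conditioning (first on $H_t$, then on $S_t$) is legitimate. No rate or Donsker conditions enter here, since this is an exact population identity at $\theta^\star$ rather than an asymptotic statement.
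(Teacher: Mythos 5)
Your proof is correct and takes essentially the same route as the paper's: first condition on $H_t$ so that the augmentation structure cancels the arbitrary $\mu_{0t}$ (the paper does this by splitting on $A_t \in \{0,1\}$ and recombining, which is the same computation), leaving $\expit\{g_t(S_t)^T\alpha^\star - f_t(S_t)^T\beta^\star\} - \mu_{0t}^\star$, and then condition on $(S_t, I_t=1)$ and use the model-correctness identity to make each summand vanish. If anything, your write-up is slightly tighter, since you state explicitly the pointwise identity $\expit\{g_t(S_t)^T\alpha^\star - f_t(S_t)^T\beta^\star\} = \EE\{\mu_{0t}^\star \mid S_t, I_t=1\}$ that the final cancellation genuinely requires, whereas the paper records only a weighted-average (projection-type) version of it before invoking it through the nonlinear $\expit$.
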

\begin{proof}
The definition of $\beta^\star$ and $\alpha^\star$ implies that,
\begin{align*}
	\PP \Bigg\{ \sum_{t = 1}^T \omega(t) \Bigg( g_t(S_t)^T \alpha^\star - f_t(S_t)^T\beta^\star - \logit[\PP\{\PP(W_{t,\Delta}Y_{t,\Delta}|H_t, A_t = 0)\}|S_t, I_t = 1] \Bigg) \Bigg\}= 0 \numberthis \label{eq: the model alpha and beta together}
\end{align*}

For any $\mu$, we have
\begin{align}
 & \PP\{\UGR(\theta, \mu)\}\nonumber \\
= & \sum_{t=1}^{T}\PP\Big[\PP\big\{\UGR_{t}(\beta,\alpha,\mu_{t})|H_{t,}A_{t}=1\big\} p_{t}(H_t)+\PP\big\{\UGR_{t}(\beta, \alpha,\mu_{t})|H_{t,}A_{t}=0\big\} \{1-p_{t}(H_t)\} \Big]\nonumber \\
= & \sum_{t=1}^{T}\PP\Bigg\{ \omega(t) I_{t}\Big[\expit\{ g_t(S_{t})^T \alpha -f_{t}(S_{t})^{T}\beta\}-\mu_{0t}\Big]f_{t}(S_{t})p_{t}(H_t)\nonumber \\
 & \ \ \ \ \ + \omega(t)I_{t}\Big[\expit\{g_t(S_{t})^T \alpha-f_{t}(S_{t})^{T}\beta\}-\mu_{0t} \\ 
 &\quad \quad -\frac{1}{1-p_t(H_t)}\{\PP(W_{t,\Delta} Y_{t,\Delta}|H_{t},A_{t}=0)-\mu_{0t}\}\Big]f_{t}(S_{t})\{1-p_{t}(H_t)\}\Bigg\}\nonumber \\
= & \sum_{t=1}^{T}\PP\Bigg\{ \omega(t) I_{t}\Big[\expit\{g_t(S_{t})^T \alpha -f_{t}(S_{t})^{T}\beta\}-\PP(W_{t,\Delta}Y_{t,\Delta}|H_{t},A_{t}=0)\Big]f_{t}(S_{t})\Bigg\}\nonumber \\
= & \sum_{t=1}^{T}\PP\Bigg[\PP\Bigg\{ \omega(t) I_{t}\Big[\expit\{g_t(S_{t})^T \alpha-f_{t}(S_{t})^{T}\beta\}-\PP(W_{t,\Delta}Y_{t,\Delta}|H_{t},A_{t}=0)\Big]f_{t}(S_{t})\Big|S_{t},I_{t}=1\Bigg\}\Bigg],\label{eq: m2 in lemma EE =00003D 0 the last step}
\end{align}
Replacing $(\beta, \alpha)$ with $(\beta^{\star}, \alpha^{\star})$ in \eqref{eq: m2 in lemma EE =00003D 0 the last step},
we have $\PP\{\UGR(\beta^{\star}, \alpha^\star, \mu)\}=0$ by \eqref{eq: the model alpha and beta together}.
\end{proof}

\begin{lem}
\label{lem: m2- convergence of sup between hat mu and mu prime}  Suppose Assumptions \ref{assu: convergence of the nuisance limit}, \ref{assu: m2-Regularity conditions: support compact space}, and \ref{assu: m2- (Regularity conditions: support of data is bounded)} hold, then  
$\sup_{\theta \in \Theta}|\PP \UGR(\theta, \hat{\mu}) - \PP\UGR(\theta, \mu')| = o_p(1)$.
\end{lem}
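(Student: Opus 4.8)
The plan is to exploit the structure of $\UGR_t$: its only $\theta$-dependent component is the factor $\expit\{g_t(S_t)^T\alpha - f_t(S_t)^T\beta\}$, and this factor does not involve the nuisance function $\mu_t$. Hence it cancels in the difference $\UGR_t(\beta,\alpha,\hat\mu_t) - \UGR_t(\beta,\alpha,\mu_t')$. First I would write out this difference term by term. Collecting the $\mu$-dependent pieces $-\mu_{0t}-\tfrac{1-A_t}{1-p_t(H_t)}\{W_{t,\Delta}Y_{t,\Delta}-\mu_{0t}\}$ and subtracting, the outcome and $\expit$ terms drop out, leaving
\begin{align*}
\UGR_t(\beta,\alpha,\hat\mu_t) - \UGR_t(\beta,\alpha,\mu_t') = \omega(t)\, I_t\, (\hat\mu_{0t} - \mu_{0t}') \left\{ \frac{1-A_t}{1-p_t(H_t)} - 1 \right\} f_t(S_t),
\end{align*}
which no longer depends on $\theta$. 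Consequently the supremum over $\theta \in \Theta$ is vacuous, and the target quantity equals the absolute value of the $\theta$-free sum $\sum_{t=1}^T \PP\big[\omega(t) I_t (\hat\mu_{0t} - \mu_{0t}')\{(1-A_t)/(1-p_t(H_t)) - 1\} f_t(S_t)\big]$.

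Next I would bound this finite sum. By the triangle inequality it suffices to show that each of the $T$ summands is $o_p(1)$. For a fixed $t$, applying the Cauchy--Schwarz inequality with $\hat\mu_{0t}-\mu_{0t}'$ against the weighted factor yields the bound $\omega(t)\,\big\lVert I_t\{(1-A_t)/(1-p_t(H_t)) - 1\} f_t(S_t)\big\rVert\,\lVert\hat\mu_{0t} - \mu_{0t}'\rVert$. The weighted factor is bounded by a constant: positivity (Assumption \ref{assu: positivity}) keeps $1/(1-p_t(H_t))$ away from infinity, so that $\{(1-A_t)/(1-p_t(H_t)) - 1\}$ equals $p_t(H_t)/(1-p_t(H_t))$ when $A_t=0$ and $-1$ when $A_t=1$, both bounded; the bounded support of $O$ (Assumption \ref{assu: m2- (Regularity conditions: support of data is bounded)}) keeps $\lVert f_t(S_t)\rVert$ bounded; and $I_t$, $\omega(t)$ are bounded. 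Hence the summand is dominated by $C\,\lVert\hat\mu_{0t} - \mu_{0t}'\rVert$ for some constant $C$.

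Finally, invoking the $L_2$-consistency of the fitted nuisance function (Assumption \ref{assu: convergence of the nuisance limit}), which gives $\lVert\hat\mu_{0t} - \mu_{0t}'\rVert = o_p(1)$ for each $t$, every summand is $o_p(1)$; summing the finitely many terms over $t \in [T]$ preserves the rate, establishing the claim. This argument parallels Lemma \ref{lem: sup|PU - PU| =00003D o_p(1)} in the Simple Randomization proof but is strictly simpler, because here the $\theta$-dependence cancels entirely, so no genuine uniform control is needed and compactness of $\Theta$ (Assumption \ref{assu: m2-Regularity conditions: support compact space}) serves only to make the (now trivial) supremum well-posed. There is no serious obstacle; the one point to verify carefully is the cancellation of the $\expit$ term, which is precisely what collapses the uniform statement into a pointwise bound.
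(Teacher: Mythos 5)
Your proof is correct and follows essentially the same route as the paper: both exploit the fact that the $\theta$-dependent $\expit$ term cancels in the difference $\UGR(\theta,\hat\mu)-\UGR(\theta,\mu')$, reducing the uniform statement to a $\theta$-free quantity that is $o_p(1)$ by the $L_2$-convergence of $\hat\mu_t$ together with boundedness of the weights and features. Your write-up is in fact more explicit than the paper's (which merely collects the $\mu$-dependent terms into a function $h_t$ and asserts the bound), spelling out the exact form of the cancelled difference and the Cauchy--Schwarz step.
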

\begin{proof}
Let $h_t(\mu_t) = \omega(t)I_t\big[\mu_{0t} + \frac{1 - A_t}{1 - p_t(H_t)} \{W_{t, \Delta}Y_{t, \Delta} - \mu_{0t} \} \big] $. Then we have
\begin{align*}
	&\sup_{\theta \in \Theta}|\PP \UGR(\theta, \hat{\mu}) - \PP\UGR(\theta, \mu')| \\
	= & \PP \{ \sum_t h_t(\hat{\mu}_t) - h_t(\mu'_t) \} f_t(S_t) 
\end{align*}
where the last line is bounded by $o_p(1)$ because $\hat{\mu}_t$ converges to $\mu_t'$ (Assumption \ref{assu: convergence of the nuisance limit}) and Assumptions \ref{assu: m2-Regularity conditions: support compact space}, \ref{assu: m2- (Regularity conditions: support of data is bounded)}. This concludes the proof.
\end{proof}

\begin{lem}
\label{lem: m2- convergence of square}  Suppose Assumptions \ref{assu: m2- (Regularity conditions: support of data is bounded)} and \ref{assu: convergence of the nuisance limit}  hold, then $\lVert \UGR(\theta^\star, \hat{\mu}) -  \UGR(\theta^\star, \mu')\rVert^2 = o_p(1)$.
\end{lem}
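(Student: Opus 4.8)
The plan is to exploit the fact that when $\theta$ is held fixed at $\theta^\star$, the term $\expit\{g_t(S_t)^T\alpha^\star - f_t(S_t)^T\beta^\star\}$ in $\UGR_t$ does not involve $\mu_t$ and therefore cancels in the difference $\UGR_t(\theta^\star,\hat\mu_t) - \UGR_t(\theta^\star,\mu'_t)$. First I would write out this difference explicitly. Using the form of $\UGR_t$ in \eqref{eq: m2 - estimating equation}, a direct computation isolating the dependence on $\mu_{0t}$ gives
\begin{align*}
\UGR_t(\theta^\star,\hat\mu_t) - \UGR_t(\theta^\star,\mu'_t) = \omega(t) I_t (\hat\mu_{0t} - \mu'_{0t})\Big\{\frac{1 - A_t}{1 - p_t(H_t)} - 1\Big\} f_t(S_t),
\end{align*}
so the entire dependence on the nuisance estimate reduces to the linear factor $\hat\mu_{0t} - \mu'_{0t}$ multiplied by a bounded weight. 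This is the key simplification that makes the lemma elementary: no product of two nuisance errors appears.

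Next I would bound the squared $L_2$ norm of the sum by $T$ times the sum of the squared norms of the summands, using the elementary inequality $(\sum_{t=1}^T a_t)^2 \le T\sum_{t=1}^T a_t^2$ (the same device invoked via Lemma B2 of \citet{cheng2023efficient} in the proof of Lemma \ref{lem: bound of U(beta_star, delta_hat) and U(beta_star, delta')}). It then suffices to show that for each $t$,
\begin{align*}
\int \Big| \omega(t) I_t (\hat\mu_{0t} - \mu'_{0t})\Big\{\tfrac{1 - A_t}{1 - p_t(H_t)} - 1\Big\} f_t(S_t) \Big|^2 \, dP = o_p(1).
\end{align*}

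For this I would argue that every factor except the nuisance difference is bounded: $\omega(t)$ and $I_t$ are bounded, $f_t(S_t)$ is bounded by Assumption \ref{assu: m2- (Regularity conditions: support of data is bounded)}, and the weight $\{(1-A_t)/(1-p_t(H_t)) - 1\}$ is bounded because positivity (Assumption \ref{assu: positivity}) keeps $1 - p_t(H_t)$ uniformly away from $0$. Pulling out a constant $C$, the integral is at most $C \int (\hat\mu_{0t} - \mu'_{0t})^2\, dP = C\lVert \hat\mu_{0t} - \mu'_{0t}\rVert^2$, which is $o_p(1)$ by Assumption \ref{assu: convergence of the nuisance limit}. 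Summing the finitely many $t$ then yields the claim.

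As for the main obstacle: there is none of substance here, in contrast with the rate-double-robustness lemma where two estimation errors must be paired. The only point requiring care is verifying the uniform boundedness of the inverse-propensity weight, which is precisely where positivity enters, mirroring how boundedness of $e^{r_t}$, $m_t$, and related quantities was used in Lemma \ref{lem: bound of U(beta_star, delta_hat) and U(beta_star, delta')}.
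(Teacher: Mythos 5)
Your proof is correct and follows essentially the same route as the paper's: the $\expit$ and $W_{t,\Delta}Y_{t,\Delta}$ terms cancel at fixed $\theta^\star$, leaving the difference $\omega(t)I_t(\hat\mu_{0t}-\mu'_{0t})\{(1-A_t)/(1-p_t(H_t))-1\}f_t(S_t)$, which the paper likewise bounds termwise (via the $2T^2\max_t$ variant of your $(\sum_t a_t)^2\le T\sum_t a_t^2$ inequality) by $C\lVert\hat\mu_{0t}-\mu'_{0t}\rVert^2=o_p(1)$. Your explicit appeal to positivity for the boundedness of the inverse-propensity weight is a minor clarification of what the paper folds into its regularity/boundedness conditions, not a different argument.
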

\begin{proof}
We have
\begin{align*}
 &\lVert \UGR(\theta^\star, \hat{\mu}) -  \UGR(\theta^\star, \mu')\rVert^2 \\
&= \int | \UGR(\theta^\star, \hat{\mu}) -  \UGR(\theta^\star, \mu')|^2 dP \\
&= \int | \sum_t \UGR_t(\theta^\star, \hat{\mu}_t)  -  \sum_t \UGR_t(\theta^\star, \mu'_t)|^2 dP \\
&= 2T^2 \Big\{ \max_{1\leq t \leq T} \int | \UGR_t(\theta^\star, \hat{\mu}_t) -  \UGR_t(\theta^\star, \mu'_t)|^2 dP\Big\}.
\end{align*}
Thus, it suffices to show that $\int | \UGR_t(\theta^\star, \hat{\mu}_t) -  \UGR_t(\theta^\star, \mu'_t)|^2 dP = o_p(1)$. By expanding the equation, we have
\begin{align*}
	&\int | \UGR_t(\theta^\star, \hat{\mu}_t) -  \UGR_t(\theta^\star, \mu'_t)|^2 dP \\
	= &\int \Big| \big\{ \mu'_{0t} -\hat{\mu}_{0t} - \frac{1 - A_t}{1 - p_t(H_t)} (\mu'_{0t} - \hat{\mu}_{0t}) \big\} f_t(S_t) \Big|^2dP \\
	\leq  &C \int |\mu'_{0t} -\hat{\mu}_{0t}|^2 dP \numberthis \label{eq:m2 in the convergence of square mu1}\\
= &o_p(1) \numberthis \label{eq:m2 in the convergence of square mu2}
\end{align*}
where \eqref{eq:m2 in the convergence of square mu1} follows from Assumption \ref{assu: m2- (Regularity conditions: support of data is bounded)} and \eqref{eq:m2 in the convergence of square mu2} follows from the limit of $\mu_t'$.
\end{proof}

\subsection{\texorpdfstring{Asymptotic Normality of $\hat{\beta}^\text{GR}$}{Asymptotic Normality of beta-hat GR}}
\label{subsec:_texorpdfstring_asymptotic_normality_of_hat_beta}
\begin{thm}
Suppose Suppose Assumption 1 (Consistency, Positivity, and Sequentially
ignorability) in the main paper and Assumptions \ref{assu: convergence of the nuisance limit}, \ref{assu:m2 - Unique zero}, \ref{assu: m2 - regularity condition}, \ref{assu:m2 - donsker class}  hold. Suppose there exists $\alpha^\star$ such that $\psi_t^\star = g(S_t)^T\alpha^\star$ for $\psi_t^\star$ defined in the main paper. Let $\Phi(\beta, \alpha, \mu):=(\UGR(\beta, \alpha, \mu), Q(\alpha))$. 
Then $\hat{\beta}^\text{GR}$ is consistent and asymptotically normal: $\sqrt{n}(\hat{\beta}^\text{GR}-\beta^{\star})\xrightarrow{d}N(0,V^\text{GR})$
as $n\rightarrow\infty$. Furthermore, $V^\text{GR}$ can be consistently
estimated by the upper diagonal $p$ by $p$ block matrix of 
\[
\PP_{n} \big\{ \frac{\partial \Phi(\hat{\beta}, \hat{\alpha}, \hat{\mu}) }{\partial_{(\beta^T, \alpha^T)}} \big\}^{-1}\PP_{n}\{\Phi(\hat{\beta}, \hat{\alpha}, \hat{\mu}) \Phi(\hat{\beta}, \hat{\alpha}, \hat{\mu}) ^{T}\}\PP_{n}\big\{ \frac{\partial \Phi(\hat{\beta}, \hat{\alpha}, \hat{\mu}) }{\partial_{(\beta^T, \alpha^T)}} \big\}^{-1,T},
\]
\end{thm}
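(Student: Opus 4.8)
The plan is to treat $\hat\beta^\text{GR}$ and $\hat\alpha$ jointly as a single $Z$-estimator for $\theta=(\beta^{T},\alpha^{T})^{T}$ solving $\PP_{n}\Phi(\theta,\hat\mu)=0$, and to follow the same four-part template used for Theorem 1: consistency, a mean-value linearization, an empirical-process step, and a variance-estimator step. The structural feature that makes the General Randomization analysis cleaner than Theorem 1 is \cref{lem: m2 - EE =00003D 0}, which gives $\PP\{\UGR(\theta^{\star},\mu)\}=0$ for \emph{every} $\mu$ (not merely at $\mu=\mu^{\star}$); combined with the facts that $Q(\alpha)$ does not involve $\mu$ and that $\PP\{Q(\alpha^{\star})\}=0$ under the assumed correct specification $\psi_t^{\star}=g_t(S_t)^{T}\alpha^{\star}$, this delivers exact Neyman orthogonality with respect to the nuisance $\mu$, so that no convergence-rate condition on $\hat\mu$ is needed.

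First I would establish consistency of $\hat\theta$. For $\hat\alpha$ this is standard weighted-logistic $Z$-estimation: $Q$ is free of $\mu$, $\alpha^{\star}$ is the unique zero of $\PP\{Q(\alpha)\}$ under correct specification, and the relevant class is Glivenko--Cantelli by \cref{assu:m2 - donsker class}. For $\hat\beta$ I would argue in exact parallel to the consistency argument behind Theorem 1: combining $\PP\{\UGR(\theta^{\star},\mu')\}=0$ (\cref{lem: m2 - EE =00003D 0}), the uniform convergence $\sup_{\theta\in\Theta}|\PP\UGR(\theta,\hat\mu)-\PP\UGR(\theta,\mu')|=o_{p}(1)$ (\cref{lem: m2- convergence of sup between hat mu and mu prime}), the unique-zero \cref{assu:m2 - Unique zero}, compactness (\cref{assu: m2-Regularity conditions: support compact space}), and the Glivenko--Cantelli property, one shows $\PP\{\UGR(\hat\theta,\mu')\}\pto 0$ and hence $\hat\theta\pto\theta^{\star}$.

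Next I would linearize. Since $\PP_{n}\Phi(\hat\theta,\hat\mu)=0$, a mean-value expansion in $\theta$ yields $\sqrt{n}(\hat\theta-\theta^{\star})=-\{\PP_{n}\partial_{\theta}\Phi(\bar\theta,\hat\mu)\}^{-1}\sqrt{n}\PP_{n}\Phi(\theta^{\star},\hat\mu)$ with $\bar\theta$ between $\hat\theta$ and $\theta^{\star}$. By consistency of $\hat\theta$, convergence of the derivative (boundedness \cref{assu: m2 - (Regularity-conditions) derivative is uniformly bounded} plus the Donsker/GC property and dominated convergence), and invertibility \cref{assu: m2 - partialU is invertiable}, the bread converges to $\PP\{\partial_{\theta}\Phi(\theta^{\star},\mu')\}$; because $Q$ is free of $\beta$ this bread matrix is block lower-triangular, which is precisely what legitimizes the two-step procedure and makes the marginal variance of $\hat\beta$ the upper-left $p\times p$ block of the sandwich. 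For the score piece I would decompose
\[
\sqrt{n}\PP_{n}\Phi(\theta^{\star},\hat\mu)=\sqrt{n}(\PP_{n}-\PP)\Phi(\theta^{\star},\hat\mu)+\sqrt{n}\PP\Phi(\theta^{\star},\hat\mu).
\]
The drift term vanishes \emph{exactly}: its $Q$-block is zero since $\PP\{Q(\alpha^{\star})\}=0$ and $Q$ is $\mu$-free, while its $\UGR$-block is zero by \cref{lem: m2 - EE =00003D 0} evaluated at the realized $\hat\mu$. The empirical-process term is handled by asymptotic equicontinuity: \cref{assu:m2 - donsker class} together with the $L_2$ convergence $\lVert\UGR(\theta^{\star},\hat\mu)-\UGR(\theta^{\star},\mu')\rVert=o_{p}(1)$ (\cref{lem: m2- convergence of square}) and Lemma 19.24 of \citet{van2000asymptotic} let me replace $\hat\mu$ by $\mu'$ up to $o_{p}(1)$, after which Lindeberg--Feller gives $\sqrt{n}(\PP_{n}-\PP)\Phi(\theta^{\star},\mu')\dto N(0,\PP\{\Phi\Phi^{T}\})$. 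Slutsky and the continuous mapping theorem then yield joint asymptotic normality, and the upper-left block delivers $\sqrt{n}(\hat\beta^\text{GR}-\beta^{\star})\dto N(0,V^\text{GR})$; consistency of the sandwich estimator follows from convergence of the bread $\PP_{n}\{\partial_{\theta}\Phi(\hat\theta,\hat\mu)\}$ and meat $\PP_{n}\{\Phi(\hat\theta,\hat\mu)\Phi(\hat\theta,\hat\mu)^{T}\}$ to their population counterparts, again via Donsker/GC, boundedness, and the continuous mapping theorem.

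The main point to get right --- and the conceptual analogue of the ``hard part'' --- is the exactness of the drift cancellation in \cref{lem: m2 - EE =00003D 0}: showing that $\PP\UGR(\theta^{\star},\cdot)$ is flat in $\mu$ is what eliminates any rate requirement on $\hat\mu$, in contrast to the product-bias term $\lVert\hat r_t-r^{\star}_t\rVert\,\lVert\hat m_t-m^{\star}_t\rVert$ that forced the rate-double-robustness condition in Theorem 1. A secondary technical point is verifying the block-triangular structure of the bread matrix, since this is what lets the variance of the two-step estimator $\hat\beta^\text{GR}$ be read off correctly as the upper-left block of the joint sandwich rather than requiring a separate correction for the estimation of $\alpha$.
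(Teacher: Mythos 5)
Your proposal is correct, and it rests on exactly the same three pillars as the paper's own argument: the lemma establishing $\PP\{\UGR(\theta^\star,\mu)\}=0$ for \emph{every} $\mu$ (exact orthogonality in the nuisance), the uniform convergence $\sup_{\theta}|\PP\UGR(\theta,\hat\mu)-\PP\UGR(\theta,\mu')|=o_p(1)$, and the $L_2$ convergence $\lVert\UGR(\theta^\star,\hat\mu)-\UGR(\theta^\star,\mu')\rVert=o_p(1)$. Where you genuinely diverge is in the final assembly: the paper never writes out the $Z$-estimation argument for this theorem at all --- it verifies that these lemmas, plus the regularity and Donsker assumptions, satisfy hypotheses 5.1--5.4 of Theorem 5.1 of \citet{cheng2023efficient}, and invokes that external master theorem for stacked two-stage estimating equations to deliver both the asymptotic normality and the consistency of the sandwich variance estimator in one stroke. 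You instead re-run, self-contained, the four-step template the paper used for its Theorem 1 (consistency via unique zero plus Glivenko--Cantelli; mean-value linearization; drift cancellation plus asymptotic equicontinuity via Lemma 19.24 of \citet{van2000asymptotic} and Lindeberg--Feller; sandwich consistency via dominated convergence). The paper's route buys brevity and reuse of general machinery; your route buys transparency, making explicit the two facts the citation hides: that the bread matrix $\PP\{\partial_\theta\Phi\}$ is block triangular because $Q$ is free of $\beta$, so the upper-left $p\times p$ block of the joint sandwich is the correct variance for the two-step $\hat\beta^\mathrm{GR}$ (automatically absorbing the cost of estimating $\alpha$), and that the drift term is identically zero in $\hat\mu$, which is precisely why no rate condition on $\hat\mu$ appears here in contrast to the product-rate condition of Theorem 1. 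Two cosmetic points: the statement that the drift vanishes ``at the realized $\hat\mu$'' implicitly uses the standard convention that $\PP\Phi(\theta^\star,\hat\mu)$ integrates over a fresh observation with $\hat\mu$ held fixed, which is valid because the orthogonality lemma holds for arbitrary fixed $\mu$; and in the ordering $(\UGR,Q)\times(\beta,\alpha)$ the bread is block \emph{upper}-triangular rather than lower-triangular --- the structural claim is right, only the orientation label is off.
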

\begin{proof}
Let $\theta = (\beta, \alpha)$. By applying Theorem 5.1 of \citet{cheng2023efficient}, we have the asymptotic normality of $\hat{\beta}^\text{GR}$. Thus, it suffices to check the assumptions requ{}ired for Theorem 5.1 hold for $\UGR(\theta, \mu)$. 

Assumption 5.1 of \citet{cheng2023efficient} holds because of Lemma \ref{lem: m2 - EE =00003D 0} and Assumption \ref{assu:m2 - Unique zero}.

Assumption 5.2 (i) of \citet{cheng2023efficient} is verified by Lemma \ref{lem: m2- convergence of sup between hat mu and mu prime}; Assumption 5.2 (ii) of \citet{cheng2023efficient} is validated by  Lemma \ref{lem: m2- convergence of square}; Assumption 5.2 (iii) and 5.2(iv) of \citet{cheng2023efficient} hold by the Assumption that $\partial_{\theta}\UGR(\theta, \mu)$, $\UGR(\theta, \mu) \UGR(\theta, \mu)^T$
are each bounded by integrable function (Assumptions \ref{assu: m2 - (Regularity-conditions) derivative is uniformly bounded}) and the dominated convergence theorem (\citet{chung2001course} Result (viii)
of Chapter 3.2).

Assumption 5.3 and Assumption 5.4 of \citet{cheng2023efficient} is verified by Assumptions \ref{assu: m2 - regularity condition} and \ref{assu:m2 - donsker class}.

Therefore, the conclusion of the theorem follows.
\end{proof}

\section{Proof of Theorem 3}
\label{sec:proof_of_theorem_3}

\subsection{Assumptions}
\label{subsec:assumptions of theorem 3}

\begin{asu}
\label{assu: thm3 convergence of the nuisance limit} (Convergence of nuisance parameter)
There exists some $\alpha'$ such that $\lVert \hat{\alpha} - \alpha' \rVert = o_p(1)$. There exists some $\mu_t'$ such that $\lVert \hat{\mu}_t - \mu_t' \rVert = o_p(1)$.
\end{asu}

\begin{asu} 
\label{assu: thm3 unique zero}(Unique zero)
		Given the limit $\mu'$, there exists unique $\alpha$ such that $\PP \{Q(\alpha)^T, \UGR(\beta^\star, \alpha, \mu')^T\}$ when $\beta^\star = 0$.
\end{asu}

\begin{asu} 
\label{assu: thm3 regularity conditions} (Regularity conditions)
\begin{asulist}
	\item \label{assu:thm3 regularity condition compact space} Suppose the parameter space of $\alpha$ and $\beta$ are compact.
	\item \label{assu:thm3 regularity condition bounded support} Suppose the support of $O$ is bounded.
	\item \label{assu:thm3 regularity condition invertibel}Suppose $\PP\{ \partial_{(\beta, \alpha)} \UGR(\beta^\star, \alpha', \mu')  \}$ is invertible.
	\item \label{assu:thm3 regularity condition integrable function}$\partial_{(\beta, \alpha)} \UGR(\beta, \alpha, \mu)$ and $\UGR(\beta, \alpha, \mu)\UGR(\beta, \alpha, \mu)^T$ are uniformly bounded by integrable function.
\end{asulist}
\end{asu}

\begin{asu} 
\label{assu: thm3 donsker condition}(Donsker condition)
Suppose for each t, the estimator $\hat{\mu}_t$ take values in Donsker class.
\end{asu}

\subsection{Lemmas}
\begin{lem}
\label{lem: thm3 PPU = 0}
	When $\beta^\star = 0$, $\PP\{\UGR(\beta^\star, \alpha', \mu)\} = 0$ for any $\mu$.
\end{lem}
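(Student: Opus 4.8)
The plan is to first strip the nuisance $\mu$ out of $\PP\{\UGR(\beta^\star,\alpha',\mu)\}$, reducing it to a $\mu$-free functional, and then play the defining moment condition of $\alpha'$ against the null $\beta^\star=0$. For the reduction I would simply re-run the iterated-expectation calculation already performed in the proof of Lemma~\ref{lem: m2 - EE =00003D 0}: conditioning each summand $\UGR_t$ on $(H_t,A_t)$ and averaging over $A_t$ with weights $p_t(H_t)$ and $1-p_t(H_t)$, the inverse-probability factor $\tfrac{1-A_t}{1-p_t(H_t)}$ forces every occurrence of $\mu_{0t}$ to cancel. This yields, for \emph{any} $\mu$,
\begin{align*}
\PP\{\UGR(\beta^\star,\alpha',\mu)\} = \sumt \omega(t)\,\EE\Big[ I_t\big( \expit\{g_t(S_t)^T\alpha' - f_t(S_t)^T\beta^\star\} - \bar m_t^{(0)}(S_t)\big) f_t(S_t)\Big],
\end{align*}
where $\bar m_t^{(a)}(S_t):=\EE\{\EE(W_{t,\Delta}Y_{t,\Delta}\mid H_t,A_t=a,I_t=1)\mid S_t,I_t=1\}$; since the right-hand side is free of $\mu$, this already disposes of the ``for any $\mu$'' clause.

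Next I would substitute $\beta^\star=0$, so the first term inside the bracket collapses to $\expit\{g_t(S_t)^T\alpha'\}$, and invoke the interpretation of the null here, namely no causal excursion effect: $\cee_{\bp,\bpi;\Delta}(t;S_t)=0$ for all $t$ and $S_t$. By the identification formula \eqref{eq: observed causal excursion effect} this means $\logit \bar m_t^{(1)}(S_t)=\logit \bar m_t^{(0)}(S_t)$, hence $\bar m_t^{(0)}(S_t)=\bar m_t^{(1)}(S_t)$ pointwise. The bracket therefore equals $-\rho_t(S_t)$, where $\rho_t(S_t):=\bar m_t^{(1)}(S_t)-\expit\{g_t(S_t)^T\alpha'\}$ is exactly the residual appearing in the association estimating function $Q$.

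Finally I would connect this to $\alpha'$. By Assumption~\ref{assu: thm3 convergence of the nuisance limit} the limit $\alpha'$ solves $\PP\{Q(\alpha')\}=0$, and the same $\tfrac{A_t}{p_t(H_t)}$-reweighting (iterated expectation over $A_t$ given $H_t$, using that $A_t=1$ forces $I_t=1$) turns this into $\sumt \EE\{ I_t\,\rho_t(S_t)\,g_t(S_t)\}=0$. The remaining task is to pass from orthogonality of $\rho_t$ against $g_t(S_t)$ to orthogonality against the weighted effect features $\omega(t) f_t(S_t)$; this is where I would use the structural regularity condition that the effect-model design is nested in the association-model design, i.e.\ there is a constant matrix $C$ with $\omega(t) f_t(S_t)=C^{T} g_t(S_t)$ for every $t$ (a span/nesting requirement satisfied, for instance, whenever $g_t$ contains $f_t$ and $\omega(t)$ is constant, as in the simulations). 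Then
\begin{align*}
\PP\{\UGR(0,\alpha',\mu)\} = -\sumt \omega(t)\,\EE\{I_t\,\rho_t(S_t)\,f_t(S_t)\} = -C^{T}\sumt \EE\{I_t\,\rho_t(S_t)\,g_t(S_t)\} = 0.
\end{align*}
The hard part is precisely this last passage: because the association model $\psi_t$ is \emph{not} assumed correctly specified, $\rho_t$ is generally nonzero, so the argument cannot rely on $\expit\{g_t^T\alpha'\}=\bar m_t^{(0)}$ and must route entirely through the $Q$-moment condition. What makes the two moment conditions coincide is the conjunction of the sharp null—which realigns the $A_t=0$ target $\bar m_t^{(0)}$ with the $A_t=1$ target that $\alpha'$ is calibrated against—and the nesting of $f_t$ in $g_t$; identifying and justifying these as the minimal conditions is the crux of the proof.
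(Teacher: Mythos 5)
Your proof is correct and takes essentially the same route as the paper's: the identical iterated-expectation reduction that cancels $\mu_{0t}$ (the paper isolates the term $\sumt \omega(t) I_t \tfrac{p_t(H_t)-A_t}{1-p_t(H_t)}\mu_{0t}f_t(S_t)$ and shows it has mean zero), followed by the identical appeal to the moment condition $\PP\{Q(\alpha')\}=0$ characterizing $\alpha'$. The only genuine difference is one of rigor, and it favors you. The paper's final step asserts that
\begin{align*}
\sumt \PP\Big(\omega(t) I_t \big[ \expit\{ g_t(S_t)^T \alpha' \} - \PP(W_{t,\Delta} Y_{t, \Delta}\mid H_t, A_t = 0) \big] f_t(S_t) \Big)= 0
\end{align*}
``follows from'' the $\alpha'$-moment condition
\begin{align*}
\sumt \PP\Big( I_t \big[ \PP(W_{t,\Delta} Y_{t,\Delta}\mid H_t, A_t = 1) - \expit\{ g_t(S_t)^T \alpha' \} \big] g_t(S_t) \Big)= 0
\end{align*}
without further comment. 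That inference silently swaps the $A_t=0$ conditional mean for the $A_t=1$ one---which is exactly your invocation of the pointwise null $\cee_{\bp,\bpi;\Delta}(t;S_t)=0$ for all $t,S_t$ (equivalently $\bar m_t^{(1)}=\bar m_t^{(0)}$), a condition stronger than the literal hypothesis $\beta^\star=0$ defined through the weighted $L_2$ projection---and it silently replaces $g_t(S_t)$ by $\omega(t) f_t(S_t)$---which is exactly your nesting condition $\omega(t)f_t(S_t)=C^{T}g_t(S_t)$. So the two conditions you flag as the crux are not optional additions: they are precisely what the paper's own proof needs and does not state. Your write-up is the same argument, made airtight.
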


\begin{proof}
Based on the estimating function for $\alpha$, Assumptions \ref{assu: thm3 convergence of the nuisance limit} (convergence of the nuisance function) and \ref{assu: thm3 unique zero} (unique zero), the limit of $\hat{\alpha}$, $\alpha'$ satisfies 
\begin{align*}
	 \PP \Bigg( \sum_t  \frac{A_t}{p_t(H_t)} I_t \Big[W_{t,\Delta} Y_{t,\Delta} -  \expit\{ g_t(S_t)^T \alpha' \} \Big] g_t(S_t) \Bigg) = 0.
\end{align*}
By the law of total expectation, we have
\begin{align}
\PP \Big(\sum_t I_t \Big[\PP(W_{t,\Delta} Y_{t,\Delta}|H_t, A_t = 1) - \expit\{ g_t(S_t)^T \alpha' \} \Big]  g_t(S_t) \Big) = 0. \numberthis \label{eq: GR robustness PPU = 0 equality from EE alpha}
\end{align}

When $\beta = \beta^\star$, the estimating function $\UGR(\beta^\star, \alpha, \mu_t)$ becomes
\begin{align}
	& \sum_t \omega(t) I_t \Big[ \expit\{ g_t(S_t)^T \alpha \} - \mu_{0t} - \frac{1 - A_t}{1 - p_t(H_t)} \{W_{t,\Delta} Y_{t, \Delta} - \mu_{0t}\} \Big] f_t(S_t) \nonumber \\ 
	= & \sum_t \omega(t) I_t \Big[ \expit\{ g_t(S_t)^T \alpha \} - \frac{1 - A_t}{1 - p_t(H_t)} W_{t,\Delta} Y_{t, \Delta} \Big] f_t(S_t) \label{eq: GR robusness PPU=0 alpha part} \\ 
	 & + \sum_t \omega(t) I_t \frac{p_t(H_t) - A_t}{1 - p_t(H_t)} \mu_{0t}  f_t(S_t) \label{eq: GR robusness PPU=0 mu part}
\end{align}

To show that $\PP\{\UGR(\beta^\star, \alpha', \mu)\} = 0$ for any $\mu$, we will first show the expectation of \eqref{eq: GR robusness PPU=0 mu part} equals 0 for any $\mu$. Using iterated expectation, we have for any $\mu$,
\begin{align*}
	&\PP  \Big\{  \sum_t \omega(t) I_t \frac{p_t(H_t) - A_t}{1 - p_t(H_t)} \mu_{0t}  f_t(S_t) \Big\} \\ 
	=& \PP  \Big\{  \sum_t \omega(t) I_t \frac{p_t(H_t) - \PP(A_t|H_t)}{1 - p_t(H_t)} \mu_{0t}  f_t(S_t) \Big\} \\ 
	= & 0 
\end{align*}
Thus, for any $\mu$, we have
\begin{align*}
	&\PP\{\UGR(\beta^\star, \alpha', \mu)\} \\ 
	=& \sum_t \PP \Big( \omega(t) I_t \Big[ \expit\{ g_t(S_t)^T \alpha' \} - \frac{1 - A_t}{1 - p_t(H_t)} W_{t,\Delta} Y_{t, \Delta} \Big] f_t(S_t) \Big) \\ 
	= & \sum_t \PP \Big(\omega(t) I_t \expit\{ g_t(S_t)^T \alpha' \} f_t(S_t) p_t(H_t)  \\ 
	&+ \omega(t) I_t \Big[ \expit\{ g_t(S_t)^T \alpha' \} - \frac{1}{1 - p_t(H_t)} \PP(W_{t,\Delta} Y_{t, \Delta} |H_t, A_t = 0) \Big] f_t(S_t) \{1 - p_t(H_t)\}\Big) \\ 
	= & \sum_t \PP \Big(\omega(t) I_t \Big[ \expit\{ g_t(S_t)^T \alpha' \} - \PP(W_{t,\Delta} Y_{t, \Delta}|H_t, A_t = 0) \Big] f_t(S_t) \Big)\\ 
	= & 0
\end{align*}
where the last equality follows from \eqref{eq: GR robustness PPU = 0 equality from EE alpha}. This concludes the proof
\end{proof}

{}

\begin{lem}
\label{lem: thm3 convergence of sup between hat mu and mu prime}
Suppose Assumptions \ref{assu:thm3 regularity condition compact space}, \ref{assu:thm3 regularity condition bounded support} hold, then $\sup|\PP \{\UGR(\beta, \alpha, \hat{\mu} \} - \PP \{\UGR(\beta, \alpha, \mu' \}| = o_p(1)$.
\end{lem}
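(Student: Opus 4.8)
The plan is to exploit that the nuisance $\mu$ enters $\UGR(\beta,\alpha,\mu)$ only linearly, and only through a summand that carries no dependence on $(\beta,\alpha)$. Indeed, the sole $(\beta,\alpha)$-dependent piece of $\UGR$ is the term $\expit\{g_t(S_t)^T\alpha - f_t(S_t)^T\beta\}$, and this term is free of $\mu$. Consequently, when forming the difference $\UGR(\beta,\alpha,\hat\mu) - \UGR(\beta,\alpha,\mu')$ at a common $(\beta,\alpha)$, the logistic term cancels exactly, and what remains does not depend on $(\beta,\alpha)$ at all. This makes the supremum over the parameter space vacuous, so the claimed uniform convergence reduces to an ordinary (non-uniform) convergence statement.

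Concretely, I would first collect the $\mu$-dependent summand, writing $h_t(\mu_t) := \omega(t) I_t\{\mu_{0t} + \tfrac{1-A_t}{1-p_t(H_t)}(W_{t,\Delta}Y_{t,\Delta}-\mu_{0t})\} f_t(S_t)$, so that $\UGR(\beta,\alpha,\hat\mu) - \UGR(\beta,\alpha,\mu') = -\sumt\{h_t(\hat\mu_t) - h_t(\mu'_t)\}$, which simplifies to $\sumt \omega(t) I_t\{-1 + \tfrac{1-A_t}{1-p_t(H_t)}\}(\hat\mu_{0t}-\mu'_{0t})f_t(S_t)$. Since this is free of $(\beta,\alpha)$,
\[
\sup_{(\beta,\alpha)}\bigl|\PP\{\UGR(\beta,\alpha,\hat\mu)\} - \PP\{\UGR(\beta,\alpha,\mu')\}\bigr| = \Bigl|\PP\Bigl[\sumt \omega(t) I_t\bigl\{-1+\tfrac{1-A_t}{1-p_t(H_t)}\bigr\}(\hat\mu_{0t}-\mu'_{0t})f_t(S_t)\Bigr]\Bigr|.
\]
Under the bounded-support condition (Assumption \ref{assu:thm3 regularity condition bounded support}) together with positivity, the factors $\omega(t)$, $I_t$, $f_t(S_t)$ and $\{-1+(1-A_t)/(1-p_t(H_t))\}$ are all uniformly bounded; applying the triangle inequality across the finitely many $t\in[T]$ and then Cauchy--Schwarz yields a bound of the form $C\sumt \lVert\hat\mu_{0t}-\mu'_{0t}\rVert$ for some constant $C$. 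Invoking the convergence of the fitted nuisance to its limit (Assumption \ref{assu: thm3 convergence of the nuisance limit}), each $\lVert\hat\mu_{0t}-\mu'_{0t}\rVert = o_p(1)$, and finiteness of $T$ gives $o_p(1)$ overall.

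I do not expect a substantive obstacle: the argument mirrors Lemma \ref{lem: m2- convergence of sup between hat mu and mu prime} from the Theorem 2 proof almost verbatim. The only point needing care is to first notice the cancellation of the logistic term, which is what reduces the supremum over a compact parameter space to a trivial one; so compactness (Assumption \ref{assu:thm3 regularity condition compact space}) is not actually needed to handle the $\sup$ and enters, if at all, only to secure the boundedness used in the Cauchy--Schwarz step. I would also flag that, as in the Theorem 2 analogue, the convergence Assumption \ref{assu: thm3 convergence of the nuisance limit} is implicitly required even though it is not relisted among the lemma's stated hypotheses.
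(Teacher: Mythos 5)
Your proof is correct and takes essentially the same route as the paper, whose own proof of this lemma simply defers to the Theorem 2 analogue (Lemma \ref{lem: m2- convergence of sup between hat mu and mu prime}): the same collection of the $\mu$-dependent summand $h_t(\mu_t)$, the same cancellation of the $\expit$ term making the supremum over $(\beta,\alpha)$ vacuous, and the same appeal to convergence of $\hat\mu_t$ to $\mu_t'$ plus boundedness. Your flag is also accurate: the paper's statement of this lemma omits Assumption \ref{assu: thm3 convergence of the nuisance limit} from its hypotheses even though the proof requires it, an omission the corresponding Theorem 2 lemma does not make.
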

\begin{proof}
The proof is similar to the proof in Lemma \ref{lem: m2- convergence of sup between hat mu and mu prime} by setting $\theta = (\beta, \alpha)$.
\end{proof}

\begin{lem}
\label{lem: thm3 convergence of square}
Suppose Assumptions \ref{assu: thm3 convergence of the nuisance limit}, \ref{assu:thm3 regularity condition compact space}, \ref{assu:thm3 regularity condition bounded support} hold, then $\lVert \UGR(\beta^\star, \alpha', \hat{\mu}) -  \UGR(\beta^\star, \alpha', \mu')\rVert^2 = o_p(1)$
\end{lem}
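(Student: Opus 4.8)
The plan is to mirror the argument of Lemma~\ref{lem: m2- convergence of square}, taking advantage of the fact that both $\beta = \beta^\star$ and $\alpha = \alpha'$ are held fixed across the two arguments of $\UGR$, so that only the nuisance $\mu$ differs. First I would expand the squared $L_2$ norm of the sum over $t$ and reduce to a per-$t$ bound via the inequality $\lVert \sum_{t=1}^T a_t \rVert^2 \le 2T^2 \max_{1 \le t \le T} \lVert a_t \rVert^2$ (the same Lemma~B2 of \citet{cheng2023efficient} invoked earlier). It then suffices to show $\int |\UGR_t(\beta^\star, \alpha', \hat\mu_t) - \UGR_t(\beta^\star, \alpha', \mu'_t)|^2 \, dP = o_p(1)$ for each fixed $t$.

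The crucial simplification is that, since $\beta^\star$ and $\alpha'$ are identical in both terms, the $\expit\{g_t(S_t)^T\alpha' - f_t(S_t)^T\beta^\star\}$ contribution cancels in the difference, leaving only the $\mu_{0t}$ dependence:
\begin{align*}
\UGR_t(\beta^\star, \alpha', \hat\mu_t) - \UGR_t(\beta^\star, \alpha', \mu'_t) = \omega(t) I_t \Big\{ (\mu'_{0t} - \hat\mu_{0t}) - \frac{1 - A_t}{1 - p_t(H_t)}(\mu'_{0t} - \hat\mu_{0t}) \Big\} f_t(S_t).
\end{align*}
This is exactly the integrand that appeared in the proof of Lemma~\ref{lem: m2- convergence of square}, so the remaining steps carry over verbatim.

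From here the argument is routine. The weight $\omega(t)$ and the feature $f_t(S_t)$ are bounded by Assumption~\ref{assu:thm3 regularity condition bounded support}, and the factor $1 - (1-A_t)/\{1 - p_t(H_t)\}$ is bounded because positivity (Assumption~\ref{assu: positivity}) keeps $1 - p_t(H_t)$ away from zero. Pulling these out as a constant $C$ yields $\int |\UGR_t(\beta^\star, \alpha', \hat\mu_t) - \UGR_t(\beta^\star, \alpha', \mu'_t)|^2 \, dP \le C \int |\mu'_{0t} - \hat\mu_{0t}|^2 \, dP$, and the right-hand side is $o_p(1)$ by the $L_2$-convergence $\lVert \hat\mu_t - \mu'_t \rVert = o_p(1)$ from Assumption~\ref{assu: thm3 convergence of the nuisance limit}, which completes the proof.

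I do not expect a genuine obstacle here: the only point demanding care is verifying that the $\expit$ term really cancels, so that the difference carries no dependence on the particular values of $\beta^\star$ or $\alpha'$ and collapses entirely onto $\hat\mu_{0t} - \mu'_{0t}$. Note in particular that this lemma does \emph{not} require $\beta^\star = 0$, unlike the companion Lemma~\ref{lem: thm3 PPU = 0}; the cancellation is a structural feature of fixing both $\beta$ and $\alpha$. Once that cancellation is observed, the conclusion follows directly from bounded support, positivity, and nuisance consistency, and the proof is essentially identical to Lemma~\ref{lem: m2- convergence of square}.
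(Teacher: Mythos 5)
Your proof is correct and follows essentially the same route as the paper, which simply invokes Lemma~\ref{lem: m2- convergence of square} with $\theta^\star = (\beta^\star, \alpha')$; you have spelled out exactly that argument, including the key observation that the $\expit$ term cancels so the difference depends only on $\hat\mu_{0t} - \mu'_{0t}$. Your explicit appeal to positivity to bound $(1-A_t)/\{1 - p_t(H_t)\}$ is a slightly more careful accounting of the constant $C$ than the paper provides, but it is the same proof.
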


\begin{proof}
The proof is similar to the proof in Lemma \ref{lem: m2- convergence of square} by setting $\theta^\star = (\beta^\star, \alpha')$.
\end{proof}

\begin{thm}
 Suppose Assumption 1 (consistency, positivity, and sequentially ignorability) in the main paper and Assumptions \ref{assu: thm3 convergence of the nuisance limit}, \ref{assu: thm3 unique zero}, \ref{assu: thm3 regularity conditions}, and \ref{assu: thm3 donsker condition} hold. Suppose the estimator $\hat{\alpha}$ solves the estimating equation $\PP_n \{Q(\alpha)\} = 0$. When $\beta^\star = 0$, $\hat{\beta}^\mathrm{GR}$ is consistent and asymptotically normal: $\sqrt{n}(\hat{\beta}^\mathrm{GR}-\beta^{\star})\xrightarrow{d}N(0,V^\mathrm{GR})$ as $n\rightarrow\infty$. Furthermore, $V^\text{GR}$ can be consistently
estimated by the upper diagonal $p$ by $p$ block matrix of 
\[
\PP_{n} \big\{ \frac{\partial \Phi(\hat{\beta}, \hat{\alpha}, \hat{\mu}) }{\partial_{(\beta^T, \alpha^T)}} \big\}^{-1}\PP_{n}\{\Phi(\hat{\beta}, \hat{\alpha}, \hat{\mu}) \Phi(\hat{\beta}, \hat{\alpha}, \hat{\mu}) ^{T}\}\PP_{n}\big\{ \frac{\partial \Phi(\hat{\beta}, \hat{\alpha}, \hat{\mu}) }{\partial_{(\beta^T, \alpha^T)}} \big\}^{-1,T},
\]
where $\Phi(\hat{\beta}, \hat{\alpha}, \hat{\mu}) = (\UGR(\hat{\beta}, \hat{\alpha}, \hat{\mu})^T, Q(\hat{\alpha})^T)^T$.
\end{thm}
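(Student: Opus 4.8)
The plan is to recognize that $\hat\beta^\mathrm{GR}$ is the $\beta$-coordinate of the joint $Z$-estimator $\hat\theta=(\hat\beta,\hat\alpha)$ that solves the stacked empirical equation $\PP_n\Phi(\beta,\alpha,\hat\mu)=0$ with $\Phi(\beta,\alpha,\mu)=(\UGR(\beta,\alpha,\mu)^T,Q(\alpha)^T)^T$, and to read its limit law off the general ``$Z$-estimation with estimated nuisance'' machinery of \citet{cheng2023efficient} (their Theorem 5.1), exactly as in the proof of Theorem \ref{thm: CAN of beta GR}. The only structural change relative to that proof is the identity of the probability limit of $\hat\alpha$: here $\alpha$ is no longer anchored to a correctly specified association model, so its limit $\alpha'$ is defined purely as the population solution of $\PP\{Q(\alpha)\}=0$, which exists and is unique by Assumptions \ref{assu: thm3 convergence of the nuisance limit} and \ref{assu: thm3 unique zero}, and the target joint parameter is $(\beta^\star,\alpha')=(0,\alpha')$. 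I would first record that $(0,\alpha')$ is a zero of the population stacked map: the $Q$-block vanishes at $\alpha'$ by construction, and the $\UGR$-block vanishes by Lemma \ref{lem: thm3 PPU = 0}.

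The crux of the argument, and the step I expect to be the main obstacle, is Lemma \ref{lem: thm3 PPU = 0}: that $\PP\{\UGR(0,\alpha',\mu)\}=0$ for every $\mu$ even though $\psi_t^\star\neq g_t(S_t)^T\alpha'$ in general. Two features make this work. First, because the MRT randomization probability $p_t(H_t)$ is known, the inverse-probability weight $(1-A_t)/\{1-p_t(H_t)\}$ has conditional mean one given $H_t$, which forces the $\mu$-dependent terms of $\UGR$ to cancel in expectation irrespective of the limit $\mu'$; this is the Neyman-orthogonality of $\UGR$ with respect to $\mu$ and holds for any $\beta$. Second, after this cancellation the residual reduces to a moment of $\expit\{g_t(S_t)^T\alpha'\}$ against $f_t(S_t)$: setting $\beta=0$ collapses the model's predicted control-arm mean $\expit\{g_t(S_t)^T\alpha-f_t(S_t)^T\beta\}$ to the treated-arm association model $\expit\{g_t(S_t)^T\alpha'\}$, which the defining moment condition of $\alpha'$ forces to match the treated-arm conditional mean, and under the null these agree with the control-arm mean, so the residual is annihilated. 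This is precisely where the null $\beta^\star=0$ is essential and where misspecification of the association model is tolerated; I would therefore present Lemma \ref{lem: thm3 PPU = 0} first and treat it as the conceptual heart of the theorem.

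With mean-zero in hand, the remaining work is a routine verification of the hypotheses of \citet{cheng2023efficient}, Theorem 5.1, mirroring the lemmas used for Theorem \ref{thm: CAN of beta GR}. I would check: the uniform-in-$\theta$ negligibility of replacing $\mu'$ by $\hat\mu$ in the population estimating function, supplied by Lemma \ref{lem: thm3 convergence of sup between hat mu and mu prime}; the $L_2$ stochastic-equicontinuity bound $\lVert\UGR(0,\alpha',\hat\mu)-\UGR(0,\alpha',\mu')\rVert^2=o_p(1)$, supplied by Lemma \ref{lem: thm3 convergence of square}; compactness of the parameter space and boundedness of the support (Assumptions \ref{assu:thm3 regularity condition compact space} and \ref{assu:thm3 regularity condition bounded support}); invertibility of the Jacobian $\PP\{\partial_{(\beta,\alpha)}\UGR(0,\alpha',\mu')\}$ (Assumption \ref{assu:thm3 regularity condition invertibel}); integrable envelopes for $\partial_{(\beta,\alpha)}\UGR$ and for the outer product $\UGR(\beta,\alpha,\mu)\UGR(\beta,\alpha,\mu)^T$ (Assumption \ref{assu:thm3 regularity condition integrable function}), which together with the dominated convergence theorem deliver convergence of the empirical Jacobian and meat matrices; and the Donsker condition on $\hat\mu$ (Assumption \ref{assu: thm3 donsker condition}).

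Invoking Theorem 5.1 of \citet{cheng2023efficient} then yields $\sqrt{n}\{\hat\theta-(0,\alpha')\}\dto N(0,\Sigma)$ with $\Sigma$ of the sandwich form built from $\Phi$. Reading off the upper-left $p\times p$ block gives the stated law $\sqrt{n}(\hat\beta^\mathrm{GR}-\beta^\star)\dto N(0,V^\mathrm{GR})$, and replacing the population Jacobian and meat by their empirical counterparts evaluated at $(\hat\beta,\hat\alpha,\hat\mu)$, justified by the same Glivenko--Cantelli and dominated-convergence arguments, gives consistency of the plug-in variance estimator in \eqref{eq:vcov-betaGR}. Crucially, the uncertainty from estimating $\alpha$ is accounted for automatically because $\alpha$ is carried inside the stacked score $\Phi$, so no separate correction term is needed; the robustness to a misspecified association model is entirely localized in the mean-zero step of Lemma \ref{lem: thm3 PPU = 0}.
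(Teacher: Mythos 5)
Your proposal follows essentially the same route as the paper's own proof: its conceptual core is the same key lemma that $\PP\{\UGR(0,\alpha',\mu)\}=0$ for arbitrary $\mu$, established exactly as you describe by using the known randomization probability to annihilate the $\mu$-dependent terms and then invoking the moment condition defining $\alpha'$ together with the null to kill the remaining term, followed by verifying the conditions of Theorem 5.1 of \citet{cheng2023efficient} through the same uniform-convergence, $L_2$-convergence, envelope/dominated-convergence, and Donsker arguments. The paper likewise stacks $\UGR$ and $Q$ into $\Phi$ so that the sampling variability of $\hat{\alpha}$ is automatically absorbed into the sandwich variance, so there is no substantive difference between your argument and the paper's.
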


\begin{proof}
By applying Theorem 5.1 of \citet{cheng2023efficient}, we have the asymptotic normality of $\hat{\beta}^\mathrm{GR}$. Thus, we will focus on verifying the assumptions of Theorem 5.1.

Assumption 5.1 holds because of Lemma \ref{lem: thm3 PPU = 0}.

Assumption 5.2 (i) and Assumption 5.2 (ii) of \citet{cheng2023efficient} are verified by Lemma \ref{lem: thm3 convergence of square} and Lemma \ref{lem: thm3 convergence of sup between hat mu and mu prime} respectively; Assumption 5.2 (iii) and 5.2 (iv) hold because of Assumption \ref{assu:thm3 regularity condition integrable function} and the dominated convergence theorem (\citet{chung2001course} Result (viii) of Chapter 3.2).

Assumption 5.3 holds because of Assumption \ref{assu: thm3 regularity conditions}.

Assumprion 5.4 holds because $\mu_t(\cdot)$ is from Donsker class, $\UGR(\beta, \alpha, \mu)$ is a polynomial of $\mu_t$ and thus a Donsker class (by Lipschitz transformation).
\end{proof}

\section{Extension to Other Link Functions}
\label{sec:extension_to_other_link_functions}
We will show that Theorem 2 also holds under generalized link functions. We will first state the necessary assumptions and show lemmas.

\begin{asu}
\label{assu: generalized-convergence of the nuisance limit} (Convergence of nuisance parameter)
There exists some $\mu_t'$ such that $\lVert \hat{\mu}_t - \mu_t' \rVert = o_p(1)$.
\end{asu}

\begin{asu}
\label{assu: generalized-m2 - Unique zero}(Unique zero) Given the limit $\mu'$, $\PP\{U^\text{GR-Generalized}(\beta, \alpha, \mu_t')\}$ has a unique zero.
\end{asu}
\begin{asu} (Regularity conditions)
	\label{assu: generalized-m2 - regularity condition}
	\begin{asulist}
% 	\item \label{assu:m2  - Regularity conditions on probability of Y_=00007Bt,=00005CDelta=00007D)} There exist positive
% constants $C_{1},C_{2}$ such that $\PP(Y_{t,\Delta}=1|H_{t},A_{t})$
% is bounded between them.
	\item \label{assu: generalized-m2-Regularity conditions: support compact space} Suppose the parameter sapce of $\beta$ and $\alpha$ are
compact.
	\item \label{assu: generalized-m2- (Regularity conditions: support of data is bounded)} Suppose the support of $O$ is bounded.
	\item \label{assu: generalized-m2 - partialU is invertiable}
Suppose $\PP\{\partial_{(\beta, \alpha)} U^\text{GR-Generalized}(\beta^{\star}, \alpha^{\star}, \mu')\}$ is invertible.
	\item \label{assu: generalized-m2 - (Regularity-conditions) derivative is uniformly bounded} $\partial_{(\beta, \alpha)} U^\text{GR-Generalized}(\beta,\alpha, \mu)$ and $U^\text{GR-Generalized}(\beta,\alpha, \mu) U^\text{GR-Generalized}(\beta,\alpha, \mu)^{T}$
are uniformly bounded by integrable function. 
	\end{asulist}
\end{asu}
\begin{asu}
\label{assu: generalized-m2 - donsker class}(Donsker condition). Suppose for each t, the estimator $\hat{\mu}_t$ take values in Donsker class.
\end{asu}

\begin{lem}
	\label{lem: generalized-PPU=0}
	$\PP\{U^\text{GR-Generalized}(\beta^\star, \alpha^\star, \mu)\} = 0$ for any $\mu$.
\end{lem}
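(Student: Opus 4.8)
The plan is to mirror the proof of Lemma~\ref{lem: m2 - EE =00003D 0}, the logit special case, replacing the quantity $\expit\{g_t(S_t)^T\alpha - f_t(S_t)^T\beta\}$ everywhere by its generalized counterpart $h^{-1}\big(h[l^{-1}\{g_t(S_t)^T\alpha\}] - f_t(S_t)^T\beta\big)$. Since the logit case corresponds to $h = l = \logit$, for which the composition $h^{-1}(h(l^{-1}(\cdot)))$ telescopes to the identity, this is a genuine generalization and each step of the earlier proof carries over once the two link functions are handled.

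First I would record the reconstruction identity implied by the definitions. By the association model, $l[\EE\{\EE(W_{t, \Delta} Y_{t, \Delta}\mid H_t, A_t=1)\mid S_t, I_t=1\}] = g_t(S_t)^T\alpha^\star$, so invertibility of $l$ gives $\EE\{\EE(W_{t, \Delta} Y_{t, \Delta}\mid H_t, A_t=1)\mid S_t, I_t=1\} = l^{-1}\{g_t(S_t)^T\alpha^\star\}$. Substituting this into the generalized CEE \eqref{eq: generalized CEE} together with the model $\cee_{\bp, \bpi; \Delta}(t; S_t) = f_t(S_t)^T\beta^\star$ yields
\begin{align*}
  h\big[\EE\{\EE(W_{t, \Delta} Y_{t, \Delta}\mid H_t, A_t=0)\mid S_t, I_t=1\}\big] = h[l^{-1}\{g_t(S_t)^T\alpha^\star\}] - f_t(S_t)^T\beta^\star,
\end{align*}
and applying $h^{-1}$ produces the key pointwise identity
\begin{align*}
  h^{-1}\big(h[l^{-1}\{g_t(S_t)^T\alpha^\star\}] - f_t(S_t)^T\beta^\star\big) = \EE\{\EE(W_{t, \Delta} Y_{t, \Delta}\mid H_t, A_t=0)\mid S_t, I_t=1\},
\end{align*}
which is the analog of the first display in the proof of Lemma~\ref{lem: m2 - EE =00003D 0}.

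Next I would compute $\PP\{U^\text{GR-Generalized}(\beta, \alpha, \mu)\}$ by iterated expectation, conditioning on $(H_t, A_t)$ and splitting on $A_t \in \{0,1\}$ exactly as in the logit proof. Because the factor $\tfrac{1-A_t}{1-p_t(H_t)}$ contributes $0$ when $A_t=1$ and $\tfrac{1}{1-p_t(H_t)}$ when $A_t=0$, averaging over $A_t$ given $H_t$ collapses the two $\mu_{0t}$ terms and the weighted-outcome term into $-\EE(W_{t, \Delta} Y_{t, \Delta}\mid H_t, A_t=0)$, with all dependence on $\mu$ cancelling; this is precisely why the conclusion holds for any $\mu$. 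The conditional expectation given $H_t$ thus reduces to $h^{-1}\big(h[l^{-1}\{g_t(S_t)^T\alpha\}] - f_t(S_t)^T\beta\big) - \EE(W_{t, \Delta} Y_{t, \Delta}\mid H_t, A_t=0)$. Taking the outer expectation conditional on $S_t, I_t=1$, evaluating at $(\beta^\star, \alpha^\star)$, and invoking the reconstruction identity shows each summand equals
\begin{align*}
  \omega(t)\,\EE\Big[\big\{h^{-1}\big(h[l^{-1}\{g_t(S_t)^T\alpha^\star\}] - f_t(S_t)^T\beta^\star\big) - \EE(W_{t, \Delta} Y_{t, \Delta}\mid H_t, A_t=0)\big\}\,\Big|\,S_t, I_t=1\Big] f_t(S_t) = 0,
\end{align*}
completing the proof.

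The main obstacle is the first step: correctly composing the two distinct link functions $h$ (for the CEE) and $l$ (for the association model) and verifying that the reconstruction identity holds on the probability scale. In the logit case $h = l$, so the compositions collapse trivially, but with general $h \neq l$ one must track $h^{-1}(h(l^{-1}(\cdot)))$ carefully and rely on the strict monotonicity, continuous differentiability, and invertibility of both links so that every inversion is well-defined and the substitution of $l^{-1}\{g_t(S_t)^T\alpha^\star\}$ into the $h$-scale CEE is legitimate. Everything after the reconstruction identity is the same $A_t$-averaging algebra as in Lemma~\ref{lem: m2 - EE =00003D 0} and introduces no new ideas.
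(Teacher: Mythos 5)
Your proposal is correct and follows essentially the same route as the paper's proof: iterated expectation conditioning on $(H_t, A_t)$, splitting on the treatment value so that the $\mu_{0t}$ terms cancel (which is why the claim holds for arbitrary $\mu$), and then invoking the model identity at $(\beta^\star, \alpha^\star)$ to kill the remaining term. The only difference is cosmetic: you derive the key $h$/$l$ reconstruction identity pointwise and up front, whereas the paper states it as an expectation identity and invokes it at the end—your pointwise version is, if anything, the more careful rendering of the same step.
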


\begin{proof}
	For any $\mu$, we have
	\begin{align*}
	&\PP\{U^\text{GR-Generalized}(\beta^\star, \alpha^\star, \mu)\} \\ 
= & \sum_{t=1}^{T}\PP\Big[\PP\big\{U^\text{GR-Generalized}_{t}(\beta^\star,\alpha^\star,\mu_{t})|H_{t,}A_{t}=1\big\} p_{t}(H_t) \\ 
 &\ \ \ \ \ +\PP\big\{U^\text{GR-Generalized}_{t}(\beta^\star, \alpha^\star,\mu_{t})|H_{t,}A_{t}=0\big\} \{1-p_{t}(H_t)\} \Big] \\ 
=&  \sum_{t=1}^{T}\PP \Bigg[ \omega(t) I_{t} \Bigg\{ h^{-1} \bigg( h\big[l^{-1}\{g(S_t)^T\alpha^\star \} \big]  - f_t(S_t)^T\beta^\star \bigg) - \mu_{0t} \Bigg\}f_t(S_t) p_t(H_t) \\ 
&  \ \ \ \ \ + \omega(t)I_{t} \Bigg\{ h^{-1} \bigg( h\big[l^{-1}\{g(S_t)^T\alpha^\star \} \big]  - f_t(S_t)^T\beta^\star \bigg) - \mu_{0t} \\ 
& \quad \quad - \frac{1}{1-p_t(H_t)} \{\PP(W_{t,\Delta}Y_{t,\Delta}|H_{t},A_{t}=0) - \mu_{0t}\} \Bigg\}f_t(S_t) \{1 - p_t(H_t)\} \Bigg] \\ 
= & \sum_{t=1}^{T}\PP \Bigg[ \omega(t) I_{t} \Bigg\{ h^{-1} \bigg( h\big[l^{-1}\{g(S_t)^T\alpha^\star \} \big]  - f_t(S_t)^T\beta^\star \bigg) - \PP(W_{t,\Delta}Y_{t,\Delta}|H_{t},A_{t}=0) \Bigg\} f_t(S_t) \Bigg] \\
= &0
	\end{align*}
where last line of quality follows from 
\begin{align*}
	\PP \Bigg\{ \sum_{t = 1}^T \omega(t) \Bigg( h\big[l^{-1}\{g(S_t)^T\alpha^\star \} \big] - f_t(S_t)^T\beta^\star - h[\PP\{\PP(W_{t,\Delta}Y_{t,\Delta}|H_t, A_t = 0)\}|S_t, I_t = 1] \Bigg) \Bigg\}= 0 
\end{align*}
\end{proof}

\begin{lem}
\label{lem: generalized-m2- convergence of sup between hat mu and mu prime}  Suppose Assumptions \ref{assu: generalized-convergence of the nuisance limit}, \ref{assu: generalized-m2-Regularity conditions: support compact space}, and \ref{assu: generalized-m2- (Regularity conditions: support of data is bounded)} hold, then  
$\sup_{(\beta, \alpha)}|\PP$ $\{U^\text{GR-Generalized}(\beta, \alpha, \hat{\mu})\}$  $- \PP \{U^\text{GR-Generalized}(\beta, \alpha, \mu')\}| = o_p(1)$.
\end{lem}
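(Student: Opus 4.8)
The plan is to follow the structure of the proof of Lemma~\ref{lem: m2- convergence of sup between hat mu and mu prime}, observing that the only role of the generalized links $h$ and $l$ is to reshape a term that cancels in the difference. The key structural fact I would isolate first is that the dependence of $U^\text{GR-Generalized}(\beta, \alpha, \mu)$ on $(\beta, \alpha)$ enters solely through the quantity $h^{-1}\big(h[l^{-1}\{g(S_t)^T\alpha\}] - f_t(S_t)^T\beta\big)$, which does not involve $\mu$. Therefore, when I form $U^\text{GR-Generalized}(\beta, \alpha, \hat{\mu}) - U^\text{GR-Generalized}(\beta, \alpha, \mu')$, this term cancels exactly, and the remaining expression depends on neither $\beta$ nor $\alpha$.

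Concretely, a direct computation gives
\begin{align*}
U^\text{GR-Generalized}(\beta, \alpha, \hat{\mu}) - U^\text{GR-Generalized}(\beta, \alpha, \mu') = \sumt \omega(t) I_t (\mu'_{0t} - \hat{\mu}_{0t})\Big(1 - \frac{1-A_t}{1-p_t(H_t)}\Big) f_t(S_t),
\end{align*}
whose right-hand side is free of $(\beta,\alpha)$. As a consequence, the supremum over $(\beta,\alpha)$ is vacuous---the integrand is constant in those arguments---so $\sup_{(\beta,\alpha)}|\PP\{\,\cdot\,\}|$ equals the absolute value of $\PP$ applied to this single residual. Note in particular that the compactness condition (Assumption~\ref{assu: generalized-m2-Regularity conditions: support compact space}) plays no essential role here precisely because the supremum collapses.

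It then remains to bound $\big|\PP\{\,\cdot\,\}\big|$. By positivity (Assumption~\ref{assu: positivity}), $1 - p_t(H_t)$ is bounded away from zero, so $|1 - (1-A_t)/(1-p_t(H_t))|$ is uniformly bounded; together with the bounded support of $O$ (Assumption~\ref{assu: generalized-m2- (Regularity conditions: support of data is bounded)}) this makes the coefficient of $(\mu'_{0t} - \hat{\mu}_{0t})$ bounded by some constant $C$. Applying the triangle inequality over the finitely many $t \in [T]$ and then the Cauchy--Schwarz inequality yields
\begin{align*}
\Big|\PP\{U^\text{GR-Generalized}(\beta, \alpha, \hat{\mu}) - U^\text{GR-Generalized}(\beta, \alpha, \mu')\}\Big| \leq C \sumt \lVert \hat{\mu}_{0t} - \mu'_{0t}\rVert = o_p(1),
\end{align*}
where the final equality uses Assumption~\ref{assu: generalized-convergence of the nuisance limit}. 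The main point I want to emphasize is that there is no genuine obstacle beyond the logit case: the links $h$ and $l$ live entirely inside the $(\beta,\alpha)$-dependent term that cancels, so they never enter the bound, and the only care needed is to confirm that positivity keeps the inverse-probability multiplier bounded.
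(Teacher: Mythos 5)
Your proof is correct and follows essentially the same route as the paper's: both arguments rest on the observation that the $(\beta,\alpha)$-dependent term $h^{-1}\big(h[l^{-1}\{g(S_t)^T\alpha\}] - f_t(S_t)^T\beta\big)$ is free of $\mu$ and cancels in the difference, leaving a residual controlled by $\lVert \hat{\mu}_{0t} - \mu'_{0t}\rVert = o_p(1)$ under bounded support. Your write-up is simply more explicit than the paper's (displaying the cancellation, noting that the supremum collapses, and flagging that positivity is what bounds the inverse-probability factor), which are accurate refinements rather than a different approach.
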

\begin{proof}
Let $h_t(\mu_t) = \omega(t)I_t\big[\mu_{0t} + \frac{1 - A_t}{1 - p_t(H_t)} \{W_{t, \Delta}Y_{t, \Delta} - \mu_{0t} \} \big] $. Then we have
\begin{align*}
	&\sup_{(\beta, \alpha)}|\PP U^\text{GR-Generalized}(\beta, \alpha, \hat{\mu}) - \PP U^\text{GR-Generalized}(\beta, \alpha, \mu')| \\
	= & \PP \{ \sum_t h_t(\hat{\mu}_t) - h_t(\mu'_t) \} f_t(S_t) 
\end{align*}
where the last line is bounded by $o_p(1)$ because $\hat{\mu}_t$ converges to $\mu_t'$ (Assumption \ref{assu: generalized-convergence of the nuisance limit}) and Assumptions \ref{assu: generalized-m2-Regularity conditions: support compact space} and \ref{assu: generalized-m2- (Regularity conditions: support of data is bounded)}. This concludes the proof.
\end{proof}

\begin{lem}
\label{lem: generalized-m2- convergence of square}  Suppose Assumptions \ref{assu: generalized-convergence of the nuisance limit}, \ref{assu: generalized-m2-Regularity conditions: support compact space}, and \ref{assu: generalized-m2- (Regularity conditions: support of data is bounded)} hold, then $\lVert U^\text{GR-Generalized}$ $(\beta^\star, \alpha^\star, \hat{\mu})$ -  $U^\text{GR-Generalized}(\beta^\star, \alpha^\star, \mu')\rVert^2 = o_p(1)$.
\end{lem}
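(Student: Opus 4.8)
The plan is to mirror the proof of Lemma~\ref{lem: m2- convergence of square}, exploiting the fact that the link-function component of $U^\text{GR-Generalized}$ is free of $\mu$ and therefore plays no role in the difference.

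First I would observe that in the difference $U^\text{GR-Generalized}(\beta^\star, \alpha^\star, \hat{\mu}) - U^\text{GR-Generalized}(\beta^\star, \alpha^\star, \mu')$, the term $h^{-1}\big(h[l^{-1}\{g(S_t)^T\alpha^\star\}] - f_t(S_t)^T\beta^\star\big)$ and the observed-data term $W_{t,\Delta}Y_{t,\Delta}$ do not depend on $\mu$ and hence cancel, since $\beta^\star$ and $\alpha^\star$ are held fixed. Writing $U^\text{GR-Generalized}_t$ for the $t$-th summand, the per-summand difference reduces to
\[
U^\text{GR-Generalized}_t(\beta^\star,\alpha^\star,\hat{\mu}_t) - U^\text{GR-Generalized}_t(\beta^\star,\alpha^\star,\mu'_t) = \omega(t) I_t \Big\{ (\mu'_{0t} - \hat{\mu}_{0t}) - \tfrac{1-A_t}{1-p_t(H_t)}(\mu'_{0t}-\hat{\mu}_{0t}) \Big\} f_t(S_t),
\]
which is precisely the expression handled in Lemma~\ref{lem: m2- convergence of square}.

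Next I would decompose the squared $L_2$ norm across the $T$ summands, using (as in that lemma, via Lemma~B2 of \citet{cheng2023efficient}) the bound
\[
\lVert U^\text{GR-Generalized}(\beta^\star,\alpha^\star,\hat{\mu}) - U^\text{GR-Generalized}(\beta^\star,\alpha^\star,\mu')\rVert^2 \le 2T^2 \max_{1\le t\le T}\int \big|U^\text{GR-Generalized}_t(\beta^\star,\alpha^\star,\hat{\mu}_t)-U^\text{GR-Generalized}_t(\beta^\star,\alpha^\star,\mu'_t)\big|^2\, dP,
\]
so that it suffices to control each summand. For each $t$, positivity (Assumption~\ref{assu: positivity}), which bounds the weight $\tfrac{1-A_t}{1-p_t(H_t)}$, together with the bounded support of $O$ (Assumption~\ref{assu: generalized-m2- (Regularity conditions: support of data is bounded)}), gives
\[
\int \big|U^\text{GR-Generalized}_t(\beta^\star,\alpha^\star,\hat{\mu}_t)-U^\text{GR-Generalized}_t(\beta^\star,\alpha^\star,\mu'_t)\big|^2\, dP \le C\int |\hat{\mu}_{0t}-\mu'_{0t}|^2\, dP
\]
for a constant $C$, and the right-hand side is $o_p(1)$ by the assumed $L_2$ convergence $\lVert\hat{\mu}_t-\mu'_t\rVert = o_p(1)$ (Assumption~\ref{assu: generalized-convergence of the nuisance limit}).

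I do not expect a substantive obstacle: the argument is structurally identical to Lemma~\ref{lem: m2- convergence of square}, and the generalized links $h$ and $l$ never enter the bound because the only $\beta,\alpha$-dependent term is $\mu$-free and cancels in the difference. The single point requiring care is verifying that the inverse-probability weight $\tfrac{1-A_t}{1-p_t(H_t)}$ is uniformly bounded, which follows directly from positivity.
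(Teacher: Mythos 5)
Your proposal is correct and follows essentially the same route as the paper's proof: the same $2T^2\max_t$ decomposition of the squared $L_2$ norm across summands, the same observation that the $\mu$-free terms (including the generalized-link term) cancel so the per-summand difference reduces to $\omega(t) I_t \big\{(\mu'_{0t}-\hat{\mu}_{0t}) - \tfrac{1-A_t}{1-p_t(H_t)}(\mu'_{0t}-\hat{\mu}_{0t})\big\} f_t(S_t)$, and the same bound $C\int|\mu'_{0t}-\hat{\mu}_{0t}|^2\,dP = o_p(1)$. Your explicit appeal to positivity to uniformly bound the weight $\tfrac{1-A_t}{1-p_t(H_t)}$ is a slightly more precise justification than the paper's citation of the regularity conditions, but the argument is the same.
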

\begin{proof}
We have
\begin{align*}
 &\lVert  U^\text{GR-Generalized}(\beta^\star, \alpha^\star, \hat{\mu}) -   U^\text{GR-Generalized}(\beta^\star, \alpha^\star, \mu')\rVert^2 \\
&= 2T^2 \Big\{ \max_{1\leq t \leq T} \int |  U^\text{GR-Generalized}_t(\beta^\star, \alpha^\star, \hat{\mu}_t) -   U^\text{GR-Generalized}_t(\beta^\star, \alpha^\star, \mu_t)|^2 dP\Big\}.
\end{align*}
Thus, it suffices to show that $\int | U^\text{GR-Generalized}_t(\beta^\star, \alpha^\star, \hat{\mu}_t) -   U^\text{GR-Generalized}_t(\beta^\star, \alpha^\star, \mu_t)|^2 dP = o_p(1)$. By expanding the equation, we have
\begin{align*}
	&\int | U^\text{GR-Generalized}_t(\beta^\star, \alpha^\star, \hat{\mu}_t) -   U^\text{GR-Generalized}_t(\beta^\star, \alpha^\star, \mu_t)|^2 dP \\
	= &\int \Big| \big\{ \mu'_{0t} -\hat{\mu}_{0t} - \frac{1 - A_t}{1 - p_t(H_t)} (\mu'_{0t} - \hat{\mu}_{0t}) \big\} f_t(S_t) \Big|^2dP \\
	\leq  &C \int |\mu'_{0t} -\hat{\mu}_{0t}|^2 dP \numberthis \label{eq: generalized-m2 in the convergence of square mu1}\\
= &o_p(1) \numberthis \label{eq: generalized-m2 in the convergence of square mu2}
\end{align*}
where \eqref{eq: generalized-m2 in the convergence of square mu1} follows from Assumption \ref{assu: generalized-m2-Regularity conditions: support compact space} and \eqref{eq: generalized-m2 in the convergence of square mu2} follows from the limit of $\mu_t'$ (Assumption \ref{assu: generalized-convergence of the nuisance limit}).
\end{proof}

\begin{thm}
Suppose Suppose Assumption 1 (Consistency, Positivity, and Sequentially
ignorability) in the main paper and Assumptions \ref{assu: generalized-convergence of the nuisance limit}, \ref{assu: generalized-m2 - Unique zero}, \ref{assu: generalized-m2 - regularity condition}, \ref{assu: generalized-m2 - donsker class}  hold. Suppose there exists $\alpha^\star$ such that $\psi_t^\star = g(S_t)^T\alpha^\star$ for $\psi_t^\star$ defined in the main paper. 
Then $\hat{\beta}^\text{GR-Generalized}$ is consistent and asymptotically normal: $\sqrt{n}(\hat{\beta}^\text{GR-Generalized}-\beta^{\star})\xrightarrow{d}N(0,V^\text{GR})$
as $n\rightarrow\infty$. Furthermore, $V^\text{GR}$ can be consistently
estimated by the upper diagonal $p$ by $p$ block matrix of 
\[
\PP_{n} \big\{ \frac{\partial \Phi(\hat{\beta}, \hat{\alpha}, \hat{\mu}) }{\partial_{(\beta^T, \alpha^T)}} \big\}^{-1}\PP_{n}\{\Phi(\hat{\beta}, \hat{\alpha}, \hat{\mu}) \Phi(\hat{\beta}, \hat{\alpha}, \hat{\mu}) ^{T}\}\PP_{n}\big\{ \frac{\partial \Phi(\hat{\beta}, \hat{\alpha}, \hat{\mu}) }{\partial_{(\beta^T, \alpha^T)}} \big\}^{-1,T},
\]
where $\Phi(\beta, \alpha, \mu):=(U^\text{GR-Generalized} (\beta, \alpha, \mu), Q^\text{Generalized}(\alpha))$. 
\end{thm}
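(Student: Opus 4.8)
The plan is to follow the same strategy used to establish Theorem~2: invoke the master consistency-and-asymptotic-normality result for two-stage Z-estimators with data-adaptive nuisance estimation (Theorem~5.1 of \citet{cheng2023efficient}), applied to the stacked estimating function $\Phi(\beta,\alpha,\mu) = (U^\text{GR-Generalized}(\beta,\alpha,\mu)^T, Q^\text{Generalized}(\alpha)^T)^T$ with finite-dimensional target $(\beta^T,\alpha^T)^T$ and infinite-dimensional nuisance $\mu=\{\mu_t\}$. Since the outcome-mean nuisance $\mu$ enters only through the efficiency-improving projection term, and in exactly the same functional form as in the logit case, the whole argument reduces to checking that replacing the logit links by the general invertible, monotone links $h$ and $l$ does not disturb any of the structural properties that Theorem~5.1 relies on.

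Concretely, I would verify the hypotheses of Theorem~5.1 one at a time, mirroring the proof of Theorem~2. First, the population unbiasedness-at-the-truth condition is supplied by Lemma~\ref{lem: generalized-PPU=0}, which gives $\PP\{U^\text{GR-Generalized}(\beta^\star,\alpha^\star,\mu)\}=0$ for every $\mu$, together with the uniqueness Assumption~\ref{assu: generalized-m2 - Unique zero}. Next, the two stochastic-equicontinuity and consistency conditions follow from Lemma~\ref{lem: generalized-m2- convergence of sup between hat mu and mu prime} and Lemma~\ref{lem: generalized-m2- convergence of square}, respectively; the integrability conditions follow from the uniform integrable bounds in Assumption~\ref{assu: generalized-m2 - (Regularity-conditions) derivative is uniformly bounded} together with the dominated convergence theorem; and the invertibility and Donsker requirements follow from Assumption~\ref{assu: generalized-m2 - regularity condition} and Assumption~\ref{assu: generalized-m2 - donsker class}, using that $U^\text{GR-Generalized}$ is a smooth (in fact Lipschitz, under the compactness and boundedness assumptions) transformation of $\hat\mu_t$ and hence preserves the Donsker property. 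The sandwich variance estimator is then shown to be consistent by combining convergence of the empirical ``bread'' and ``meat'' matrices with the continuous mapping theorem, exactly as in the Theorem~2 argument.

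The step requiring the most care---and the crux of why the extension works---is Lemma~\ref{lem: generalized-PPU=0}. The argument iterates expectations over $A_t$ given $H_t$ to cancel the two $\mu_{0t}$ terms; this part is link-agnostic because $\mu$ appears in the generalized estimating function in precisely the same form as in the logit version, leaving a residual proportional to $h^{-1}\big(h[l^{-1}\{g_t(S_t)^T\alpha^\star\}]-f_t(S_t)^T\beta^\star\big) - \PP(W_{t,\Delta}Y_{t,\Delta}\mid H_t, A_t=0)$. The key is then to use the association-model identity $l^{-1}\{g_t(S_t)^T\alpha^\star\}=\EE\{\EE(W_{t,\Delta}Y_{t,\Delta}\mid H_t, A_t=1)\mid S_t, I_t=1\}$ together with the defining best-linear-projection property of $(\beta^\star,\alpha^\star)$ on the generalized-link scale to show this residual integrates to zero. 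This is exactly where invertibility and monotonicity of $h$ and $l$ are invoked, and it is the only place where the algebra of the general links (rather than the logit specifically) must be checked; everything downstream is inherited verbatim. Hence the main obstacle is the bookkeeping of the nested link compositions in the unbiasedness lemma, after which the consistency, asymptotic normality, and variance consistency conclusions carry over unchanged from the Theorem~2 machinery.
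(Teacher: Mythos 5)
Your proposal is correct and takes essentially the same route as the paper: the paper likewise proves this theorem by invoking Theorem 5.1 of \citet{cheng2023efficient} for the stacked estimating function $\Phi$, verifying its hypotheses via the unbiasedness lemma (Lemma \ref{lem: generalized-PPU=0}), the two convergence lemmas (Lemmas \ref{lem: generalized-m2- convergence of sup between hat mu and mu prime} and \ref{lem: generalized-m2- convergence of square}), the integrability/invertibility regularity conditions with dominated convergence, and the Donsker assumption, exactly as in the proof of Theorem 2. Your identification of the unbiasedness lemma as the only place where the algebra of the general links $h$ and $l$ must be checked---cancelling the $\mu_{0t}$ terms by iterated expectation over $A_t$ and then invoking the generalized-link association model together with the defining property of $(\beta^\star,\alpha^\star)$---matches the paper's Lemma \ref{lem: generalized-PPU=0} step for step.
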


\begin{proof}
	The proof is similar to the proof in Section \ref{subsec:_texorpdfstring_asymptotic_normality_of_hat_beta} by applying Theorem 5.1 of \citep{cheng2023efficient}, thus is omitted here.
\end{proof}

\section{Additional Application Results: OR, RR, and RD Estimates}
\label{supp-sec:additional_application_results_or_rr_and_rd_estimates}
For comparison, we present the results of risk difference (RD) and risk ratio (RR) for the Drink Less Data. The proximal outcome is still near-term engagement, defined as an indicator of whether the participant opens the app in the hour following the notification (8 p.m. to 9 p.m.). Similar to the data analysis presented in the main paper, we conducted one marginal analysis ($S_t = \emptyset$) and three moderation analyses, with the moderator being the decision time index, an indicator for app use before 8 p.m. on that day, and an indicator for receiving a notification on the previous day as a proxy for treatment burden, respectively. For each analysis, we used the \texttt{wcls} function for estimating RD and \texttt{emee} functions estimating RR from the \texttt{MRTAnalysis} R package \citep{MRTAnalysis}. 

Table \ref{tab: RR and RD of application results} reports the OR, RR, and RD estimates. For the three moderation analysis, the moderated OR, RR, and RD are qualitatively in the same direction.

\begin{table}
\centering
\begin{tabular}{c|cccc}
\toprule
	Moderator & OR (SR) & OR (GR) & RR & RD \\ 
	\hline
	None & 3.89	 & 3.89	 & 3.53 & 0.09 \\ 
	\hline
	\multirow{2}{*}{Decision point} & 4.57& 4.56 & 3.90 & 0.15\\
	& 0.99 & 0.99 & 0.99 & 0.00 \\
		\hline 
	\multirow{2}{*}{Before 8 p.m.} & 4.04	& 4.02	&3.73	&0.07 \\ 
	& 1.02	&1.04	& 0.89	& 0.11 \\ 
		\hline
	\multirow{2}{*}{Habituation} & 3.91	& 3.93	& 3.59	 & 0.09 \\ 
	& 0.99	&0.98	&0.97	&0.00 \\ 
	\bottomrule
\end{tabular}
\caption{The estimates from OR, RR, and RD measures using Drink Less data. SR and GR associated with OR representst the Simple Randomization and General Randomization estimators.}
\label{tab: RR and RD of application results}
\end{table}

% \bibliographystyle{apalike}
% \bibliography{ref}

% \end{document}

\bibliographystyle{agsm}

\bibliography{mhealth-ref}

@book{wood2017generalized,
  title={Generalized additive models: an introduction with R},
  author={Wood, Simon N},
  year={2017},
  publisher={chapman and hall/CRC}
}

@article{rubin1974estimating,
  title={Estimating causal effects of treatments in randomized and nonrandomized studies.},
  author={Rubin, Donald B},
  journal={Journal of educational Psychology},
  volume={66},
  number={5},
  pages={688},
  year={1974},
  publisher={American Psychological Association}
}

@article{robins1986new,
  title={A new approach to causal inference in mortality studies with a sustained exposure period—application to control of the healthy worker survivor effect},
  author={Robins, James},
  journal={Mathematical modelling},
  volume={7},
  number={9-12},
  pages={1393--1512},
  year={1986},
  publisher={Elsevier}
}

@book{bkrw1993,
  title={Efficient and adaptive estimation for semiparametric models},
  author={Bickel, Peter J and Klaassen, Chris A and Ritov, Y and Klaassen, J and Wellner, Jon A},
  year={1993},
  publisher={Springer New York}
}

@article{boruvka2018,
  title={Assessing time-varying causal effect moderation in mobile health},
  author={Boruvka, Audrey and Almirall, Daniel and Witkiewitz, Katie and Murphy, Susan A},
  journal={Journal of the American Statistical Association},
  volume={113},
  number={523},
  pages={1112--1121},
  year={2018},
  publisher={Taylor \& Francis}
}

@article{klasnja2015,
  title={Microrandomized trials: An experimental design for developing just-in-time adaptive interventions.},
  author={Klasnja, Predrag and Hekler, Eric B and Shiffman, Saul and Boruvka, Audrey and Almirall, Daniel and Tewari, Ambuj and Murphy, Susan A},
  journal={Health Psychology},
  volume={34},
  number={S},
  pages={1220},
  year={2015},
  publisher={US: American Psychological Association}
}

@article{richardson2017,
  title={On modeling and estimation for the relative risk and risk difference},
  author={Richardson, Thomas S and Robins, James M and Wang, Linbo},
  journal={Journal of the American Statistical Association},
  volume={112},
  number={519},
  pages={1121--1130},
  year={2017},
  publisher={Taylor \& Francis}
}

@article{liao2016sample,
  title={Sample size calculations for micro-randomized trials in mHealth},
  author={Liao, Peng and Klasnja, Predrag and Tewari, Ambuj and Murphy, Susan A},
  journal={Statistics in medicine},
  volume={35},
  number={12},
  pages={1944--1971},
  year={2016},
  publisher={Wiley Online Library}
}

@article{dempsey2020stratified,
  title={The stratified micro-randomized trial design: sample size considerations for testing nested causal effects of time-varying treatments},
  author={Dempsey, Walter and Liao, Peng and Kumar, Santosh and Murphy, Susan A},
  journal={The annals of applied statistics},
  volume={14},
  number={2},
  pages={661},
  year={2020},
  publisher={NIH Public Access}
}

@article{qian2021estimating,
  title={Estimating time-varying causal excursion effects in mobile health with binary outcomes},
  author={Qian, Tianchen and Yoo, Hyesun and Klasnja, Predrag and Almirall, Daniel and Murphy, Susan A},
  journal={Biometrika},
  volume={108},
  number={3},
  pages={507--527},
  year={2021},
  publisher={Oxford University Press}
}

@book{van2000asymptotic,
  title={Asymptotic statistics},
  author={Van der Vaart, Aad W},
  volume={3},
  year={2000},
  publisher={Cambridge university press}
}

@article{bell2020notifications,
  title={Notifications to improve engagement with an alcohol reduction app: protocol for a micro-randomized trial},
  author={Bell, Lauren and Garnett, Claire and Qian, Tianchen and Perski, Olga and Potts, Henry WW and Williamson, Elizabeth},
  journal={JMIR research protocols},
  volume={9},
  number={8},
  pages={e18690},
  year={2020},
  publisher={JMIR Publications Inc., Toronto, Canada}
}

@article{nahum2021translating,
  title={Translating strategies for promoting engagement in mobile health: A proof-of-concept microrandomized trial.},
  author={Nahum-Shani, Inbal and Rabbi, Mashfiqui and Yap, Jamie and Philyaw-Kotov, Meredith L and Klasnja, Predrag and Bonar, Erin E and Cunningham, Rebecca M and Murphy, Susan A and Walton, Maureen A},
  journal={Health Psychology},
  volume={40},
  number={12},
  pages={974},
  year={2021},
  publisher={American Psychological Association}
}

@article{garnett2019development,
  title={The development of Drink Less: an alcohol reduction smartphone app for excessive drinkers},
  author={Garnett, Claire and Crane, David and West, Robert and Brown, Jamie and Michie, Susan},
  journal={Translational behavioral medicine},
  volume={9},
  number={2},
  pages={296--307},
  year={2019},
  publisher={Oxford University Press US}
}

@article{cohn2023sample,
  title={Sample size considerations for micro-randomized trials with binary proximal outcomes.},
  author={Cohn, Eric R and Qian, Tianchen and Murphy, Susan A},
  journal={Statistics in Medicine},
  year={2023}
}

@Manual{MRTAnalysis,
  title = {MRTAnalysis: Primary and Secondary Analyses for Micro-Randomized Trials},
  author = {Tianchen Qian and Shaolin Xiang and Zhaoxi Cheng},
  year = {2023},
  note = {R package version 0.1.1},
  url = {https://CRAN.R-project.org/package=MRTAnalysis},
}

@book{ruppert2003semiparametric,
  title={Semiparametric regression},
  author={Ruppert, David and Wand, Matt P and Carroll, Raymond J},
  year={2003},
  publisher={Cambridge university press}
}

@book{fan1996local,
  title={Local polynomial modelling and its applications: monographs on statistics and applied probability 66},
  author={Fan, Jianqing and Gijbels, Irene},
  volume={66},
  year={1996},
  publisher={CRC Press}
}

@book{chung2001course,
  title={A course in probability theory},
  author={Chung, Kai Lai},
  year={2001},
  publisher={Academic press}
}

@article{bell2023notifications,
  title={How Notifications Affect Engagement With a Behavior Change App: Results From a Micro-Randomized Trial},
  author={Bell, Lauren and Garnett, Claire and Bao, Yihan and Cheng, Zhaoxi and Qian, Tianchen and Perski, Olga and Potts, Henry WW and Williamson, Elizabeth},
  journal={JMIR mHealth and uHealth},
  volume={11},
  number={1},
  pages={e38342},
  year={2023},
  publisher={JMIR Publications Inc., Toronto, Canada}
}

@article{lok2024estimating,
  title={How estimating nuisance parameters can reduce the variance (with consistent variance estimation)},
  author={Lok, Judith J},
  journal={Statistics in Medicine},
  volume={43},
  number={23},
  pages={4456--4480},
  year={2024},
  publisher={Wiley Online Library}
}

@article{shi2023meta,
  title={A Meta-Learning Method for Estimation of Causal Excursion Effects to Assess Time-Varying Moderation},
  author={Shi, Jieru and Dempsey, Walter},
  journal={arXiv preprint arXiv:2306.16297},
  year={2023}
}

@article{vansteelandt2003causal,
  title={Causal inference with generalized structural mean models},
  author={Vansteelandt, Stijn and Goetghebeur, Els},
  journal={Journal of the Royal Statistical Society Series B: Statistical Methodology},
  volume={65},
  number={4},
  pages={817--835},
  year={2003},
  publisher={Oxford University Press}
}

@article{guo2021discussion,
  title={Discussion of ‘Estimating time-varying causal excursion effects in mobile health with binary outcomes’},
  author={Guo, F Richard and Richardson, Thomas S and Robins, James M},
  journal={Biometrika},
  volume={108},
  number={3},
  pages={541--550},
  year={2021},
  publisher={Oxford University Press}
}

@article{liu2024incorporating,
  title={Incorporating nonparametric methods for estimating causal excursion effects in mobile health with zero-inflated count outcomes},
  author={Liu, Xueqing and Qian, Tianchen and Bell, Lauren and Chakraborty, Bibhas},
  journal={Biometrics},
  volume={80},
  number={2},
  pages={ujae054},
  year={2024},
  publisher={Oxford University Press}
}

@article{tan2019doubly,
  title={On doubly robust estimation for logistic partially linear models},
  author={Tan, Zhiqiang},
  journal={Statistics \& Probability Letters},
  volume={155},
  pages={108577},
  year={2019},
  publisher={Elsevier}
}

@article{murphy2003optimal,
  title={Optimal dynamic treatment regimes},
  author={Murphy, Susan A},
  journal={Journal of the Royal Statistical Society Series B: Statistical Methodology},
  volume={65},
  number={2},
  pages={331--355},
  year={2003},
  publisher={Oxford University Press}
}

@article{klasnja2019efficacy,
  title={Efficacy of contextually tailored suggestions for physical activity: a micro-randomized optimization trial of HeartSteps},
  author={Klasnja, Predrag and Smith, Shawna and Seewald, Nicholas J and Lee, Andy and Hall, Kelly and Luers, Brook and Hekler, Eric B and Murphy, Susan A},
  journal={Annals of Behavioral Medicine},
  volume={53},
  number={6},
  pages={573--582},
  year={2019},
  publisher={Oxford University Press US}
}

@article{wang2016functional,
  title={Functional data analysis},
  author={Wang, Jane-Ling and Chiou, Jeng-Min and M{\"u}ller, Hans-Georg},
  journal={Annual Review of Statistics and its application},
  volume={3},
  number={1},
  pages={257--295},
  year={2016},
  publisher={Annual Reviews}
}

@article{smucler2019unifying,
  title={A unifying approach for doubly-robust $\ell_1$ regularized estimation of causal contrasts},
  author={Smucler, Ezequiel and Rotnitzky, Andrea and Robins, James M},
  journal={arXiv preprint arXiv:1904.03737},
  year={2019}
}

@article{cheng2023efficient,
  title={Efficient and globally robust causal excursion effect estimation},
  author={Cheng, Zhaoxi and Bell, Lauren and Qian, Tianchen},
  journal={arXiv preprint arXiv:2311.16529},
  year={2023}
}

@article{tchetgen2010doubly,
  title={On doubly robust estimation in a semiparametric odds ratio model},
  author={Tchetgen Tchetgen, Eric J and Robins, James M and Rotnitzky, Andrea},
  journal={Biometrika},
  volume={97},
  number={1},
  pages={171--180},
  year={2010},
  publisher={Oxford University Press}
}

@article{vansteelandt2014structural,
  title={Structural nested models and G-estimation: the partially realized promise},
  author={Vansteelandt, Stijn and Joffe, Marshall},
  year={2014},
    volume = {29},
number = {4},
pages = {707--731},
  journal = {Statistical Science}
}

@article{necamp2020assessing,
  title={Assessing real-time moderation for developing adaptive mobile health interventions for medical interns: micro-randomized trial},
  author={NeCamp, Timothy and Sen, Srijan and Frank, Elena and Walton, Maureen A and Ionides, Edward L and Fang, Yu and Tewari, Ambuj and Wu, Zhenke},
  journal={Journal of medical Internet research},
  volume={22},
  number={3},
  pages={e15033},
  year={2020},
  publisher={JMIR Publications Toronto, Canada}
}

@article{tchetgen2013closed,
  title={On a closed-form doubly robust estimator of the adjusted odds ratio for a binary exposure},
  author={Tchetgen Tchetgen, Eric J},
  journal={American journal of epidemiology},
  volume={177},
  number={11},
  pages={1314--1316},
  year={2013},
  publisher={Oxford University Press}
}

@article{hurley2025fluctuations,
  title={Fluctuations in mHealth engagement following receipt of goal-discrepant feedback messages},
  author={Hurley, Lex and O’Shea, Nisha G and Nezami, Brooke T and Valle, Carmina G and Tate, Deborah F},
  journal={Digital Health},
  volume={11},
  pages={20552076241312569},
  year={2025},
  publisher={SAGE Publications Sage UK: London, England}
}

@inproceedings{trella2025deployed,
  title={A deployed online reinforcement learning algorithm in an oral health clinical trial},
  author={Trella, Anna L and Zhang, Kelly W and Jajal, Hinal and Nahum-Shani, Inbal and Shetty, Vivek and Doshi-Velez, Finale and Murphy, Susan A},
  booktitle={Proceedings of the AAAI Conference on Artificial Intelligence},
  volume={39},
  number={28},
  pages={28792--28800},
  year={2025}
}

@article{brumback2008effect,
  title={On effect-measure modification: relationships among changes in the relative risk, odds ratio, and risk difference},
  author={Brumback, Babette and Berg, Arthur},
  journal={Statistics in Medicine},
  volume={27},
  number={18},
  pages={3453--3465},
  year={2008},
  publisher={Wiley Online Library}
}

@article{sheps1958shall,
  title={Shall we count the living or the dead?},
  author={Sheps, Mindel C},
  journal={New England Journal of Medicine},
  volume={259},
  number={25},
  pages={1210--1214},
  year={1958},
  publisher={Mass Medical Soc}
}

@article{dukes2018note,
  title={A note on G-estimation of causal risk ratios},
  author={Dukes, Oliver and Vansteelandt, Stijn},
  journal={American journal of epidemiology},
  volume={187},
  number={5},
  pages={1079--1084},
  year={2018},
  publisher={Oxford University Press}
}

@article{shi2025incorporating,
  title={Incorporating auxiliary variables to improve the efficiency of time-varying treatment effect estimation},
  author={Shi, Jieru and Wu, Zhenke and Dempsey, Walter},
  journal={Journal of the American Statistical Association},
  number={just-accepted},
  pages={1--21},
  year={2025},
  publisher={Taylor \& Francis}
}

@article{shi2023assessing,
  title={Assessing time-varying causal effect moderation in the presence of cluster-level treatment effect heterogeneity and interference},
  author={Shi, Jieru and Wu, Zhenke and Dempsey, Walter},
  journal={Biometrika},
  volume={110},
  number={3},
  pages={645--662},
  year={2023},
  publisher={Oxford University Press}
}

@article{yu2024doubly,
  title={Doubly Robust Estimation of Causal Excursion Effects in Micro-Randomized Trials with Missing Longitudinal Outcomes},
  author={Yu, Jiaxin and Qian, Tianchen},
  journal={arXiv preprint arXiv:2411.10620},
  year={2024}
}

@article{bao2025estimating,
  title={Estimating causal effects for binary outcomes using per-decision inverse probability weighting},
  author={Bao, Yihan and Bell, Lauren and Williamson, Elizabeth and Garnett, Claire and Qian, Tianchen},
  journal={Biostatistics},
  volume={26},
  number={1},
  pages={kxae025},
  year={2025},
  publisher={Oxford University Press}
}

@article{qian2025distal,
  title={Distal Causal Excursion Effects: Modeling Long-Term Effects of Time-Varying Treatments in Micro-Randomized Trials},
  author={Qian, Tianchen},
  journal={arXiv preprint arXiv:2502.13500},
  year={2025}
}

\end{document}